 \def\ps@pprintTitle{%
 \let\@oddhead\@empty
 \let\@evenhead\@empty
 \def\@oddfoot{\centerline{\thepage}}%
 \let\@evenfoot\@oddfoot}
\journal{SIAM J. Financial Mathematics}
\newtheorem{theorem}{Theorem}
\newtheorem{proposition}[theorem]{Proposition}
\newdefinition{remark}{Remark}
\newdefinition{assumption}{Assumption}
\newproof{proof}{Proof}
\newproof{pot}{Proof of Theorem \ref{thm2}}
\newcommand{\PP}{{\mathbb P}}
\newcommand{\QQ}{{\mathbb Q}}
\newcommand{\RRR}{{\mathbb R}}
\newcommand{\EE}{{\mathbb E}}
\newcommand{\FF}{{\mathcal F}}
\newcommand{\LL}{{\mathcal L}}
\newcommand{\DD}{{\mathcal D}}
\newcommand{\BB}{{\mathcal B}}
\newcommand{\GG}{{\mathcal G}}
\newcommand{\bkappa}{{\boldsymbol\kappa}}
\newcommand{\bp}{{\boldsymbol p}}
\newcommand{\bW}{{\boldsymbol W}}
\newcommand{\mH}{{\mathcal H}}
\newcommand{\bzero}{{\boldsymbol{0}}}
\newcommand{\bQ}{{\boldsymbol Q}}
\newcommand{\bmQ}{\boldsymbol {\mathcal Q}}
\newcommand{\bnu}{{\boldsymbol\nu}}
\newcommand{\calV}{\mathcal V}
\newcommand{\brho}{{\boldsymbol\rho}}
\newcommand{\bbq}{{\boldsymbol q}}
\newcommand{\by}{{\boldsymbol p}}
\newcommand{\hX}{{\widehat{X}}}
\newcommand{\hY}{{\widehat{Y}}}
\newcommand{\hZ}{{\widehat{Z}}}
\newcommand{\hmu}{{\widehat{\mu}}}
\newcommand{\hbeta}{{\widehat{\beta}}}
\newcommand{\tX}{{\widetilde{X}}}
\newcommand{\tY}{{\widetilde{Y}}}
\newcommand{\tZ}{{\widetilde{Z}}}
\newcommand{\mcX}{{\mathcal X}}
\newcommand{\mfh}{{\mathfrak{h}}}
\newcommand{\mfbQ}{\boldsymbol{\mathfrak{Q}}}
\newcommand{\mfQ}{{\mathfrak{Q}}}
\newcommand{\mfbp}{\boldsymbol{\mathfrak{p}}}
\newcommand{\mfbq}{\boldsymbol{\mathfrak{q}}}
\newcommand{\mfq}{{\mathfrak{q}}}
\newcommand{\mfX}{{\mathfrak{X}}}
\newcommand{\mfY}{{\mathfrak{Y}}}
\newcommand{\mrmH}{{\mathrm{H}}}
\newcommand{\whk}{{\widehat{k}}}
\newcommand{\cartea}[1]{{\color{black} #1}} 
\newcommand{\tianyi}[1]{{\color{black}#1}} 
\begin{document}

\begin{frontmatter}

\title {\textbf{Trading Foreign Exchange Triplets}
\tnoteref{t1}
\\
\textit{\small Forthcoming, SIAM J. Financial Mathematics}
}

\tnotetext[t1]{SJ would like to acknowledge the support of the Natural Sciences and Engineering Research Council of Canada (NSERC), [funding reference numbers RGPIN-2018-05705 and RGPAS-2018-522715]. We thank seminar participants at the Fields Institute, PIMS Summer School in Mathematical Finance, LABEX--Paris, SIAM (Austin 2017), Dublin City University, Technical University Berlin, Humboldt University of Berlin. This work first appeared on \url{https://ssrn.com/abstract=3054656}.}

\author[author1]{\'Alvaro Cartea}
\ead{alvaro.cartea@maths.ox.ac.uk}
\author[author2]{Sebastian Jaimungal}
\ead{sebastian.jaimungal@utoronto.ca}
\author[author2]{Tianyi Jia}
\ead{tianyi.jia@mail.utoronto.ca}
\address[author1]{Mathematical Institute, University of Oxford, Oxford, UK\\  Oxford-Man Institute of Quantitative Finance, Oxford, UK}
\address[author2]{Department of Statistical Sciences, University of Toronto, Toronto, Canada}

\begin{abstract}
We develop the optimal trading strategy for a foreign exchange (FX) broker who must liquidate a large position in an illiquid currency pair. To maximize revenues, the broker considers trading in a currency triplet which consists of the illiquid pair and two other liquid currency pairs. The liquid pairs in the triplet are chosen so that one of the pairs is redundant. The broker is risk-neutral and accounts for model ambiguity in the FX rates to make her  strategy robust to model misspecification. When the broker is ambiguity neutral (averse) the trading strategy in each pair is independent (dependent) of the inventory in the other two pairs in the triplet.   We employ simulations to illustrate how the robust strategies perform. For a range of ambiguity aversion parameters, we find the mean Profit and Loss (P\&L) of the strategy increases and the standard deviation of the P\&L decreases as ambiguity aversion increases.
\end{abstract}

\begin{keyword}
Foreign Exchange, Currency Pairs, Optimal Liquidation, Execution, Inventory Aversion, Ambiguity Aversion
\end{keyword}

\end{frontmatter}

\section{Introduction}

Trading activity in the  foreign exchange (FX) market is vast. On average, the daily turnover is in excess of 5 trillion USD, where approximately a third of this turnover is in the FX spot market, and the remainder is in FX derivatives: forwards, swaps, and options (see \cite{BIS}). All currencies are traded in the FX spot market,  but most of the turnover is in a handful of  pairs (USD/EUR, USD/JPY, USD/GBP, USD/AUD, USD/CAD). The market quotes  a rate for those who buy (bid) or sell (ask) the currency pair, where the difference between the ask and the bid prices is the quoted spread and the average of the bid and ask prices is the mid-exchange rate.  A currency pair is traded by simultaneously taking a long position in one currency and a short position in the other currency of the pair. The conversion rate between the two currencies is given by the bid (resp. ask) if the investor is going long (resp. short) the currency pair.

In this paper we show how a broker liquidates a large position in an illiquid currency pair. One approach is to trade only the pair the broker aims to liquidate. An alternative is for the broker to trade in three currency pairs, one of which is the illiquid pair and the   additional   pairs are such that the three pairs are formed by combinations of only three currencies.  Therefore, by no-arbitrage,  one of the currency pairs can be replicated by taking positions in the other two  pairs, and hence the dynamics of the three currency pairs exhibit strong co-movements. Ideally,  at least one of the two additional pairs is very liquid (heavily traded),  so this ``triangle'' dependence between the pairs is considered by the broker to devise liquidation strategies that offset the illiquidity in one pair with the two other more liquid pairs.

There is no central Exchange framework in FX, so brokers trade in electronic communication networks (ECNs) with multiple liquidity providers and via other channels with their own pool of clients, see \cite{CarteaJaiWalLastLook} and \cite{OomenAggregator}.  We refer to the broker's trading activities on all ECNs as trading in the `Exchange'.  The broker can control her own liquidity taking orders sent to the Exchange, but has little or no  control on the arrival rate  of the orders of her clients.

The broker is risk-neutral and acknowledges that her model of mid-exchange rates of the currency pairs, which is represented by a reference measure $\mathbb P$, may be misspecified. She deals with this model uncertainty, also referred to as ambiguity aversion, by considering alternative models when she develops the optimal execution strategy, see \cite{CDJ} and \cite{cartea2016speculative}. These alternative models are characterized by probability measures that are absolutely continuous with respect to the reference model $\mathbb P$. The decision to reject the reference measure is based on a penalty that the broker incurs if she adopts an alternative model. The magnitude of the penalty depends on the broker's degree of ambiguity aversion and is based on a ‘measure’ of the distance between the reference and the alternative measure.

The broker solves an optimal control problem where the objective is to liquidate a large position in an illiquid pair, and does   by submitting liquidity taking orders to the Exchange in all three currency pairs. These orders have a temporary impact in the quote currency of the pair. That is, when the broker executes trades in the Exchange she receives worse rates than the quoted mid-exchange rate.  The broker simultaneously fills orders from her pool of clients. These orders are filled at the mid-exchange rate and clients pay a service fee to the broker, where the size of the fee is proportional to the size of the order in the currency pair.

We show that when the broker is ambiguity neutral, i.e., fully trusts the reference model, the optimal trading strategy in each currency pair is independent of the level of inventory in the other two pairs. Thus, the broker's strategy does not employ the other two currency pairs of the triplet to optimally execute a position in an illiquid pair. On the other hand, when the broker is ambiguity averse,  the inventory position in each pair affects the trading strategy in the other two pairs. As the level of  ambiguity aversion increases,  the speed at which the strategy builds positions in the liquid pairs of the triplet increases.

We also demonstrate that the ambiguity averse broker makes her model robust to misspecification by  adopting a candidate measure where the drifts of the mid-exchange rates are different from those of the reference model. Specifically,  when the broker's initial position in the illiquid pair is short (resp. long), the  broker devises a strategy where the drift in the mid-exchange rates of the currency pairs in the triplet, relative to the model under the reference measure $\PP$,  is larger (resp. smaller). This affects the speed of trading in the robust model, relative to that of the reference model,  because the expected growth in the mid-exchange rate of the currency pair affects the value of the inventory. For example, if a currency pair is expected to appreciate, then, everything else being equal,  it is optimal to increase the position in that pair.

We use simulations,  based on parameters calibrated to FX data, to illustrate the performance of the strategy. Our results show that when  the broker makes her model robust to misspecification,  the  mean Profit and Loss  (P\&L) of the strategy increases and  the standard deviation of the P\&L decreases.

To the best of our knowledge this is the first paper that shows how FX brokers manage large positions in currency pairs. Our framework can be extended to FX trading algorithms to make markets and those designed to take speculative positions in currency pairs   with strong co-dependence. Although the literature on algorithmic trading in the FX market is scant, there is a large body of work that looks at optimal liquidation, and other algorithmic strategies in equity markets. For comprehensive treatments of algorithmic trading and microstructure issues in equity markets see \cite{lehalle2013market}, \cite{TheBook}, \cite{gueant2016financial}, \cite{abergel2016limit}.

The remainder of the paper proceeds as follows. Section \ref{sec: model setup} presents the broker's reference model for the currency pairs and discusses the broker's cashflows from dealing currency pairs with her pool of clients and from executing trades in the FX Exchange. Section \ref{sec: optimal trading under reference measure} solves the control problem for the ambiguity neutral broker.  Section \ref{sec: ambiguity aversion} shows how the broker introduces model ambiguity to make the model robust to misspecification and solves the  control problem of the ambiguity averse  broker and provides the optimal speed to trade the currency pairs in the triplet. Section \ref{sec: performance of strategy} illustrates the performance of the strategy and Section \ref{sec: conclusions} concludes. Finally,  some proofs are collected in the Appendix.

\section{Model Setup}\label{sec: model setup}

FX traders buy and sell currency pairs. Trading in a pair of currencies involves  the simultaneous sale of one currency and purchase of another currency and the price at which this transaction is done is  the mid-exchange rate plus (minus)  half the quoted spread when longing (shorting) the pair. In this paper we focus on a triplet of currency pairs that links the exchange rates of three  currencies. We denote the three currencies by $\{1,\,2,\,3\}$, and denote the mid-exchange rates of the currency pairs by $X_t$, $Y_t$, and $Z_t$, where $t$ denotes time. Here $X_t$ is the mid-exchange rate for the pair $(2,1)$, $Y_t$ is the mid-exchange rate for the pair $(3,1)$, and $Z_t$ is the mid-exchange rate for the pair $(2,3)$.

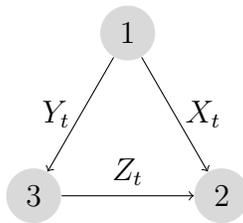
\begin{figure}
\begin{center}
\begin{tikzpicture}[]
\node [circle, fill=gray!30] (C3) at (0, 0) {$3$};
\node [circle, fill=gray!30] (C2) at +(0: 2.5) {$2$};
\node [circle, fill=gray!30] (C1) at +(60: 2.5) {$1$};
\draw [->] (C1) -- (C2) node [midway, right] {$X_t$};
\draw [->] (C1) -- (C3) node [midway, left] {$Y_t$};
\draw [->] (C3) -- (C2) node [midway, above] {$Z_t$};
\end{tikzpicture}
\end{center}
\caption{Currencies and exchange rates for pairs in the FX triplet. The quote currency (the start of the arrow) is sold at the rate shown on the link, and one unit of the base currency (the end of the arrow) purchased. For example, when the broker goes long the currency pair $(2,1)$ in a frictionless market,  she buys one unit of the base currency $2$ and sells $X_t$ units of quote currency $1$. \label{fig:FX-triplets-diagram}
}
\end{figure}

When the broker goes long the currency pair $(2,1)$ in a frictionless market (i.e., no fees and zero spread),  she buys one unit of currency $2$, known as the base currency in the pair, and sells $X_t$ units of currency $1$,  known as the quote currency in the pair. Similarly, the cost of one unit of currency $3$ is $Y_t$ units of currency $1$; and the cost of one unit of currency $2$ is $Z_t$ units of currency $3$. By no-arbitrage, the mid-exchange rates for the reverse direction of the pairs, i.e., for $(1,2)$, $(1,3)$, and $(3,2)$, are given by $1/X_t$, $1/Y_t$, $1/Z_t$,  respectively. See Figure \ref{fig:FX-triplets-diagram}.

One of the three currency pairs is redundant because it can be obtained by simultaneously executing transactions in the other two pairs. For example, buying  one unit of the pair $(2,1)$ and buying $X_t$ units of the pair $(1,3)$, is equivalent to buying one unit of the pair $(2,3)$. Thus, by no-arbitrage the mid-exchange rates satisfy
\begin{equation}\label{eqn: Z in terms of Y and X}
Z_t =  X_t /Y_t\,.
\end{equation}

We focus on a triplet with a redundant currency pair  because  brokers who need to unwind positions in highly illiquid currency pairs may rely on the triplet to enhance the financial performance of the liquidation strategy. For example, assume that a broker needs to unwind a position in one illiquid pair only. One approach is to devise an optimal liquidation strategy based on exclusively trading the illiquid pair. An alternative is to devise a liquidation strategy where the broker also trades in two other pairs which are more liquid. These two additional pairs are chosen to form, in conjunction with the illiquid pair, a triangular triplet. The dynamics of the three pairs exhibit strong co-dependence,  and the broker may offset some of the costs arising from  the illiquidity in one pair by taking positions in the other, more liquid, pairs.

\subsection{Mid-exchange rate dynamics}

To devise the optimal trading strategy, the broker  assumes a model for the dynamics of the triplet. This model is described by the completed probability space $(\Omega, \FF =\{\FF_t\}_{t\in[0,T]}, \PP)$, where $\FF$ is the natural filtration generated by the processes $X=(X_t)_{t\in[0,T]}$ and $Y=(Y_t)_{t\in[0,T]}$, which satisfy the stochastic differential equations (SDEs)
\begin{subequations}
	\begin{align}
	{dX_t} &= \mu_x\,{X_t}\,dt+\sigma_x\,{X_t}\,dW_t^x\,, \label{eqn:dXtGBM2}\\
	{dY_t} &= \mu_y\,{Y_t}\,dt+\sigma_y\,{Y_t}\,dW_t^y\,,\label{eqn:dYtGBM2}
	\end{align}
where $\sigma_x>0$, $\sigma_y>0$, $\mu_x$, $\mu_y$   are constants, $W^x=\left(W^x_t\right)_{t\in[0,T]}$ and $W^y=\left(W^y_t\right)_{t\in[0,T]}$ are $\PP$--Brownian motions with correlation $\rho\in[-1,1]$. The dynamics of the  mid-exchange rate $Z=\left(Z_t\right)_{t\in[0,T]}$  are implied by the no-arbitrage relationship \eqref{eqn: Z in terms of Y and X} and are given by the SDE
\begin{equation}\label{eqn: dynamics of Z under P}
  dZ_t= \mu_z\,{Z_t}  \,dt + \sigma_x\,{Z_t} \,dW_t^x - \sigma_y\,{Z_t} \,dW_t^y\,,\qquad \mbox{where}\qquad  \mu_z = \mu_x - \mu_y + \sigma_y^2 - \rho\,\sigma_x\,\sigma_y\,.
\end{equation}%
\end{subequations}%

The exchange rates of the triplet satisfy identity \eqref{eqn: Z in terms of Y and X} if there are no trading fees, the quoted spread is zero (i.e., best bid and best ask prices in the LOB are the same), and the market is arbitrage-free. On the other hand, if the product of the three exchange rates is greater than one, there is an arbitrage opportunity,  see \cite{fenn2009mirage}.  For example,  assume that $Z$ is the exchange rate for the pair EUR/USD, $Y$ is for the pair USD/CHF, and $X$ is for the pair EUR/CHF, and assume that $Z^{bid}_t\,Y^{bid}_t \,1/X^{ask}_t = f $, where $f>1$. If a trader initially holds one Euro,  the arbitrage consists of: (i) with one Euro purchase $Z^{bid}_t$ USD, (ii) which buys   $Y^{bid}_t$ units CFH, (iii) and sell this position for  $1/X^{ask}_t $ Euros. Clearly, this chain of transactions will `convert' one Euro into $f$ Euros with no risk.

We  could assume that the exchange rate $Z$ is as \eqref{eqn: dynamics of Z under P} plus noise so that the product of the three rates  is a stochastic process which is most of the time below 1 and very seldom above 1. However, to keep the model as simple as possible,  we assume that the product  of the mid-exchange rates of the currency pairs of the triplet is 1 for all $t$.

\subsection{Speed of trading, order flow, and inventory}

We assume the broker uses the Exchange exclusively to take liquidity (i.e., does not provide liquidity to the Exchange), and provides liquidity to her pool of clients by filling the orders of clients.  Thus, the broker controls the speed at which she trades aggressively in  the Exchange, but cannot control the rate at which she fills the liquidity taking orders of her clients.  The broker trades in the triplet of currencies over the time window  $[0,\,T]$ and aims at holding zero inventory in all pairs by the terminal date  $T$.  We denote by  $\bnu=(\bnu_t)_{t\in[0,T]}$ the vector-valued process of execution speeds in the Exchange, for the pairs $x,\,y,\,z$, and write $\bnu_t=(\nu^x_t,\,\nu^y_t,\,\nu^z_t)^\intercal$, where $^\intercal$ is the transpose operator.

The vector-valued process $\bQ^\bnu=(\bQ^\bnu_t)_{t\in[0,T]}$  represents  the controlled inventory in each currency pair  of the triplet, which results from trading in the Exchange and from trading with her own pool of clients. That is, $\bQ^\bnu$ tracks the units of base currency in each currency pair -- for modelling purposes, it is convenient to use ``inventory of currency pair'' instead of ``units of base currency'' to track trades in different currency pairs that share a common base currency. At $t=0$, the broker's initial holding in the currency pairs is denoted by $\bQ^\bnu_0=(Q^{x,\bnu}_0,\,Q^{y,\bnu}_0,\,Q^{z,\bnu}_0)^\intercal$. To understand our notation (in a frictionless market)  if the first entry of the inventory triplet is  $Q^x_0=100$, then the broker is short $100\times X_0 $ units of currency $1$ and long  100  units of currency 2. Similarly, if $Q^x_0=-100$  the broker is long $100\times X_0 $  units of currency $1$ and short 100 units of currency 2.

When the broker's initial inventory is $\bQ^\bnu_0$, her initial portfolio of currency pairs consists of $Q^x_0 + Q^z_0$ units of currency $2$, and $Q^y_0$ units of currency $3$. Then, the value of the portfolio in currency $1$ is $Q^x_0 \times X_0 + Q^y_0\times Y_0 + Q^z_0 \times Z_0\times Y_0$ in a frictionless market, and in the absence of arbitrage, the value of the  portfolio is $Q^x_0 \times X_0 + Q^y_0\times Y_0 + Q^z_0 \times X_0$. As mentioned previously, two inventory processes, $Q^{x,\bnu}_t$ and $Q^{z,\bnu}_t$, track the units of the base currency $2$  in the pairs $(2,1)$ and $(2,3)$.

The following example illustrates how the broker's speed of trading affects inventory. Assume that over the time step $\Delta t$ the strategy is to execute in the Exchange the amount $\nu^x_t\,\Delta t>0$ in the currency pair $(2,1)$. Thus the post-transaction inventory becomes $Q^{x,\bnu}_{t+\Delta t}= Q^{x,\bnu}_t - \nu^x_t\,\Delta t$. If we further assume that  over the time interval $(t,t+\Delta t]$ the mid-exchange rate $X_t$ does not change, then the effect of the execution strategy is to reduce by:   $\nu^x_t\,\Delta t\times X_t$ the amount the broker is short in currency $1$,   and  $\nu^x_t\,\Delta t$ the amount the broker is long in currency $2$.

While the broker's target is to hold zero units in the base currency of all currency pairs by the terminal date, i.e., $\bQ^\bnu_T = \bzero$, she continues to fill buy and sell orders of the clients in the three pairs, and this affects the  broker's inventory. We assume the  liquidity taking  orders of the clients arrive according to independent compound Poisson processes. Let $O^{k,\pm} = (O_{t}^{k,\pm})_{t\in[0,T]}$ represent the order flow from clients in the currency pair $k \in \{x,\,y,\,z\}$. Specifically,
\begin{equation}
	O_{t}^{k,\pm} = \sum_{n=1}^{N_{t}^{k,\pm}}\xi_{n}^{k,\pm}\,,\label{eqn:client order}
\end{equation}
where market sell is denoted by $+$ and market buy is denoted by $-$. Here,   $\{\xi^{k,\pm}_n\}_{n=1,2,\dots} \stackrel{i.i.d.}{\sim} F^{k,\pm}$, where $F^{k,\pm}$ is the distribution function, with support on $[0,\infty)$,  of the volume of the clients' buy and sell orders in currency pair $k$. The counting processes  $N^{k,\pm} = (N_t^{k,\pm})_{t\in[0,T]}$ are independent Poisson processes with intensities $\lambda^{k,\pm}$, and  represent the number of market (buy and sell) orders received by the broker in each currency pair through time. The Poisson processes, the size of the trades, and the Brownian motions $W^{x,y}$ are all mutually independent. All distribution functions are under the reference  measure $\PP$. At this point, we extend the completed filtered probability space $(\Omega, \,\FF=\{\FF_t\}_{t\in[0,T]},\,\mathbb P)$, so that henceforth, $\FF_t$ represents the natural filtration generated by $(W^x,\,W^y,\, O^{x,\pm},\, O^{y,\pm},\, O^{z,\pm})$.

For example, when the broker fills a client's  market sell order of volume $\xi^{x,+}$ and   there is no brokerage service fee, the broker's  inventory position in the pair $x$ increases by $\xi^{x,+}$; that is,  the broker has shorted an additional $\xi^{x,+}\times X_t$  units of currency $1$ and longed an additional $\xi^{x,+}$ units of currency $2$.

In all, the broker's inventory is affected by the controlled executions sent  to the Exchange, and the clients' filled market orders. The dynamics of the inventory in each pair are given by
\begin{equation}
Q_T^{k,\bnu} = Q_t^{k,\bnu} - \int_t^T \nu_u^k\,du-\int_t^T\int_{\RRR}r\,J^{k,-}(du,dr) + \int_t^T\int_{\RRR}r\,J^{k,+}(du,dr)\,,\label{eqn:Q dynamics}
\end{equation}
where $J^{k,\pm}$  is the jump measure of $O_{k,t}^{\pm}$,  with intensity measure $\lambda^{k,\pm}\,F^{k,\pm}(dr)\,du$.

\subsection{Temporary mid-exchange rate impact on quote currency}

The broker's trades in the Exchange receive temporary price impact. When the broker sends executions to the Exchange, she enters a position in the quote currency at a value that is worse than the quoted mid-exchange rate. For example, if at time $t$ the broker sends  to the Exchange a sell order for $r$ units of the currency pair $(2,1)$, instead of receiving $r\times X_t$ units of the quote currency $1$ (in exchange of $r$ units of the base currency $2$), she receives  $r\times X_t\,(1-a_x\,r)$ units of currency $1$ (in exchange of   $r$ units of currency $2$). The  parameter $a_x\geq 0$  denotes the temporary impact of the broker's order for the currency pair $(2,1)$, i.e., the broker's order walks the FX book of the Exchange. The impact on the mid-exchange rate is temporary because we assume that liquidity in the Exchange's order book is immediately replenished (or replenished faster than the broker submits orders).

On the other hand, when the broker fills  buy and sell orders from her clients, the broker charges a spread around the mid-exchange rates $X_t,\,Y_t,\,Z_t$. For example, if a client sends a sell order of $r$ units in the pair $(2,1)$  the client receives the rate $X_t\,\left(1 - c_x\,r\right)r$, which is worse than the mid-exchange rate $X_t$ if $c_x>0$. We refer to this spread as a brokerage service fee.

\subsection{Broker's marked-to-market cash position}

At every instant in time the broker's cash position is  marked-to-market in currency 1 and the cumulative value is denoted by $\mcX^\bnu =(\mcX^\bnu)_{t\in[0,T]}$. This cash position results from  the market making activity  with her  clients, the  liquidity taking orders she sends to the FX Exchange,    and  is given by
\begin{subequations}\label{eqn: cash process}
\begin{eqnarray}\nonumber
	\mcX_T^{\bnu} &=& \mcX_t^{\bnu} \\ \label{eqn: cash process a}
&&+ \int_{t}^{T}\Big[\hX_u\, \nu_u^x + \hY_u \, \nu_u^y + \hZ_u\,\nu_u^z\,\Big]\,du\\\label{eqn: cash process b}
	&&+\int_t^T\int_{\RRR^+}\tX^+_u\,r\,J^{x,-}(du,dr)-\int_t^T\int_{\RRR^+}\tX^-_u\,r\,J^{x,+}(du,dr)\\ \label{eqn: cash process c}
	&&+\int_t^T\int_{\RRR^+}\tY^+_u\,r\,J^{y,-}(du,dr)-\int_t^T\int_{\RRR^+}\tY^-_u\,r\,J^{y,+}(du,dr)\\ \label{eqn: cash process d}
	&&+\int_t^T\int_{\RRR^+} \tZ^+_u\,r\,J^{z,-}(du,dr)-\int_t^T\int_{\RRR^+} \tZ^-_u\,r\,J^{z,+}(du,dr)\,,
\end{eqnarray}
\end{subequations}
where
\begin{subequations}\label{eqn: notation for execution prices}
\begin{eqnarray}\label{eqn: notation for execution prices x}
\hX_t &=& X_t\,(1 - a_x\,\nu_t^x)\,,
\qquad\,\,\,\,\tX^\pm _t = X_t\,(1\pm c_x^\mp \,r)\,,\\ \label{eqn: notation for execution prices y}
\hY_t &=& Y_t\,(1 - a_y\,\nu_t^y)\,,\qquad\quad\, \tY^\pm _t = Y_t\,(1\pm c_y^\mp \,r)\,,\\ \label{eqn: notation for execution prices z}
\hZ_t &=& Y_t\,Z_t\,(1 - a_z\,\nu_t^z)\,, \qquad\tZ^\pm _t = Y_t\,Z_t\,(1\pm c_z^\mp \,r)\,.
\end{eqnarray}
\end{subequations}
Recall that $a_k\geq 0$ is the temporary impact on the mid-exchange rate due to the trades sent by the broker to  the Exchange, and the brokerage service fee parameter $c_k^{\pm}\geq 0$ scales with the order size to obtain the spread the broker charges to her clients.

Line \eqref{eqn: cash process a}  represents the cash-flows from the broker's executions in the Exchange when trading in the three currency pairs and marked-to-market in currency 1. In particular, the first, second, and third terms are the cash-flows received/paid by the broker when trading the currency pair $(2,1)$, $(3,1)$, and $(2,3)$, respectively. Recall that a positive execution speed corresponds to selling the corresponding currency pair, so, e.g., if  the execution speed  $\nu^x $ is positive, the broker is selling the pair $(2,1)$, i.e., sell base currency 2 and purchase quote currency 1.

Lines \eqref{eqn: cash process b} to \eqref{eqn: cash process d} represent the cash-flows from the broker's trading activity with her pool of clients. For example, line \eqref{eqn: cash process b} is the cash-flow from trading currency pair $(2,1)$ and marked-to-market in the quote currency 1.

\section{Optimal trading in triplet of currency pairs}\label{sec: optimal trading under reference measure}

In this section we pose and solve the broker's control problem when the objective is to liquidate an inventory position in the currency pairs by the terminal date $T$, and the broker is confident about the reference model $\PP$. The broker maximizes terminal expected wealth resulting from trading in the Exchange and dealing currency pairs with her clients. The results in this section are used as a benchmark to compare how the broker's strategy changes when the broker acknowledges that the reference model $\PP$ is misspecified (see Section \ref{sec: ambiguity aversion}).

\subsection{Value function and optimal trading}\label{subsec: value function and optimal trading in measure P}

The broker's performance criterion is
\begin{equation}\label{eqn: performance criteria under P}
H^{\bnu}(t, \mcX,\by,\bbq) =  \EE_{t,\mcX,\by,\bbq}^{\PP}\bigg[\mcX_T^\bnu+\ell\left((X_T,Y_T),\bQ^{\bnu}_T\right)\bigg]\,,
\end{equation}
where $\by=(x,\,y)$, $\bbq=(q^x,\,q^y,\,q^z)$,
\begin{equation}\label{eqn:trmPnLty1}
\ell(\by, \bbq) = \,x\,q^x\,\left(1-\alpha_x\,q^x \right)+y\,q^y\,\left(1-\alpha_y\,q^y \right)+x\,q^z\,\left(1-\alpha_z\,q^z \right)\,,\quad\alpha_{x,y,z}\geq 0\,.
\end{equation}
The terminal payment $\ell(\by, \bbq)$ denotes the cash proceeds from liquidating any terminal inventory in the Exchange minus a penalty paid by the broker, marked-to-market in currency 1.  The terminal payment includes a penalty of the form $\alpha_k\,q^2_k$, which  consists of fees stemming from crossing the spread and a non-financial penalty included by the broker to tweak the liquidation strategy. For example, in the absence of order flow from  clients, in the limit $\alpha_k\to\infty$  the optimal strategy guarantees full liquidation by the terminal date. The state variable $z$ is not included as an argument in the performance criterion  because by no-arbitrage  it is a redundant state variable. \cartea{Finally, note that the broker is risk-neutral because  the performance criterion \eqref{eqn: performance criteria under P} optimises expected wealth, i.e.,  the broker is not sensitive to the risk in the outcome of terminal wealth.}

The broker's value function is
\begin{eqnarray}\label{eqn: value function}
		H(t,\mcX,\by,\bbq) &=& \sup_{\bnu\in\calV}
H^{\bnu}(t, \mcX,\by,\bbq)
\end{eqnarray}
where the set of admissible controls is
\begin{equation}
	\calV = \left\{\bnu:\bnu\textrm{ is }\FF\textrm{-predictable, and } \EE^{\PP}\left[\,\int_0^T (\nu_s)^2\,ds\,\right]<\infty \right\}\,.\label{eqn:admissible strategy 1}
\end{equation}

To solve the optimal control problem in \eqref{eqn: value function}, the dynamic programming principle holds, and the value function is the unique solution of the Hamilton-Jacobi-Bellman (HJB) equation
\begin{subequations}
\label{eqn:DPE-zero-ambiguity}
\begin{eqnarray}\label{eqn: DPE under P}
\begin{split}
0=&\left(\partial_t + \LL^{\by} + \sup_{\bnu}\LL^{\bnu}\right)H + \sum_{k \in \{x,y,z\}, i=\pm}
\int_{\RRR^+}\,\Delta^{k,-i}_{t,\mcX,\by,\bbq,r} H\,\lambda^{k,i}\,F^{k,i}(dr)
\,,
\end{split}
\end{eqnarray}
with terminal condition
\begin{equation}\label{eqn:termCon P}
H(T, \mcX,\by,\bbq) = \mcX + x\,q^x\,\left(1-\alpha_x\,q^x \right)+y\,q^y\,\left(1-\alpha_y\,q^y \right)+x\,q^z\,\left(1-\alpha_z\,q^z \right)\,,
\end{equation}
\end{subequations}
where the various operators are defined as follows:
\begin{subequations}
	\begin{align}
\begin{split}
\LL^{\by}H =&\;\;\; \mu_x\,x\,\partial_xH + \mu_y\,y\,\partial_yH
\\
& +\tfrac{1}{2}\,\sigma_x^2\,x^2\,\partial_{x x}H
  + \tfrac{1}{2}\,\sigma_y^2\,y^2\,\partial_{y  y }H
  + \rho\,\sigma_x\,\sigma_y\,x\,y\,\partial_{ x y }H\,,
\end{split}
\label{LxyAmbAver P}
\\
\LL^{\bnu}H =&\,\sum_{k \in \{x,y,z\}}\,\bigg[-a_k\,\whk\,\partial_{\mcX} H\,(\nu^k)^2 + (\whk\,\partial_{\mcX} H - \partial_{q^k} H)\,\nu^k\bigg]\label{LvAmbAver P}
\\
\Delta^{k,+}_{t,\mcX,\by,\bbq, r}H(t, \mcX,\by,\bbq) =&\, H(t,\mcX + \tilde k^+ \, r, \by, \bbq - r\,\bold{1}^k)-H(t, \mcX,\by,\bbq)\,,\label{Deltak+ P}\\
\Delta^{k,-}_{t,\mcX,\by,\bbq,r}H(t, \mcX,\by,\bbq) =&\, H(t,\mcX - \tilde k^- \, r, \by, \bbq + r\,\bold{1}^k)-H(t, \mcX,\by,\bbq)\,,\label{Deltak- P}
\end{align}
\end{subequations}
$\whk = k\,1_{\{k\in\{x,y\}\}}+x\,1_{\{k=z\}}$, $\bold{1}^x = (1, 0, 0)^\intercal$, $\bold{1}^y = (0, 1, 0)^\intercal$, $\bold{1}^z = (0, 0, 1)^\intercal$, and   $\tilde k^{\pm}\in \{\tilde x, \tilde y, \tilde z\}$ denotes the mid-exchange rates defined in \eqref{eqn: notation for execution prices}.

\begin{proposition}\label{prop10}
The DPE \eqref{eqn: DPE under P} admits the solution
\begin{equation}\label{eqn:ansatzH under P}
\begin{split}
H(t,\mcX,\by,\bbq) = \mcX &+ x\,(q^x+q^z)+y\,q^y-h_0^x(t)\,x-h_0^y(t)\,y -h_1^x(t)\,x\,q^x\\
&-h_1^y(t)\,y\,q^y-h_1^z(t)\,x\,q^z
-h_2^x(t)\,x\,(q^x)^2-h_2^y(t)\,y\,(q^y)^2-h_2^z(t)\,x\,(q^z)^2\,,
\end{split}
\end{equation}
where $h_{2}^{x,y,z}(t)$, $h_{1}^{x,y,z}(t)$, $h_{0}^{x,y}(t)$ are deterministic functions of time given explicitly in \eqref{eqn:mfh2k}, \eqref{eqn:mfh1k}, \eqref{eqn:mfh0k}.
\end{proposition}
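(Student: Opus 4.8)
The plan is the classical \emph{ansatz-and-verify} route: substitute \eqref{eqn:ansatzH under P} into the HJB equation \eqref{eqn: DPE under P}, reduce the supremum over $\bnu$ to an explicit feedback law, collect the resulting polynomial identity in the state variables, and read off a triangular system of ODEs for the coefficient functions $h_0^{\cdot},h_1^{\cdot},h_2^{\cdot}$ that can be solved in closed form. The first step is to record the structural simplifications the ansatz produces. Since \eqref{eqn:ansatzH under P} is affine in $\mcX$ with $\partial_{\mcX}H\equiv 1$, the operator $\LL^{\bnu}H$ in \eqref{LvAmbAver P} becomes $\sum_{k}\big[-a_k\,\whk\,(\nu^k)^2+(\whk-\partial_{q^k}H)\,\nu^k\big]$, which is separately concave and quadratic in each $\nu^k$; maximising pointwise gives $\nu^{k,\star}=\frac{\whk-\partial_{q^k}H}{2\,a_k\,\whk}$ and $\sup_{\bnu}\LL^{\bnu}H=\sum_{k}\frac{(\whk-\partial_{q^k}H)^2}{4\,a_k\,\whk}$.

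Next I would exploit that the ansatz is also affine in $(x,y)$ for fixed $\bbq$: every second-order term in $\LL^{\by}$ in \eqref{LxyAmbAver P} annihilates $H$, so $\LL^{\by}H=\mu_x\,x\,\partial_xH+\mu_y\,y\,\partial_yH$, and in particular $\rho,\sigma_x,\sigma_y$ drop out of this equation entirely, which already mirrors the claimed inventory-decoupling of the ambiguity-neutral strategy. Finally, because prices do not jump and $H$ is affine in $\mcX$ and affine-quadratic in $\bbq$, each increment $\Delta^{k,\pm}_{\cdots}H$ in \eqref{Deltak+ P}--\eqref{Deltak- P} is an explicit affine function of $q^k$ with coefficients that are linear in $r$ together with a term in $r^2$; integrating against $\lambda^{k,\pm}F^{k,\pm}(dr)$ thus only brings in the first two moments of the client order-size distributions $F^{k,\pm}$, and—using the no-arbitrage identity $Z_tY_t=X_t$ embedded in \eqref{eqn: notation for execution prices z}—every such contribution carries a factor $x$ or $y$.

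Substituting all of this into \eqref{eqn: DPE under P}, together with $\partial_tH=-\dot h_0^x\,x-\dot h_0^y\,y-\dot h_1^x\,x\,q^x-\cdots$, turns the HJB equation into a polynomial identity in $(x,y,q^x,q^y,q^z)$ whose only monomials are $\{x(q^x)^2,\,y(q^y)^2,\,x(q^z)^2,\,x\,q^x,\,y\,q^y,\,x\,q^z,\,x,\,y\}$; the constant monomial has coefficient identically zero, a consistency check. Matching the coefficients of $x(q^x)^2$, $y(q^y)^2$, $x(q^z)^2$ yields three \emph{decoupled} Riccati/Bernoulli ODEs $\dot h_2^{k}=\tfrac{1}{a_k}(h_2^{k})^2-\mu_{*}\,h_2^{k}$ (with $\mu_*=\mu_x$ for $k\in\{x,z\}$ and $\mu_*=\mu_y$ for $k=y$), subject to $h_2^{k}(T)=\alpha_k$ read off from \eqref{eqn:termCon P}; these are linearised by $1/h_2^{k}$ and solved explicitly, giving \eqref{eqn:mfh2k}, with $a_k>0$, $\alpha_k\ge 0$ keeping the solution finite on $[0,T]$. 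With $h_2^{k}$ in hand, the coefficients of $x\,q^x$, $y\,q^y$, $x\,q^z$ give linear inhomogeneous ODEs for $h_1^{k}$ with $h_1^{k}(T)=0$ and sources built from $h_2^{k}$ and the first moments of $F^{k,\pm}$, yielding \eqref{eqn:mfh1k}; then the coefficients of $x$, $y$ give linear ODEs for $h_0^{x},h_0^{y}$ with $h_0^{\cdot}(T)=0$ and sources built from $h_1^{\cdot},h_2^{\cdot}$ and the first/second moments of $F^{k,\pm}$, yielding \eqref{eqn:mfh0k}. The resulting $H$ is smooth in $t$ and polynomial in the remaining variables, hence a classical solution of \eqref{eqn: DPE under P}; and $\nu^{k,\star}$, being affine in $q^k$ with deterministic coefficients, lies in $\calV$ once one notes $\bQ^{\bnu}$ has finite second moments under $\PP$.

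I expect the only real obstacle to be bookkeeping rather than mathematics: tracking nine coefficient functions together with three asymmetries—(i) the $(2,3)$-pair inventory is marked at the $x$-rate, so it is governed by $\mu_x$ and not $\mu_z$; (ii) the impact and service-fee parameters enter with the $\pm/\mp$ flips of \eqref{eqn: notation for execution prices}; and (iii) the $r^2$ part of the jump increments feeds only the $h_0$ equations. A secondary point needing a line of justification is global solvability on $[0,T]$ of the Bernoulli equations for $h_2^{k}$, which is immediate once $a_k>0$ (the degenerate case $a_k=0$ being excluded or handled as a limit), after which everything downstream is linear and automatically global.
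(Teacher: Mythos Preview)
Your proposal is correct and follows essentially the same route as the paper: substitute the ansatz, use $\partial_{\mcX}H\equiv1$ to reduce the supremum to the explicit feedback $\nu^{k,\star}=(\whk-\partial_{q^k}H)/(2a_k\whk)$, exploit affinity in $(x,y)$ to kill the diffusion terms, and then match the monomials $\{x(q^k)^2,\,xq^k,\,yq^y,\,x,\,y\}$ to obtain the triangular Riccati--linear--linear ODE system with the terminal data read off from \eqref{eqn:termCon P}. Your additional structural commentary (why $\rho,\sigma_x,\sigma_y$ disappear, why only the first two moments of $F^{k,\pm}$ enter, and the Bernoulli linearisation via $1/h_2^k$) is accurate and goes slightly beyond what the paper spells out; the admissibility remark at the end properly belongs to Theorem~\ref{thm:VerificationP} rather than this proposition, but it does no harm here.
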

For a proof see Appendix \ref{appendix_I}. 

\begin{theorem} \label{thm:VerificationP}
If $|\mu_{\whk}|<| \alpha_k/a_k|$, then the value function of problem \eqref{eqn: value function} is given by \eqref{eqn:ansatzH under P} and  the controls $\bnu^*$ are optimal and are given componentwise by
\begin{equation}
\nu^{k,*}_t = \tfrac{1}{2\,a_k} \left( h_1^k(t) + 2\,h_2^k(t)\,Q^{k,\bnu^*}_t\right)\,,\qquad
k\in\{x,y,z\}\,,
\label{eqn:opt-controls-P}
\end{equation}
where $h_{1,2}^k(t)$ are provided in \eqref{eqn:mfh2k} and \eqref{eqn:mfh1k}.
\end{theorem}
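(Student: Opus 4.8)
The plan is to run a classical verification argument, taking for the candidate value function the function $G$ on the right-hand side of \eqref{eqn:ansatzH under P}, which by Proposition \ref{prop10} already solves the DPE \eqref{eqn:DPE-zero-ambiguity}. First I would record that the hypothesis $|\mu_{\whk}|<|\alpha_k/a_k|$ is exactly what guarantees that the Riccati-type ODE for each $h_2^k$, and then the linear ODEs for $h_1^k$ and $h_0^{x,y}$, admit solutions that do not blow up on $[0,T]$; consequently the coefficient functions are continuous, hence bounded, on $[0,T]$, $G$ is $C^{1,2}$ with polynomial growth in $(\mcX,x,y,\bbq)$, and the feedback law \eqref{eqn:opt-controls-P} is well posed once the closed-loop inventory is defined. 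Since $\partial_\mcX G\equiv 1>0$, $\whk>0$ and $a_k\ge 0$, the map $\nu^k\mapsto\LL^{\bnu}G$ in \eqref{LvAmbAver P} is concave, so its maximiser is characterised by the first-order condition $-2a_k\,\whk\,\partial_\mcX G\,\nu^k+(\whk\,\partial_\mcX G-\partial_{q^k}G)=0$; substituting the explicit form of $G$ gives precisely \eqref{eqn:opt-controls-P}, which links the ``$\sup$'' in the HJB equation to the stated controls.

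Next I would check that $\bnu^*$ is admissible. Inserting the feedback law into \eqref{eqn:Q dynamics} shows that the closed-loop inventory $\bQ^{\bnu^*}$ satisfies a linear SDE in $\bQ$ with bounded deterministic coefficients, driven only by the compound Poisson client order flow; a Gr\"onwall/moment estimate (using that the order-size laws $F^{k,\pm}$ have finite second moment) then gives $\EE^{\PP}[\sup_{0\le t\le T}|\bQ^{\bnu^*}_t|^2]<\infty$, whence $\EE^{\PP}[\int_0^T(\nu^{k,*}_s)^2\,ds]<\infty$ and $\bnu^*\in\calV$.

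The core of the argument is the two-sided comparison. For an arbitrary $\bnu\in\calV$ I would apply It\^o's formula to $s\mapsto G(s,\mcX^{\bnu}_s,X_s,Y_s,\bQ^{\bnu}_s)$ on $[t,T\wedge\tau_n]$ for a localising sequence $\tau_n\uparrow T$ reducing the Brownian and compensated-jump integrals; the finite-variation part of the decomposition is $(\partial_t+\LL^{\by}+\LL^{\bnu_s})G+\sum_{k,i}\int\Delta^{k,-i}G\,\lambda^{k,i}F^{k,i}(dr)$, which is $\le(\partial_t+\LL^{\by}+\sup_{\bnu}\LL^{\bnu})G+\sum\int\Delta G\,\lambda F=0$ by \eqref{eqn: DPE under P}. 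Taking expectations kills the localised martingale part; letting $n\to\infty$, and using the polynomial growth of $G$ together with the all-order integrability of the geometric Brownian motions $X,Y$, the $L^2$ control on $\bnu$ built into $\calV$, the finite second moments of the order-size laws, and Cauchy--Schwarz on the bilinear terms, I obtain, via the terminal condition \eqref{eqn:termCon P} $=\mcX+\ell(\by,\bbq)$, the inequality $G(t,\mcX,\by,\bbq)\ge\EE^{\PP}_{t,\mcX,\by,\bbq}[\mcX^{\bnu}_T+\ell((X_T,Y_T),\bQ^{\bnu}_T)]=H^{\bnu}(t,\mcX,\by,\bbq)$, so $G\ge H$ after taking the supremum. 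Repeating the computation with $\bnu=\bnu^*$, the finite-variation part vanishes identically because $\bnu^*_s$ realises the pointwise supremum in $\sup_{\bnu}\LL^{\bnu}G$ along its own trajectory; the resulting process is then a true $\PP$-martingale by the moment bound of the admissibility step, so $G(t,\cdot)=\EE^{\PP}[G(T,\cdot)]=H^{\bnu^*}(t,\cdot)\le H(t,\cdot)$. Combining the two inequalities gives $H=G$ and optimality of $\bnu^*$.

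The step I expect to be the real obstacle is the limit $n\to\infty$ in the comparison inequality: one must dominate, uniformly in $n$, the expectations of the bilinear terms $X_tQ^x_t$, $Y_tQ^y_t$, $X_tQ^z_t$ in $G$ and the quadratic-in-control contributions entering $\mcX^{\bnu}_t$. This needs a careful interplay of moment bounds---the all-order integrability of the GBMs $X,Y$, the single $L^2$ bound $\EE^{\PP}[\int_0^T\nu_s^2\,ds]<\infty$ from $\calV$, and the finite second moments of the order-size distributions---and, where $\calV$ alone is not quite enough for the cross terms, either a Cauchy--Schwarz splitting or a mild strengthening of the admissible class closes the gap. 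Everything else---the ODE bookkeeping behind $h_{0,1,2}^k$, the first-order condition, and the Gr\"onwall estimate---is routine.
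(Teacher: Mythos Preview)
Your proposal is correct and in fact considerably more thorough than the paper's own proof. The paper's argument is a one-paragraph verification: it simply asserts that since the candidate \eqref{eqn:ansatzH under P} is a classical solution of the DPE, ``we only need to check that the controls \eqref{eqn:opt-controls-P} are admissible,'' and then devotes the entire proof to that admissibility check---leaving the It\^o/localisation/comparison machinery you spell out entirely implicit.

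Where your route and the paper's genuinely differ is in the admissibility step. You argue that the closed-loop inventory solves a linear SDE with bounded deterministic coefficients and then invoke a Gr\"onwall-type moment estimate. The paper instead solves that linear SDE explicitly via the integrating factor $\pi_t^k=\exp\{\tfrac{1}{a_k}\int_0^t h_2^k(s)\,ds\}$, writes $Q^{k,\bnu^*}_t$ as a deterministic part plus $\int_0^t(\pi_u^k/\pi_t^k)\,(dO^{k,+}_u-dO^{k,-}_u)$, and then uses the It\^o isometry for jumps. The hypothesis $|\mu_{\whk}|<|\alpha_k/a_k|$ enters the paper's computation in a very concrete way: it forces the constant $\upsilon_k=1-\tfrac{a_k}{\alpha_k}\mu_{\whk}$ to lie in $(0,1)$ (when $\mu_{\whk}>0$, and analogously for $\mu_{\whk}<0$), so that the denominator $1-\upsilon_k\,e^{-\mu_{\whk}(T-t)}$ in $h_2^k$ never vanishes and $\pi_t^k$ stays bounded on $[0,T]$. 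Your Gr\"onwall argument reaches the same conclusion once you know $h_2^k$ is bounded, which is exactly the ``no blow-up'' observation you record at the outset; the paper's explicit computation has the virtue of making the role of the inequality visible at the level of constants, while your approach is more portable.
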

For a proof see Appendix \ref{sec:Proof-Verification-P}. 
From the above theorem, one sees that the optimal strategy does not have any cross effects. That is, the speed of execution in one pair is independent of the inventory held in the other two pairs.

As a result of Theorem \ref{thm:VerificationP}, when $Q_0^x = 0$, $Q_0^y = 0$, $Q_0^z\neq 0$, $\mu_x=0$, $\mu_y=0$,  $\lambda^{x,\pm} = 0$ and $\lambda^{x,\pm} = 0$, the ambiguity neutral broker will only trade currency pair $(2,3)$  \cartea{-- recall that in our model, an ambiguity neutral broker is equivalent to a risk-neutral one}. Another observation is that the optimal trading strategy does not depend on mid-exchange rates and is a deterministic function of time if there are no orders from clients. However, below we show that the ambiguity averse broker will always trade the three currency pairs of the triplet regardless of the initial inventory and drift of the mid-exchange rates.

The following proposition shows  the execution strategy as the terminal date approaches and the terminal liquidation penalty is arbitrarily large.

\begin{proposition}
	\label{prop11}
(i) When $\lambda^{k,\pm} = 0$, the optimal trading strategy remains admissible in the limit as the terminal inventory penalty parameter  $\alpha_k\to+\infty$ (for $k\in\{x,y,z\}$). Moreover, near the end of the trading horizon, we have
	\begin{equation}
	\lim_{\alpha_k\to+\infty} \nu_t^{k,*} = (T-t)^{-1}\,Q^{k,\bnu^*}_t + o(T-t)\,,\label{eqn:nuk t to T}
	\end{equation}
which results in complete liquidation of all inventories by $T$.

(ii) When the arrival rate of the orders from clients is positive, i.e., $\lambda^{k,\pm} > 0$, the optimal trading strategy is not admissible in the limit as $\alpha_k\to+\infty$ (for $k\in\{x,y,z\}$).
\end{proposition}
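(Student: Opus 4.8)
The plan is to exploit the decoupling established in Theorem~\ref{thm:VerificationP}: because the feedback controls \eqref{eqn:opt-controls-P} carry no cross terms, it suffices to fix one $k\in\{x,y,z\}$ and study the scalars $h_1^k,h_2^k$ together with the scalar inventory $Q^{k,\bnu^*}$. Recall from the Appendix that $h_2^k$ solves a Riccati ODE (namely \eqref{eqn:mfh2k}) with terminal value $h_2^k(T)=\alpha_k$; the standard substitution $h_2^k=1/w^k$ linearises it, so that $h_2^k$ has a closed form in which $\alpha_k$ enters only through $w^k(T)=1/\alpha_k$, and $h_1^k$ then solves the linear ODE \eqref{eqn:mfh1k} driven by $h_2^k$ with $h_1^k(T)=0$, again explicitly. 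Letting $\alpha_k\to+\infty$ (along $\alpha_k$ large enough that $|\mu_{\whk}|<\alpha_k/a_k$, so Theorem~\ref{thm:VerificationP} applies) we have $w^k(T)\to 0$, the limiting $h_2^k$ develops a simple pole at $T$ with $h_2^k(t)=a_k/(T-t)+O(1)$ as $t\uparrow T$ (the $\mu_{\whk}$--term being lower order near the singularity), and the limiting $h_1^k$ stays bounded with $h_1^k(t)=O(T-t)$ as $t\uparrow T$.

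For part~(i), with $\lambda^{k,\pm}=0$ the inventory \eqref{eqn:Q dynamics} is deterministic and, by \eqref{eqn:opt-controls-P}, solves $\dot Q^{k,\bnu^*}_t=-\tfrac1{2a_k}\bigl(h_1^k(t)+2h_2^k(t)\,Q^{k,\bnu^*}_t\bigr)$. I would integrate this linear ODE with the factor $\Phi(t)=\exp\{\tfrac1{a_k}\int_0^t h_2^k(u)\,du\}$; since $\tfrac1{a_k}\int_0^t h_2^k=-\log(T-t)+O(1)$, one gets $\Phi(t)\sim c/(T-t)\to+\infty$ and
\[
Q^{k,\bnu^*}_t=\Phi(t)^{-1}\Bigl(Q^{k,\bnu^*}_0-\tfrac1{2a_k}\int_0^t\Phi(u)\,h_1^k(u)\,du\Bigr).
\]
As $\Phi\,h_1^k$ is bounded on $[0,T)$, the bracket converges to a finite constant, so $Q^{k,\bnu^*}_t=O(T-t)$ and in particular $Q^{k,\bnu^*}_T=0$, i.e.\ complete liquidation. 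Moreover $h_2^k(t)\,Q^{k,\bnu^*}_t$ stays bounded, so $\nu^{k,*}$ is bounded on $[0,T]$ and hence lies in $\calV$ (the condition \eqref{eqn:admissible strategy 1} is trivially met). Substituting the near--terminal expansions of $h_1^k,h_2^k$ into \eqref{eqn:opt-controls-P} and using $Q^{k,\bnu^*}_t=O(T-t)$ gives the asymptotics \eqref{eqn:nuk t to T}.

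For part~(ii), the closed form for the limiting $h_2^k$ is unchanged, still with the pole $a_k/(T-t)$ at $T$, but $Q^{k,\bnu^*}$ now solves the same linear ODE with an added compound--Poisson source, the client order flow $O^{k,\pm}$ from \eqref{eqn:Q dynamics}. The crux is that the damping coefficient $h_2^k/a_k\sim(T-t)^{-1}$ integrates only to a logarithm, too weak to absorb the noise injected by the order flow. Writing $Q^{k,\bnu^*}_t=(\text{deterministic part})+\Phi(t)^{-1}\!\int_0^t\Phi(u)\,d\widetilde M_u$ with $\widetilde M$ the compensated order--flow martingale and $\sigma_k^2:=\lambda^{k,+}\EE[(\xi_1^{k,+})^2]+\lambda^{k,-}\EE[(\xi_1^{k,-})^2]>0$, the It\^o isometry yields $\EE^{\PP}[(Q^{k,\bnu^*}_t)^2]\ge\sigma_k^2\,\Phi(t)^{-2}\int_0^t\Phi(u)^2\,du\sim\sigma_k^2\,(T-t)$ as $t\uparrow T$. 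Combining this with $(\nu^{k,*}_s)^2\ge\tfrac1{2a_k^2}\,h_2^k(s)^2\,(Q^{k,\bnu^*}_s)^2-\tfrac1{2a_k^2}\,h_1^k(s)^2$ and Tonelli,
\[
\EE^{\PP}\Bigl[\int_0^T(\nu^{k,*}_s)^2\,ds\Bigr]\ \gtrsim\ \int_{T-\varepsilon}^T\frac{\sigma_k^2}{2\,(T-s)}\,ds\ =\ +\infty,
\]
so $\bnu^*\notin\calV$, i.e.\ it is not admissible.

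The step I expect to be the real obstacle is the last estimate in~(ii): proving the sharp lower bound $\EE^{\PP}[(Q^{k,\bnu^*}_t)^2]\gtrsim(T-t)$ near $T$, i.e.\ that near--terminal liquidation cannot damp the Poisson fluctuations faster than the logarithmically divergent rate forced by the pole of $h_2^k$. A secondary nuisance throughout is that the limit $\alpha_k\to+\infty$ is not uniform up to $T$ (the pole forms precisely there), so the asymptotics must be run on a boundary layer $[T-\varepsilon,T]$ separately from the bulk $[0,T-\varepsilon]$.
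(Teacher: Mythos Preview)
Your approach is essentially the paper's: take explicit limits of $h_1^k,h_2^k$ as $\alpha_k\to\infty$, solve the inventory equation via the integrating factor $\pi_t^k=\exp\{\tfrac1{a_k}\int_0^t h_2^k\}$, and for part~(ii) use the It\^o isometry on the compensated order flow to show $\EE\int_0^T(\nu_s^{k,*})^2\,ds=\infty$.

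One small correction: your claim $h_1^k(t)=O(T-t)$ as $t\uparrow T$ is only valid when $\lambda^{k,\pm}=0$ (so $\gamma_-^k=0$); in general the limiting $h_1^k$ satisfies $h_1^k(t)\to 2a_k\gamma_-^k$ as $t\uparrow T$, so it is bounded but not $O(T-t)$. This does not damage your argument---you only use $h_1^k=O(T-t)$ in part~(i), and in part~(ii) you only need $h_1^k$ bounded---but it explains why the paper organises part~(ii) as a case split: when $\gamma_-^k\ne 0$ the deterministic drift contributes a term $\gamma_-^k/(T-s)$ to $\nu_s^{k,*}$ which alone forces divergence, and only the case $\gamma_-^k=0$ genuinely requires the isometry computation. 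Your variance lower bound $\EE[(Q_t^{k,\bnu^*})^2]\ge\mathrm{Var}(Q_t^{k,\bnu^*})$ handles both cases uniformly and is arguably cleaner. The step you flagged as the obstacle is not one: the estimate $\Phi(t)^{-2}\int_0^t\Phi(u)^2\,du\asymp(T-t)$ is immediate from $\Phi(u)\asymp(T-u)^{-1}$, and your boundary--layer concern is moot because both you and the paper pass to the limiting coefficients on $[0,T)$ first and then analyse the resulting singular ODE/SDE directly.
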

For a proof see Appendix \ref{appendix_L}.

\section{Ambiguity Aversion on the Mid-Exchange Rates}\label{sec: ambiguity aversion}

The broker assumes that the  dynamics of the three currency pairs are given by the reference measure $\PP$, see  SDEs \eqref{eqn:dXtGBM2}-\eqref{eqn: dynamics of Z under P}. However, the broker is not fully confident about the reference model for the mid-exchange rates, so she  considers alternative measures to make the model robust to misspecification, see \cite{CDJ}. This ambiguity about model choice, or model uncertainty,  has an effect on the optimal strategy employed by the broker when trading in the three currency pairs. We incorporate this ambiguity about model choice in two steps. First, we characterize alternative measures that describe the mid-exchange rate dynamics, and then we determine how the broker decides between employing the reference measure $\PP$ or one of the alternative measures.

The broker considers candidate measures $\QQ$ that are equivalent to $\PP$ and characterized by the Radon-Nikodym derivative
\begin{equation}
\frac{d\QQ(\bkappa)}{d\PP} = \exp\bigg\{\frac{1}{2}\int_0^{T}\bkappa_u^\intercal\,\brho^{-1}\,\bkappa_u\,du+\int_0^{T}\bkappa_u^\intercal\,\brho^{-1}\,d\bW_u^{\bkappa}\bigg\}\,,\label{eqn:RN derivative}
\end{equation}
where
$\brho=
\begin{bmatrix}
    1       & \rho \\
    \rho       & 1\\
\end{bmatrix}\,,
$
and $\bkappa=((\kappa_t^x ,\kappa_t^y )^\intercal)_{t\in[0,T]}$ is a two-dimensional $\mathcal F$-adapted process.

We denote by $\bmQ$ the class of alternative measures
\[
\bmQ=\bigg\{\mathbb Q(\bkappa)\;|\;\bkappa\text{ is }\mathcal F-\text{adapted and }
\mathbb E^\PP\left[\exp\left\{\tfrac{1}{2}\textstyle\int_0^{T}\bkappa_u^\intercal\,\brho^{-1}\,\bkappa_u\,du\right\}\right]<\infty
\bigg\}\,,
\]
and write
\begin{subequations}
	\begin{align}
	\label{eqn:dXtGBM2 under Q}
dX_t =& X_t\left(\mu_x\,dt + \sigma_x\,\kappa^{x}_t\,dt+ \sigma_x\, dW_t^{x,\bkappa}\right) \,,\\\label{eqn:dYtGBM2 under Q}
dY_t =& Y_t\left( \mu_y\,dt +  \sigma_y\,\kappa^{y}_t\,dt+ \sigma_y\, dW_t^{y,\bkappa}\right)\,,\\\label{eqn:dZtGBM2 under Q}
dZ_t =& Z_t\left(\left(\mu_z + \sigma_x\,\kappa^{x}_t - \sigma_y\,\kappa^{y}_t\right)\,dt + \sigma_x\, dW_t^{x,\bkappa} - \sigma_y\, dW_t^{y,\bkappa}\right)  \,,
	\end{align}
\end{subequations}
where  $W^{x,y,\bkappa}=(W^{x,y,\bkappa}_t)_{t\in[0,T]}$ are $\QQ(\bkappa)$-Brownian motions, and $[W^{x,\bkappa}, W^{y,\bkappa}]_t=\rho\,t$, $\rho\in [\,-1,\,1\,]$.

Next, the broker  penalizes deviations from the reference measure using the relative entropy from $t$ to $T$, i.e.,
\begin{equation}
\begin{split}
\mH_{t,T}(\QQ\,|\,\PP)
=
\mathbb E^{\mathbb Q}_t\left[\log\bigg(\bigg(\frac{d\QQ}{d\PP}\bigg)_T\,\bigg/\,\bigg(\frac{d\QQ}{d\PP}\bigg)_t\bigg)
\right]\,.
\end{split}
\end{equation}
Note that $\mH_{t,T}(\QQ(\bkappa)\,|\,\PP)\geq 0$ and $\mH_{t,T}(\QQ(\bkappa)\,|\,\PP)=0$ if and only if $\bkappa = \boldsymbol{0}$, i.e., if and only if $\QQ(\bkappa)=\PP$ almost surely.

\subsection{Performance criterion and value function}\label{sec: performance criteria and value function}

As before, the broker's  aim is to liquidate the position in the currency pairs by the terminal date $T$ while maximizing expected terminal wealth and considering candidate models that are penalized using relative entropy. The broker's performance criterion is
\begin{equation}\label{eqn: performance criteria}
H^{\bnu}(t, \mcX,\by,\bbq) =
\inf_{\QQ\in\bmQ}
\left\{\EE_{t,\mcX,\by,\bbq}^{\QQ}
\left[
\;
\mcX_T^\bnu+\ell((X_T,Y_T),\bQ^{\bnu}_T)
\;
\right]
+ \tfrac{1}{\varphi}\, \mH_{t,T}(\QQ\,|\,\PP)
\right\}\,,
\end{equation}
where  $\ell(\by, \bbq)$  is as in \eqref{eqn:trmPnLty1} and recall that $\by=(x,\,y)$, $\bbq=(q^x,\,q^y,\,q^z)$.

Here the parameter $\varphi $ is a non-negative constant that represents the broker's  degree of ambiguity aversion. If the broker is confident about the reference measure $\PP$, then the ambiguity aversion parameter $\varphi$ is  small and any deviation from the reference model is very costly. In the extreme   $\varphi\to 0$, the broker is very confident about the reference measure,  so she chooses $\PP$ because the penalty that results from rejecting the reference measure is too high. \cartea{Thus, when $\varphi\to 0$, the broker is ambiguity neutral and the problem is equivalent to solving the value function in \eqref{eqn: value function} discussed above when the broker is risk-neutral.}

On the other hand, if the broker is very ambiguous about the reference model, considering alternative models results in a very small penalty. In the extreme   $\varphi\to \infty$, deviations from the reference model are costless, so  the broker  considers the worst case scenario.

\cartea{One can also assume that the broker is risk-averse and employs the performance criterion
\begin{equation*}
H^{\bnu}(t, \mcX,\by,\bbq) =  \EE_{t,\mcX,\by,\bbq}^{\PP}\bigg[U\left(\mcX_T^\bnu+\ell\left((X_T,Y_T),\bQ^{\bnu}_T\right)\right)\bigg]\,,
\end{equation*}
where, as above, $\by=(x,\,y)$, $\bbq=(q^x,\,q^y,\,q^z)$ and  $\ell(\by, \bbq)$ is as in \eqref{eqn:trmPnLty1}. Here, $U(\,\cdot\,)$ is the broker's  utility function, which is concave. If the utility function is linear,  we obtain the risk-neutral performance criterion  in \eqref{eqn: value function} discussed above.  In general, when $U$ is concave, the broker's problem becomes more difficult to solve. We note, however,  that under certain assumptions, ambiguity aversion and exponential utilities are equivalent, see e.g., \cite{schweizer2010minimal} for details. }

\subsubsection{The dynamic programming equation}
The broker's value function is
\begin{eqnarray}\label{eqn:valueFunAmbAver}
H(t,\mcX,\by,\bbq) &=& \sup_{\bnu\in\calV}\,H^{\bnu}(t, \mcX,\by,\bbq)\,,
\end{eqnarray}
where the set of admissible controls is as in \eqref{eqn:admissible strategy 1}. To solve the optimal control problem in \eqref{eqn:valueFunAmbAver}, the dynamic programming principle holds, and the value function is the unique viscosity solution of the Hamilton-Jacobi-Bellman-Isaacs (HJBI) equation
\begin{subequations}
\begin{equation}\label{eqn:DPEAmbAver}
0=\left(\partial_t + \LL^{\by} + \sup_{\bnu}\LL^{\bnu} + \inf_{\bkappa}\LL^{\bkappa}\right)H
+ \sum_{k\in \{x,y,z\},i=\pm}\int_{\RRR^+}\,\Delta^{k,-i}_{t,\mcX,\by,\bbq,r}H\,\lambda^{k,i}\,F^{k,i}(dr)
\,,
\end{equation}
with terminal condition
\begin{equation}\label{eqn:termConAmbAver}
H(T, \mcX,\by,\bbq) = \mcX + x\,q^x\,\left(1-\alpha_x\,q^x \right)+y\,q^y\,\left(1-\alpha_y\,q^y \right)+x\,q^z\,\left(1-\alpha_z\,q^z \right)\,,
\end{equation}
\label{eqn:HJBI-full}
\end{subequations}
where $\LL^\by$, $\LL^\bnu$, and $\Delta^{k,\pm}_{t,\mcX,\by,\bbq, r}$ are given in \eqref{LxyAmbAver P}--\eqref{Deltak- P},
\begin{equation*}
	\LL^{\bkappa}H = \bkappa^\intercal\,\DD H+\frac{1}{2\,\varphi}\,\bkappa^\intercal\,\brho^{-1}\,\bkappa\,,\qquad\text{and }\qquad
\DD H = (\sigma_x\,x\,\partial_{x}H\textrm{, }\sigma_y\,y\,\partial_{y}H )^\intercal\,.\label{eqn:DAmbAver}
\end{equation*}

In the following proposition we show that the optimal trading speed does not  depend on the cash position.

\begin{proposition}
	\label{prop1}
Write the value function as
\begin{equation}\label{eqn:RemoveWealthAmbAver}
	H(t, \mcX,\by,\bbq) = \mcX + \mrmH(t,\by,\bbq)\,.
\end{equation}
Then, from \eqref{eqn:RemoveWealthAmbAver} and the HJBI \eqref{eqn:HJBI-full}, the function $\mrmH$ satisfies
\begin{subequations}
\label{eqn:DPEAmbiguityAverse}
\begin{equation}\label{eqn:DPERemoveWealthAmbAver}
0 =\,\left(\partial_t + \LL^{\by} + \GG_1 + \GG_2\right)\mrmH
+ \sum_{k\in \{x,y,z\},i=\pm}\int_{\RRR^+}\,\left(i\,\tilde k^i +\Delta^{k,-i}_{t,\mcX,\by,\bbq,r}\mrmH\right)\,\lambda^{k,i}\,F^{k,i}(dr),
\end{equation}
subject to the terminal condition
\begin{equation}\label{eqn:termConRemoveWealthAmbAver}
\mrmH(T, \by,\bbq) = x\,q^x\,\left(1-\alpha_x\,q^x \right)+y\,q^y\,\left(1-\alpha_y\,q^y \right)+x\,q^z\,\left(1-\alpha_z\,q^z \right)\,,
\end{equation}
\label{eqn:RemoveWealth HJBI-full}
\end{subequations}
and the operators $\GG_1$ and $\GG_2$ act as follows:
\begin{subequations}
	\begin{align}
	\GG_1\,\mrmH =&\, \sum_{k\in \{x,y,z\}}\,\frac{(\whk-\partial_{q^k}\mrmH)^2}{4\,a_k\,\whk}\,,
	\label{eqn:G1H}\\
	\GG_2\,\mrmH =&\, -\tfrac{\varphi}{2}\,\bigg(\sigma_x^2\,x^2\,(\partial_x\mrmH)^2+\sigma_y^2\,y^2\,(\partial_y\mrmH)^2+2\,\rho\,\sigma_x\,\sigma_y\,x\,y\,\partial_x\mrmH\,\partial_y\mrmH\bigg)\,.\label{eqn:G2H}
	\end{align}
\end{subequations}

Furthermore, the optimal speed of trading in feedback form is
\begin{equation}\label{eqn:vxyz}
\nu^{k,*} = \tfrac{1}{2\,a_k\,\whk}\,(\whk-\partial_{q^k}\mrmH)\,,
\end{equation}
and the optimal measure in feedback form is
\begin{eqnarray}\label{eqn:kappa}
\bkappa^{*} = -\varphi\,\begin{bmatrix} 1 & \rho\\ \rho & 1
\end{bmatrix}\,\begin{bmatrix}\sigma_x\,x\,\partial_x\mrmH\\\sigma_y\,y\,\partial_y\mrmH\end{bmatrix}.
\end{eqnarray}
\end{proposition}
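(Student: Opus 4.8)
The plan is to substitute the ansatz \eqref{eqn:RemoveWealthAmbAver} into the HJBI equation \eqref{eqn:HJBI-full} and carry out the two inner optimizations, over $\bnu$ and over $\bkappa$, pointwise in the state variables and in closed form.

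\emph{Decoupling the cash variable.} First I would check that the reduction is exact as an identity between differential operators. With $H(t,\mcX,\by,\bbq)=\mcX+\mrmH(t,\by,\bbq)$ one has $\partial_\mcX H=1$ and $\partial_t H=\partial_t\mrmH$, while $\LL^{\by}H=\LL^{\by}\mrmH$ and $\DD H=\DD\mrmH$ because the coefficients of $\LL^{\by}$ and the entries of $\DD$ involve only $x,y$. In each jump operator \eqref{Deltak+ P}--\eqref{Deltak- P} the increment in $\mcX$ enters additively and therefore separates: $\Delta^{k,\pm}_{t,\mcX,\by,\bbq,r}H$ equals a deterministic $\mcX$-increment plus $\Delta^{k,\pm}_{t,\by,\bbq,r}\mrmH$ acting on the inventory argument of $\mrmH$ only. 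The deterministic increments collect into the first summand of the jump integral in \eqref{eqn:DPERemoveWealthAmbAver} and the $\mrmH$-increments into the second, and the terminal condition \eqref{eqn:termConAmbAver} reduces to \eqref{eqn:termConRemoveWealthAmbAver}. Since this is an operator identity, it transfers the (viscosity) solution property from $H$ to $\mrmH$.

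\emph{The two inner optimizations.} Using $\partial_\mcX H=1$ and $\partial_{q^k}H=\partial_{q^k}\mrmH$, the operator \eqref{LvAmbAver P} becomes $\LL^{\bnu}H=\sum_{k\in\{x,y,z\}}\big[-a_k\,\whk\,(\nu^k)^2+(\whk-\partial_{q^k}\mrmH)\,\nu^k\big]$, which decouples across $k$ into strictly concave quadratics since $\whk\in\{x,y\}$ is a positive exchange rate and $a_k>0$; the first-order condition $-2a_k\whk\,\nu^k+(\whk-\partial_{q^k}\mrmH)=0$ gives the feedback control \eqref{eqn:vxyz}, and resubstitution yields $\sup_{\bnu}\LL^{\bnu}H=\sum_k(\whk-\partial_{q^k}\mrmH)^2/(4a_k\whk)=\GG_1\mrmH$. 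For the measure, $\LL^{\bkappa}H=\bkappa^\intercal\DD\mrmH+\tfrac1{2\varphi}\bkappa^\intercal\brho^{-1}\bkappa$ is a strictly convex quadratic in $\bkappa\in\RRR^2$ because $\brho^{-1}$ is positive definite for $|\rho|<1$ and $\varphi>0$; the first-order condition $\DD\mrmH+\tfrac1\varphi\brho^{-1}\bkappa=\bzero$ gives $\bkappa^{*}=-\varphi\,\brho\,\DD\mrmH$, which is \eqref{eqn:kappa}, and resubstitution, using $\brho^\intercal=\brho$, gives $\inf_{\bkappa}\LL^{\bkappa}H=-\tfrac{\varphi}{2}(\DD\mrmH)^\intercal\brho\,\DD\mrmH$; expanding this quadratic form with $\DD\mrmH=(\sigma_x x\,\partial_x\mrmH,\sigma_y y\,\partial_y\mrmH)^\intercal$ reproduces $\GG_2\mrmH$ in \eqref{eqn:G2H}. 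Collecting all terms into \eqref{eqn:DPEAmbAver} gives \eqref{eqn:DPERemoveWealthAmbAver}.

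The algebra here is routine; the only point deserving care is the legitimacy of replacing $\sup_\bnu\LL^\bnu H$ and $\inf_\bkappa\LL^\bkappa H$ by their explicit extremizers inside the equation, i.e.\ that these Hamiltonians are attained and depend continuously on the relevant gradient entries. This is exactly what the coercivity assumptions supply ($a_k>0$ for the concave maximization, $\brho^{-1}\succ0$, equivalently $|\rho|<1$, for the convex minimization); the degenerate cases $a_k=0$ or $|\rho|=1$ collapse the corresponding quadratic and would be handled separately. I do not expect any genuine obstacle beyond this.
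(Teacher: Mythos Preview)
Your proposal is correct and follows essentially the same approach as the paper: substitute the ansatz $H=\mcX+\mrmH$ into the HJBI, compute the pointwise first-order conditions for the concave quadratic in $\bnu$ and the convex quadratic in $\bkappa$, and resubstitute to obtain $\GG_1$, $\GG_2$ and the reduced PIDE. Your treatment is in fact more detailed than the paper's (which is a two-line sketch), particularly in spelling out the separation of the jump terms and the coercivity conditions ($a_k>0$, $|\rho|<1$) that guarantee the extremizers are attained.
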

\begin{proof}
Substitute \eqref{eqn:RemoveWealthAmbAver} into \eqref{eqn:DPEAmbAver} and maximize the term $\LL^{\bnu}H$ to obtain the optimal trading speed \eqref{eqn:vxyz}. As this term is quadratic in $\bnu$ (with negative quadratic coefficient), it is trivial to verify that the supremum $\bnu^*$ satisfies the first order condition (FOC) in \eqref{eqn:vxyz}. Similarly, we obtain \eqref{eqn:kappa} by the FOC of the $\LL^{\bkappa}H$ term in \eqref{eqn:DPEAmbAver}, which  is quadratic in $\bkappa$ with positive coefficients for $(\kappa^x)^2$ and $(\kappa^y)^2$, thus we obtain the minimizer $\bkappa^*$. Then, by substituting \eqref{eqn:RemoveWealthAmbAver}, \eqref{eqn:vxyz}, and \eqref{eqn:kappa} into \eqref{eqn:HJBI-full} results in \eqref{eqn:RemoveWealth HJBI-full}. \qed
\end{proof}

\subsection{Expansion with respect to the ambiguity parameter $\varphi$}

The HJBI \eqref{eqn:HJBI-full} is nonlinear and we cannot obtain a solution in closed-form. We employ
perturbation methods, similar to those used by \cite{Lorig2016Portfolio} and \cite{Fouque2017Portfolio} in portfolio optimization problems,  to approximate the value function with the expansion
	\begin{equation}\label{eqn:Hexpansion}
	H(t, \mcX,\by,\bbq) = \mcX +
	H_0(t,\by,\bbq) +
		\varphi \, H_1(t,\by,\bbq)
	+ \varepsilon(t,\by,\bbq)\,.
	\end{equation}
By construction, we anticipate that $\varepsilon$ is $o(\varphi)$  (see discussion after Proposition \ref{prop4}), and in the following propositions we provide closed-form solutions for the terms $H_0(t,\by,\bbq)$  and $H_1(t,\by,\bbq)$.
\begin{proposition}
	\label{prop2}
	In the limit $\varphi\downarrow 0$, the value function of the ambiguity averse broker $H(t,\mcX,\by,\bbq)\to \mcX + H_0(t,\by,\bbq)$, where
	\begin{equation}\label{eqn:ansatzH0}
	\begin{split}
	H_0(t,\by,\bbq) = x\,(q^x+q^z)&+y\,q^y-h_{00}^x(t)\,x-h_{00}^y(t)\,y-h_{01}^x(t)\,x\,q^x
-h_{01}^y(t)\,y\,q^y\\
&-h_{01}^z(t)\,x\,q^z-h_{02}^x(t)\,x\,(q^x)^2
-h_{02}^y(t)\,y\,(q^y)^2-h_{02}^z(t)\,x\,(q^z)^2\,,
	\end{split}
	\end{equation}
$h_{00}^{x,y}(t)=h_{0}^{x,y}(t)$, $h_{01}^{x,y,z}(t)=h_{1}^{x,y,z}(t)$, $h_{02}^{x,y,z}(t)=h_{2}^{x,y,z}(t)$ are deterministic functions, and $h_{0}^{x,y}(t)$, $h_{1}^{x,y,z}(t)$, $h_{2}^{x,y,z}(t)$ are in the appendix,  see \eqref{eqn:mfh2k}, \eqref{eqn:mfh1k}, \eqref{eqn:mfh0k}.
\end{proposition}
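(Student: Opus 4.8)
The plan is to exploit the perturbative structure set up in \eqref{eqn:Hexpansion} together with the closed-form solution already available from Proposition \ref{prop10}. Substituting \eqref{eqn:RemoveWealthAmbAver} and the expansion $\mrmH = H_0 + \varphi\,H_1 + \varepsilon$ into the reduced HJBI \eqref{eqn:DPEAmbiguityAverse}, one notes that $\varphi$ enters the equation explicitly only through the operator $\GG_2$ of \eqref{eqn:G2H}, which carries the prefactor $-\tfrac{\varphi}{2}$; the operators $\partial_t$, $\LL^{\by}$, $\GG_1$ and the jump integrals are $\varphi$-free. Collecting the $\calO(1)$ terms (equivalently, letting $\varphi\downarrow 0$) gives
\[
0 = \left(\partial_t + \LL^{\by} + \GG_1\right)H_0 + \sum_{k\in\{x,y,z\},\,i=\pm}\int_{\RRR^+}\left(i\,\tilde k^i + \Delta^{k,-i}_{t,\mcX,\by,\bbq,r}H_0\right)\lambda^{k,i}\,F^{k,i}(dr)\,,
\]
subject to the terminal condition \eqref{eqn:termConRemoveWealthAmbAver}. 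Since $\GG_1\,\mrmH = \sum_k (\whk-\partial_{q^k}\mrmH)^2/(4\,a_k\,\whk) = \sup_{\bnu}\LL^{\bnu}(\mcX+\mrmH)$, with the supremum attained at \eqref{eqn:vxyz}, this is precisely the wealth-removed form of the ambiguity-neutral DPE \eqref{eqn: DPE under P}. By Proposition \ref{prop10} that equation is solved by $H(t,\mcX,\by,\bbq)-\mcX$ from \eqref{eqn:ansatzH under P}, so $H_0$ coincides with that expression; matching it monomial-by-monomial against the ansatz \eqref{eqn:ansatzH0} yields $h_{00}^{x,y}=h_0^{x,y}$, $h_{01}^{x,y,z}=h_1^{x,y,z}$, $h_{02}^{x,y,z}=h_2^{x,y,z}$.

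To make the identification self-contained rather than quoting Proposition \ref{prop10} wholesale, I would instead insert the polynomial ansatz \eqref{eqn:ansatzH0} directly into the displayed $\calO(1)$ equation, compute the derivatives and jump differences, and match the coefficients of $1$, $q^k$ and $(q^k)^2$ in each of $x$ and $y$. This produces a triangular system of linear first-order ODEs in $t$: a decoupled equation for each $h_{02}^k$, then a linear equation for each $h_{01}^k$ forced by $h_{02}^k$, then one for $h_{00}^{x,y}$ forced by the $h_{01}$'s, with terminal data read off from \eqref{eqn:termConRemoveWealthAmbAver}. This is term-for-term the same system that determines $h_2^k,h_1^k,h_0^k$ in Proposition \ref{prop10}, so uniqueness for linear ODEs with locally Lipschitz coefficients gives the claimed equalities; admissibility of the resulting strategy and the validity of the ansatz follow from the verification argument behind Theorem \ref{thm:VerificationP}.

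If one wants the limit at the level of the control problem rather than formally within the expansion, I would use the entropic duality identity: for fixed admissible $\bnu$, writing $\Xi^{\bnu} = \mcX_T^{\bnu}+\ell((X_T,Y_T),\bQ_T^{\bnu})$, one has $\inf_{\QQ\in\bmQ}\{\EE^{\QQ}_t[\Xi^{\bnu}] + \tfrac{1}{\varphi}\,\mH_{t,T}(\QQ\,|\,\PP)\} = -\tfrac{1}{\varphi}\log\EE^{\PP}_t[e^{-\varphi\,\Xi^{\bnu}}]$, whose right-hand side decreases to $\EE^{\PP}_t[\Xi^{\bnu}]$ as $\varphi\downarrow 0$; hence $H^{\bnu}(t,\cdot)$ decreases to the ambiguity-neutral criterion \eqref{eqn: performance criteria under P}. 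Since $H^{\bnu}\le\EE^{\PP}_t[\Xi^{\bnu}]$ for every $\varphi$ (take $\QQ=\PP$) and, conversely, $H\ge H^{\bnu^*}$ for the $\varphi=0$-optimal feedback control $\bnu^*$ of Theorem \ref{thm:VerificationP} with $H^{\bnu^*}(t,\cdot)\to\EE^{\PP}_t[\Xi^{\bnu^*}]=\sup_{\bnu}\EE^{\PP}_t[\Xi^{\bnu}]$, a sandwich gives $H\to\sup_{\bnu}\EE^{\PP}_t[\Xi^{\bnu}]=\mcX+H_0$ by Proposition \ref{prop10}.

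The main obstacle is integrability: the dual identity and the $\varphi\downarrow 0$ passage require $\EE^{\PP}_t[e^{-\varphi\,\Xi^{\bnu}}]<\infty$ for small $\varphi$, which is delicate because $\Xi^{\bnu}$ contains products of the correlated geometric Brownian motions $X_T,Y_T$ with the controlled inventories and the sign-indefinite penalty $-\alpha_k\,x\,(q^k)^2$, so $-\varphi\,\Xi^{\bnu}$ can grow like $\varphi\,\alpha_k\,X_T\,(Q_T^k)^2$. I would control this by working within the admissible class $\calV$ and, where the linear-feedback optimizers are used, exploiting that $\bQ^{\bnu^*}$ is then conditionally Gaussian while $X,Y$ are lognormal and independent of the order-flow noise, so the relevant exponential moments are finite for $\varphi$ below an explicit threshold in the spirit of the condition $|\mu_{\whk}|<|\alpha_k/a_k|$ in Theorem \ref{thm:VerificationP}; for the leading-order statement of the proposition it is in any case enough to argue formally within the expansion \eqref{eqn:Hexpansion} as constructed prior to the proposition.
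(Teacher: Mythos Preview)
Your first paragraph is precisely the paper's proof: observe that $\varphi$ enters \eqref{eqn:DPEAmbiguityAverse} only through the prefactor of $\GG_2$, so sending $\varphi\downarrow0$ collapses the reduced HJBI to the ambiguity-neutral DPE \eqref{eqn:DPE-zero-ambiguity}, and then quote Proposition~\ref{prop10}. The paper stops there; your second paragraph (re-deriving the ODE hierarchy by direct substitution of the polynomial ansatz) simply reproduces the content of Appendix~\ref{appendix_I}, so it adds no new argument.

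Your third and fourth paragraphs, by contrast, take a genuinely different route: the entropic duality $-\tfrac{1}{\varphi}\log\EE^{\PP}_t[e^{-\varphi\,\Xi^{\bnu}}]$ combined with a sandwich argument gives convergence of the \emph{value function} itself rather than mere identification of the $\calO(1)$ term in a formal expansion. The paper does not attempt this; it argues entirely at the PDE level. Your approach is more satisfying in that it addresses the limit statement literally, but, as you note, it requires exponential integrability of $\Xi^{\bnu}$ under $\PP$, which is not free here. One caution: you write that under the linear-feedback optimizer $\bQ^{\bnu^*}$ is ``conditionally Gaussian,'' but in the present model the inventory is driven by compound Poisson client order flow (see \eqref{eqn:Q dynamics}), so $\bQ^{\bnu^*}$ is not Gaussian; the exponential-moment control would have to go through the jump structure instead. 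Since the paper treats the expansion \eqref{eqn:Hexpansion} formally throughout (cf.\ the heuristic order-counting after Proposition~\ref{prop4}), your PDE-level argument in the first paragraph is all that is actually needed to match the paper's claim.
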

\begin{proof}
It is straightforward to see from Proposition \ref{prop1}, that as $\varphi\downarrow0$, \eqref{eqn:DPEAmbiguityAverse} reduces to the PDE for the ambiguity neutral case in \eqref{eqn:DPE-zero-ambiguity}, and the result follows from Proposition \ref{prop10}.
\end{proof}

\begin{proposition}
	\label{prop14}
	Let $H(t,\mcX,\by,\bbq)$ be as in the expansion \eqref{eqn:Hexpansion}, then $H_1(t,\by,\bbq)$ satisfies
\begin{equation}\label{eqn:PDE-H1}
    \begin{split}
        0 =(\partial_t + \LL^{\bp})H_1 &- \sum_{k \in \{x,y,z\}} \!\! \tfrac{\whk-\partial_{q^k}H_0}{2\,a_k\,\whk}\,\partial_{q^k}H_1
        \\
        & + \sum_{k\in \{x,y,z\},i=\pm}\,\int_{\RRR^+}\,\Delta^{k,-i}_{t,\mcX,\by,\bbq,r}H_1\,\lambda^{k,i}\,F^{k,i}(dr)
        - \tfrac{1}{2}\,\DD H_0^\intercal\,\brho\,\DD H_0\,,
    \end{split}
\end{equation}
with terminal condition $H_1(T,\by,\bbq) = 0$.
\end{proposition}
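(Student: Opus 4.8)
The plan is a matched-asymptotics argument: substitute the ansatz \eqref{eqn:Hexpansion} into the reduced HJBI equation of Proposition \ref{prop1} and collect the coefficient of $\varphi^1$. After writing $H=\mcX+\mrmH$, recall that $\mrmH$ solves \eqref{eqn:DPERemoveWealthAmbAver}; I substitute $\mrmH(t,\by,\bbq)=H_0(t,\by,\bbq)+\varphi\,H_1(t,\by,\bbq)+\varepsilon(t,\by,\bbq)$ with $\varepsilon=o(\varphi)$ (justified in the discussion following Proposition \ref{prop4}). The operators $\partial_t$, $\LL^{\by}$ and the jump operators $\Delta^{k,-i}_{t,\mcX,\by,\bbq,r}$ are linear, so each distributes over the three summands of the ansatz; the forcing pieces $i\,\tilde k^{i}$ inside the integral are independent of $\mrmH$ and therefore feed only the order-$\varphi^{0}$ balance.

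Next I would Taylor-expand the two nonlinear operators about $H_0$. Since $\partial_{q^k}\mrmH=\partial_{q^k}H_0+\varphi\,\partial_{q^k}H_1+o(\varphi)$,
\[
\frac{\big(\whk-\partial_{q^k}\mrmH\big)^2}{4\,a_k\,\whk}
= \frac{\big(\whk-\partial_{q^k}H_0\big)^2}{4\,a_k\,\whk}
-\varphi\,\frac{\big(\whk-\partial_{q^k}H_0\big)\,\partial_{q^k}H_1}{2\,a_k\,\whk}+o(\varphi)\,,
\]
which produces the linear transport term in \eqref{eqn:PDE-H1}. Because $\GG_2$ already carries a factor $\varphi$, only its leading order contributes at $O(\varphi)$:
\[
\GG_2\,\mrmH=-\tfrac{\varphi}{2}\,\DD\mrmH^{\intercal}\,\brho\,\DD\mrmH
=-\tfrac{\varphi}{2}\,\DD H_0^{\intercal}\,\brho\,\DD H_0+o(\varphi)\,,
\]
which is the source term $-\tfrac12\,\DD H_0^{\intercal}\,\brho\,\DD H_0$. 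Collecting by powers of $\varphi$: at $O(\varphi^{0})$ the equation reduces exactly to the PDE satisfied by $H_0$, i.e.\ the $\varphi\downarrow0$ limit of \eqref{eqn:DPEAmbiguityAverse}, which holds by Proposition \ref{prop2}, so those terms cancel; what survives at $O(\varphi^{1})$ is precisely \eqref{eqn:PDE-H1}, noting $\LL^{\bp}=\LL^{\by}$. For the terminal data, the right-hand side of \eqref{eqn:termConRemoveWealthAmbAver} is $\varphi$-independent and equals $H_0(T,\by,\bbq)$ by \eqref{eqn:ansatzH0}, so matching the $O(\varphi)$ coefficient at $t=T$ gives $H_1(T,\by,\bbq)=0$.

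The computation is essentially formal, so the real content is twofold. First, one must confirm that the order-$\varphi^{0}$ terms cancel exactly against the $H_0$ equation — this needs the leading term of the nonlinear $\GG_1$ to be $\big(\whk-\partial_{q^k}H_0\big)^2/(4\,a_k\,\whk)$ and the $\mrmH$-independent jump pieces $i\,\tilde k^{i}$ to sit at order $0$. Second, one must track signs carefully in the linearization of $\GG_1$ and in the two branches $i=\pm$ of $\Delta^{k,-i}$. The analytic point that $\varepsilon=o(\varphi)$ in an appropriate norm, so that the discarded terms are genuinely negligible, is not required for this statement, which only records the PDE obeyed by the first-order term; it is taken up after Proposition \ref{prop4}.
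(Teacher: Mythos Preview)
Your proposal is correct and follows essentially the same approach as the paper: substitute the expansion into the HJBI, expand to first order in $\varphi$, use that $H_0$ satisfies the zero-ambiguity equation to cancel the order-$\varphi^0$ terms, and read off \eqref{eqn:PDE-H1} from the order-$\varphi^1$ balance. The only cosmetic difference is that you work with the reduced equation \eqref{eqn:DPERemoveWealthAmbAver} for $\mrmH$ whereas the paper substitutes directly into \eqref{eqn:HJBI-full}; your more explicit Taylor expansions of $\GG_1$ and $\GG_2$ are exactly the computations the paper's one-line proof leaves implicit.
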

\begin{proof}
Substitute expansion \eqref{eqn:Hexpansion} into \eqref{eqn:HJBI-full}, expand to first order in $\varphi$, use the fact that $H_0$ satisfies \eqref{eqn:DPE-zero-ambiguity}, and  equate the result to zero to obtain \eqref{eqn:PDE-H1}.
\qed
\end{proof}

Note that given  $H_0$,  the equation in \eqref{eqn:PDE-H1} is a linear PIDE. Thus we use a Feynman-Kac probabilistic representation to  write the solution of $H_1$:
\begin{equation}
\begin{split}
H_1(t,\by,\bbq) = \int_t^T\,\EE_{t,\by,\bbq}\left[\left(- \tfrac{1}{2}\,\DD H_0^\intercal\,\brho\,\DD H_0\right)(u,X_u,Y_u,\mfbQ_u)\right]\,du\,.\label{eqn:H1expectation}
\end{split}
\end{equation}
The term $\DD H_0$ can be computed explicitly from \eqref{eqn:ansatzH0}, and
here $\mfbQ_t = (\mfQ^x_t,\mfQ^y_t,\mfQ^z_t)^\intercal$ are independent auxiliary processes that satisfy the SDEs
\begin{equation}\label{eqn:auxQ dynamics}
	d\mfQ^k_t = \beta_{k}(t,\mfQ^k_t)\,dt - \int_{\RRR}rJ^{k,-}(dr,dt) + \int_{\RRR}rJ^{k,+}(dr,dt)\,,\quad\,k\in\{x,y,z\}\,,
\end{equation}
with the function
\begin{equation}\label{eqn:betak}
	\beta_{k}(t,q) = -\tfrac{1}{2\,a_k}\,\left(h_{01}^k(t)+2\,h_{02}^k(t)\,q\right)\,.\nonumber
\end{equation}

Based on the form of \eqref{eqn:ansatzH0} and \eqref{eqn:H1expectation}, we propose the ansatz
\begin{equation}\label{eqn: ansatzH1}
\begin{split}
H_1(t,\by,\bbq) = &\,H_{11}(t,q^x,q^z)\,x^2 + H_{12}(t,q^y)\,y^2 + H_{13}(t,q^x,q^y,q^z)\,x\,y\,.
\end{split}
\end{equation}
Insert \eqref{eqn: ansatzH1} into \eqref{eqn:PDE-H1}, collect terms proportional to $x^2$, $y^2$, $x\,y$, and  equate each term to zero to obtain expressions for $H_{11}$, $H_{12}$, $H_{13}$.
\begin{proposition}
We have the following representation for $H_{11}$, $H_{12}$, $H_{13}$:
\begin{subequations}
\begin{align}
H_{11}(t,q^x,q^z)&= -\int_t^Te^{(2\,\mu_x+\sigma_x^2)\,(u-t)}\,\tfrac{1}{2}\,\sigma_x^2\;
\EE_{t,q^x,q^z}\left[\,\bigg(\partial_xH_0\,(u,\mfQ^x_u,\mfQ^z_u)\bigg)^2\right]\,du\,,
\label{eqn: solution H11}
\\
H_{12}(t,q^y) &= -\int_t^Te^{(2\,\mu_y+\sigma_y^2)\,(u-t)}\,\tfrac{1}{2}\,\sigma_y^2\;
\EE_{t,q^y}\left[\,\bigg(\partial_yH_0\,(u,\mfQ^y_u)\bigg)^2\right]\,du\,,
\label{eqn: solution H12}
\\
\begin{split}
H_{13}(t,\bbq) &= -\int_t^T e^{(\mu_x+\mu_y+\rho\,\sigma_x\,\sigma_y)\,(u-t)}\\
&\hspace{3em}\times \,\rho\,\sigma_x\,\sigma_y\,\EE_{t,q^x,q^y,q^z}\left[\,\bigg(\partial_xH_0\,(u,\mfQ^x_u,\mfQ^z_u)\bigg)\,\bigg(\partial_yH_0\,(u,\mfQ^y_u)\bigg)\right]\,du\,.
\label{eqn: solution H13}
\end{split}
\end{align}
\end{subequations}
\end{proposition}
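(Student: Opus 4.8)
The plan is to obtain the three formulas directly from the Feynman--Kac representation \eqref{eqn:H1expectation}. That representation is available because \eqref{eqn:PDE-H1} is a \emph{linear} PIDE whose operator $\partial_t+\LL^{\bp}$, augmented by the first-order term $-\sum_k\tfrac{\whk-\partial_{q^k}H_0}{2\,a_k\,\whk}\,\partial_{q^k}$ and the jump term, is exactly the generator of $(X_u,Y_u,\mfbQ_u)$, with $(X,Y)$ the reference-measure diffusion \eqref{eqn:dXtGBM2}--\eqref{eqn:dYtGBM2} and $\mfbQ=(\mfQ^x,\mfQ^y,\mfQ^z)$ the independent auxiliary processes \eqref{eqn:auxQ dynamics}. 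Indeed, from \eqref{eqn:ansatzH0},
\[
\partial_{q^k}H_0=\whk\,\bigl(1-h_{01}^k(t)-2\,h_{02}^k(t)\,q^k\bigr),\qquad k\in\{x,y,z\},
\]
so $-\tfrac{\whk-\partial_{q^k}H_0}{2\,a_k\,\whk}=\beta_k(t,q^k)$ with $\beta_k$ as in \eqref{eqn:betak}; the factor $\whk$ (which equals $x$ when $k=z$) cancels, so the drift is well defined. Hence $H_1$ has the representation \eqref{eqn:H1expectation}, provided the client order-size laws $F^{k,\pm}$ possess finite fourth moments, which guarantees finite fourth moments of the affine-drift, compound-Poisson-driven processes $\mfQ^k_u$ and hence integrability of the integrand below.

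Next I would differentiate \eqref{eqn:ansatzH0}:
\[
\partial_xH_0=(q^x+q^z)-h_{00}^x(t)-h_{01}^x(t)\,q^x-h_{01}^z(t)\,q^z-h_{02}^x(t)\,(q^x)^2-h_{02}^z(t)\,(q^z)^2,
\]
\[
\partial_yH_0=q^y-h_{00}^y(t)-h_{01}^y(t)\,q^y-h_{02}^y(t)\,(q^y)^2,
\]
the key observation being that $\partial_xH_0$ is a function of $(t,q^x,q^z)$ only and $\partial_yH_0$ of $(t,q^y)$ only --- neither depends on $x$ or $y$. Expanding the quadratic form with $\DD H_0=(\sigma_x\,x\,\partial_xH_0,\sigma_y\,y\,\partial_yH_0)^\intercal$ gives
\[
-\tfrac12\,\DD H_0^\intercal\,\brho\,\DD H_0=-\tfrac12\,\sigma_x^2\,x^2(\partial_xH_0)^2-\tfrac12\,\sigma_y^2\,y^2(\partial_yH_0)^2-\rho\,\sigma_x\,\sigma_y\,x\,y\,\partial_xH_0\,\partial_yH_0,
\]
so the integrand of \eqref{eqn:H1expectation} splits into three pieces carrying the factors $x^2$, $y^2$, $x\,y$.

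Substituting this into \eqref{eqn:H1expectation} and using that $(X,Y)$ is independent of $\mfbQ$, each term factorizes into a GBM moment times an expectation over $\mfbQ$. With the standard moments
\[
\EE[X_u^2\mid X_t=x]=x^2\,e^{(2\mu_x+\sigma_x^2)(u-t)},\qquad \EE[Y_u^2\mid Y_t=y]=y^2\,e^{(2\mu_y+\sigma_y^2)(u-t)},
\]
\[
\EE[X_u\,Y_u\mid X_t=x,\,Y_t=y]=x\,y\,e^{(\mu_x+\mu_y+\rho\,\sigma_x\,\sigma_y)(u-t)},
\]
collecting the coefficients of $x^2$, $y^2$, $x\,y$ shows that $H_1$ has exactly the form \eqref{eqn: ansatzH1} (thereby justifying the ansatz a posteriori), with $H_{11}$, $H_{12}$, $H_{13}$ given by \eqref{eqn: solution H11}--\eqref{eqn: solution H13}, where inside the expectations $\partial_xH_0$ is evaluated along $(u,\mfQ^x_u,\mfQ^z_u)$ and $\partial_yH_0$ along $(u,\mfQ^y_u)$. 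The same conclusion can be reached by the route indicated in the text: insert \eqref{eqn: ansatzH1} into \eqref{eqn:PDE-H1} and match powers of $x$, $y$; since $H_{11}$ involves only $(q^x,q^z)$, $H_{12}$ only $q^y$, $H_{13}$ only $(q^x,q^y,q^z)$, and $\partial_xH_0,\partial_yH_0$ are free of $x,y$, the PIDE decouples into three linear equations with zero terminal data, whose Feynman--Kac solutions are \eqref{eqn: solution H11}--\eqref{eqn: solution H13}.

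The calculation poses no conceptual obstacle; the only genuine care points are (i) the moment condition on $F^{k,\pm}$ that legitimizes \eqref{eqn:H1expectation} and the Fubini interchange of the $u$-integral with the expectation --- needed because the source is quartic in the inventory variables --- and (ii) checking that the $q^k$-drift coefficient in \eqref{eqn:PDE-H1} is indeed the generator drift $\beta_k$, which is the cancellation displayed above and is the reason the $k=z$ entry (with $\whk=x$) behaves consistently. A minor remark: although $\mfQ^y$ is independent of $(\mfQ^x,\mfQ^z)$, formula \eqref{eqn: solution H13} keeps the product $\partial_xH_0(u,\mfQ^x_u,\mfQ^z_u)\,\partial_yH_0(u,\mfQ^y_u)$ under one expectation, so no factorization of that inner expectation is asserted.
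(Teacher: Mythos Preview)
Your proof is correct. It differs from the paper's argument in the order of operations: the paper inserts the ansatz \eqref{eqn: ansatzH1} into the PIDE \eqref{eqn:PDE-H1}, matches the $x^2$, $y^2$, $x\,y$ coefficients to obtain three decoupled linear PIDEs (e.g., \eqref{eqn:eqn H11} for $H_{11}$), and then applies Feynman--Kac to each one separately. You instead start from the global Feynman--Kac representation \eqref{eqn:H1expectation}, expand the source term, and exploit the independence of $(X,Y)$ from $\mfbQ$ together with the GBM second moments to pull out the factors $x^2\,e^{(2\mu_x+\sigma_x^2)(u-t)}$, etc., directly---which simultaneously derives the ansatz \eqref{eqn: ansatzH1} and the three formulas. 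Your route is a little more economical (no intermediate PDEs need be written), while the paper's route makes the decoupling mechanism explicit at the PDE level; you in fact sketch the paper's route as your final alternative paragraph, so the two arguments are essentially interchangeable. Your added remarks on the fourth-moment condition for $F^{k,\pm}$ and on the cancellation of $\whk$ in the drift are useful technical checks that the paper leaves implicit.
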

\begin{proof} 
Substitute \eqref{eqn: ansatzH1} into \eqref{eqn:PDE-H1} and equate $x^2$ terms to zero, then
	$H_{11}(t,q^x,q^z)$ satisfies
	\begin{equation}\label{eqn:eqn H11}
	\begin{split}
	0 = \partial_t H_{11} &+ (2\,\mu_x + \sigma_x^2)\,H_{11} - \sum_{k\in \{x,z\}}\,\frac{(x-\partial_{q^k}H_0)}{2\,a_k\,x}\,\partial_{q^k}H_{11}\\
	&+ \sum_{k\in \{x,z\},i=\pm}\,\int_{\RRR^+}\,\Delta^{k,-i}_{t,\mcX,\by,\bbq,r}H_{11}\,\lambda^{k,i}\,F^{k,i}(dr) - \tfrac{1}{2}\,\sigma_x^2\,(\partial_x H_0)^2\,,
	\end{split}
\end{equation}
with terminal condition $H_{11}(T,q^x,q^z) = 0$. Then, use the Feynman-Kac Theorem to show that  \eqref{eqn: solution H11} is a solution to the equation above. Similarly, we equate to zero the terms proportional to   $y^2$ and $x\,y$ and obtain the functions $H_{12}$ and $H_{13}$. \qed
\end{proof}

\begin{proposition}
	\label{prop4}
	Let $H(t,\mcX,\by,\bbq)$ be as in the expansion \eqref{eqn:Hexpansion}, then the residual function $\varepsilon(t,\by,\bbq)$ satisfies the PIDE
\begin{equation}\label{eqn:eqn error}
	0 =(\partial_t+\LL^{\by}+\sum^4_{i=1}\BB_{i})\,\varepsilon + \sum_{k\in \{x,y,z\},i=\pm}\,\int_{\RRR^+}\,\left(\Delta^{k,-i}_{t,\mcX,\by,\bbq,r}\varepsilon\right)\;\lambda^{k,i}\,F^{k,i}(dr)\\
	+ f_{\varepsilon}(t,\by,\bbq)\,,
	\end{equation}
	with terminal condition $\varepsilon(T,\by,\bbq) = 0$,
	and where the operators are defined as
	\begin{subequations}
		\begin{align}
	\BB_{1}\,\varepsilon =&\, \sum_{k \in \{x,y,z\}} \tfrac{1}{4\,a_k\,\whk}(\partial_{q^k}\varepsilon)^2\,,\label{eqn:BB1}\\
	\BB_{2}\,\varepsilon =&\, - \sum_{k \in \{x,y,z\}} \tfrac{1}{2\,a_k\,\whk}\left(\whk-\partial_{q^k}(H_0 + \varphi\,H_1)\right)\,\partial_{q^k}\varepsilon\,,\label{eqn:BB2}\\
	\BB_{3}\,\varepsilon =&\, - \tfrac{1}{2}\,\varphi\,
\left[\DD \varepsilon\right]^\intercal\,\brho\,\DD \varepsilon\,,\label{eqn:BB3}\\
	\BB_{4}\,\varepsilon =&\, -\varphi\,\left[\DD
(H_0 + \varphi\,H_1)\right]^\intercal\,\brho\,\DD \varepsilon\,,
\label{eqn:BB4}\\
	f_{\varepsilon}(t,\by,\bbq) =&\,
\varphi^2 \left\{
\sum_{k \in \{x,y,z\}}\tfrac{1}{4\,a_k\,\whk}\,(\partial_{q^k}H_1)^2
-\DD H_0^\intercal\,\brho\,\DD H_1
- \tfrac{1}{2}\,\varphi\,\DD H_1^\intercal\,\brho\,\DD H_1
\right\}\,.\label{eqn:fepsilon}
	\end{align}
	\end{subequations}
\end{proposition}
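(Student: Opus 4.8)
The plan is to obtain the equation for $\varepsilon$ by a single direct substitution: insert the decomposition $\mrmH = H_0 + \varphi\,H_1 + \varepsilon$ (where $\mrmH$ is as in \eqref{eqn:RemoveWealthAmbAver}, so that $H=\mcX+\mrmH$) into the reduced HJBI \eqref{eqn:DPERemoveWealthAmbAver} of Proposition \ref{prop1}, and then sort the resulting terms by whether they involve $\varepsilon$ and by their order in $\varphi$. Since $\partial_t$, $\LL^{\by}$ and every jump operator $\Delta^{k,-i}_{t,\mcX,\by,\bbq,r}$ are linear, applying them to $\mrmH$ simply gives the sum of their actions on $H_0$, on $\varphi\,H_1$ and on $\varepsilon$, so all the real work sits in the two quadratic Hamiltonian operators $\GG_1$ and $\GG_2$ of \eqref{eqn:G1H}--\eqref{eqn:G2H}.

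First I would expand $\GG_1\mrmH$ by writing $\whk-\partial_{q^k}\mrmH=\bigl(\whk-\partial_{q^k}(H_0+\varphi\,H_1)\bigr)-\partial_{q^k}\varepsilon$ and squaring: the piece purely quadratic in $\partial_{q^k}\varepsilon$ is $\BB_1\varepsilon$ of \eqref{eqn:BB1}, the cross piece is $\BB_2\varepsilon$ of \eqref{eqn:BB2}, and the $\varepsilon$-free remainder $\sum_k(\whk-\partial_{q^k}(H_0+\varphi H_1))^2/(4\,a_k\,\whk)$ I would re-expand in $\varphi$ as $\GG_1H_0-\varphi\sum_k(\whk-\partial_{q^k}H_0)\,\partial_{q^k}H_1/(2\,a_k\,\whk)+\varphi^2\sum_k(\partial_{q^k}H_1)^2/(4\,a_k\,\whk)$. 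Analogously, writing $\DD\mrmH=\DD(H_0+\varphi\,H_1)+\DD\varepsilon$ in $\GG_2\mrmH=-\tfrac{\varphi}{2}[\DD\mrmH]^\intercal\,\brho\,\DD\mrmH$ produces $\BB_3\varepsilon$ of \eqref{eqn:BB3}, $\BB_4\varepsilon$ of \eqref{eqn:BB4}, and the $\varepsilon$-free remainder $-\tfrac{\varphi}{2}\DD H_0^\intercal\,\brho\,\DD H_0-\varphi^2\,\DD H_0^\intercal\,\brho\,\DD H_1-\tfrac{\varphi^3}{2}\DD H_1^\intercal\,\brho\,\DD H_1$. Collecting: the $\varepsilon$-free $O(1)$ terms form the ambiguity-neutral equation solved by $H_0$ (Proposition \ref{prop2}, cf.\ \eqref{eqn:DPE-zero-ambiguity}), hence cancel; the $\varepsilon$-free $O(\varphi)$ terms form $\varphi$ times the $H_1$-equation \eqref{eqn:PDE-H1} of Proposition \ref{prop14} (the $-\tfrac12\DD H_0^\intercal\,\brho\,\DD H_0$ there being supplied by $\GG_2$ at order $\varphi$), hence cancel; what remains is $(\partial_t+\LL^{\by}+\sum_{i=1}^4\BB_i)\varepsilon$ plus the $\varepsilon$-jump term, plus the leftover $\varepsilon$-free terms of orders $\varphi^2$ and $\varphi^3$, which are exactly $f_\varepsilon$ of \eqref{eqn:fepsilon}. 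This is \eqref{eqn:eqn error}. The terminal condition is immediate: $\mrmH(T,\cdot)$ and $H_0(T,\cdot)$ both equal the terminal payoff while $H_1(T,\cdot)=0$, so $\varepsilon(T,\cdot)=0$.

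There is no genuine analytic obstacle here — the argument is entirely algebraic — so the one place to be careful is exactly the bookkeeping just described: the $\varepsilon$-free parts of $\GG_1\mrmH$ and $\GG_2\mrmH$ must be split by powers of $\varphi$ so that precisely the pieces absorbed by the $H_0$- and $H_1$-equations are removed and nothing of order $\varphi^2$ or $\varphi^3$ is lost or double-counted; getting this right is what makes $f_\varepsilon$ come out as in \eqref{eqn:fepsilon}. For the discussion following the proposition it is worth noting that $f_\varepsilon$ is $O(\varphi^2)$ (uniformly on compact sets, since $H_0$ and $H_1$ are available in closed form with smooth, at-most-polynomial dependence on $\by$), which is the structural reason one expects $\varepsilon=o(\varphi)$; note also that \eqref{eqn:eqn error} is itself nonlinear, through $\BB_1$ and $\BB_3$, so it is an exact identity for the residual rather than a linearization.
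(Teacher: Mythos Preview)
Your proposal is correct and follows essentially the same approach as the paper's proof, which simply says to substitute the expansion \eqref{eqn:Hexpansion} into \eqref{eqn:HJBI-full}, collect powers of $\varphi$, and use that $H_0$ satisfies \eqref{eqn:DPE-zero-ambiguity} and $H_1$ satisfies \eqref{eqn:PDE-H1}. Your version just spells out the algebra of expanding the two quadratic operators $\GG_1$ and $\GG_2$ in more detail than the paper does.
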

\begin{proof}
	Substitute expansion \eqref{eqn:Hexpansion} into \eqref{eqn:HJBI-full} and collect powers of $\varphi$, and obtain \eqref{eqn:eqn error} because $H_0$ satisfies \eqref{eqn:DPE-zero-ambiguity} and $H_1$ satisfies \eqref{eqn:PDE-H1}. \qed
\end{proof}

We observe that \eqref{eqn:eqn error} is a semilinear PIDE, and
\begin{equation*}
	\BB_{1}\,\varepsilon = \sup_{\bnu} \LL^{\bnu}\varepsilon\,,\qquad	\BB_{3}\,\varepsilon = \inf_{\bkappa} \LL^{\bkappa}\varepsilon\,,
\end{equation*}
where
\begin{equation*}
	\LL^{\bnu}\varepsilon = \sum_{k \in \{x,y,z\}}\,\bigg[-a_k\,\whk\,(\nu^k)^2 - (\partial_{q^k} \varepsilon)\,\nu^k\bigg]\,,\qquad
	\LL^{\bkappa}\varepsilon = \bkappa^\intercal\,\DD \varepsilon+\frac{1}{2\,\varphi}\,\bkappa^\intercal\,\brho^{-1}\,\bkappa\,,
\end{equation*}
and the extremes are achieved at
\begin{equation}\label{eqn:nu epsilon star}
	\nu^{k,*} = \tfrac{1}{2\,a_k\,\whk}(- \partial_{q^k}\varepsilon)
\qquad\text{and}\qquad\bkappa^{*} = -\varphi\,\begin{bmatrix} 1 & \rho\\ \rho & 1
\end{bmatrix}\,\begin{bmatrix}\sigma_x\,x\,\left(\partial_x\varepsilon\right)\\\sigma_y\,y\,\left(\partial_y\varepsilon\right)\end{bmatrix}\,.
\end{equation}

The PIDE in \eqref{eqn:eqn error} is the HJBI equation of the stochastic control problem:
\begin{eqnarray}\label{eqn:epsilon sup inf}
	\varepsilon(t,\by,\bbq) &=& \sup_{\bnu}\inf_{\bkappa}\EE_{t,\by,\bbq}^{\QQ(\bkappa)}\left[\int_t^T\,g_\varepsilon(\mfbp_u,\bnu_u) + \frac{1}{2\,\varphi}\,(\bkappa_u)^\intercal\,\brho^{-1}\,\bkappa_u + f_\varepsilon(u,\mfbp_u,\mfbq_u)\,du\right]\,,
\end{eqnarray}
where
\begin{eqnarray}\label{eqn:gepsilon}
g_\varepsilon(\bp,\bnu) &=& \sum_{k \in \{x,y,z\}}\,(-a_k)\,\whk\,(\nu^k)^2\,,
\end{eqnarray}
and $\mfbp_u = (\mfX_u,\mfY_u)$, $\mfbq_u = (\mfq^x_u,\mfq^y_u,\mfq^z_u)$ are auxiliary processes that satisfy the SDEs:
\begin{eqnarray}
	d\mfX_t &=& \mfX_t\left(\mu_x\,dt + \sigma_x\,\kappa^{x}_t\,dt+ \sigma_x\, dW_t^{x,\bkappa}\right) + \hmu_x(t,\mfbp_t,\mfbq_t)\,dt \,,\label{eqn:dXtGBM2 under Q epsilon}\\
	d\mfY_t &=& \mfY_t\left( \mu_y\,dt +  \sigma_y\,\kappa^{y}_t\,dt+ \sigma_y\, dW_t^{y,\bkappa}\right) + \hmu_y(t,\mfbp_t,\mfbq_t)\,dt\,,\label{eqn:dYtGBM2 under Q epsilon}\\
	d\mfq^k_t &=& \hbeta_{k}(t,\mfbp_t,\mfbq_t,\nu^k_t)\,dt - \int_{\RRR}rJ^{k,-}(dr,dt) + \int_{\RRR}rJ^{k,+}(dr,dt)\,,\quad\,k\in\{x,y,z\}\,,
\end{eqnarray}
with the functions
\begin{eqnarray*}
H_\varphi &=& H_0 + \varphi\,H_1\,,\\
\hmu_x(t,\by,\bbq) &=& -\,\varphi\,\left(\sigma_x^2\,x^2\,\partial_x H_\varphi + \rho\,\sigma_x\,\sigma_y\,x\,y\,\partial_y H_\varphi\right)\,(t,\by,\bbq)\,,\\
\hmu_y(t,\by,\bbq) &=& -\,\varphi\,\left(\sigma_y^2\,y^2\,\partial_y H_\varphi + \rho\,\sigma_x\,\sigma_y\,x\,y\,\partial_x H_\varphi\right)\,(t,\by,\bbq)\,,\\
\hbeta_{k}(t,\by,\bbq,\nu^k) &=& -\,\tfrac{1}{2\,a_k\,\whk}\left(\whk-\partial_{q^k}H_\varphi(t,\by,\bbq)\right) - \nu^k\,,\label{eqn:hatmuk hatbetak}
\end{eqnarray*}
and initial conditions $X_0$, $Y_0$, $Q^x_0$, $Q^y_0$, $Q^z_0$.

Suppose the optimal processes $(\bnu_t^*, \bkappa_t^*, \mfbp_u^*, \mfbq_u^*)_{t\in[0,T]}$ are admissible, then  \eqref{eqn:epsilon sup inf} is indeed
\begin{eqnarray}\label{eqn:epsilon optimal}
\varepsilon(t,\by,\bbq) &=& \EE_{t,\by,\bbq}^{\QQ(\bkappa^*)}\left[\int_t^T\,g_\varepsilon(\mfbp_u^*,\bnu_u^*) + \frac{1}{2\,\varphi}\,(\bkappa_u^*)^\intercal\,\brho^{-1}\,\bkappa_u^* + f_\varepsilon(u,\mfbp_u^*,\mfbq_u^*)\,du\right].
\end{eqnarray}

In addition, here we assume $\varepsilon$ is $O(\varphi^{1+r})$ for some $r>0$. Then, because of the initial data, we have $(\mfbp_t^*)_{t\in[0,T]}$ and $(\mfbq_t^*)_{t\in[0,T]}$ are $O(1)$. By \eqref{eqn:nu epsilon star}  we have $(\bnu_t^*)_{t\in[0,T]}$ is $O(\varphi^{1+r})$ and $(\bkappa_t^*)_{t\in[0,T]}$ is $O(\varphi^{2+r})$. By \eqref{eqn:gepsilon} and \eqref{eqn:fepsilon},  the first, second, and third terms of the integrand in \eqref{eqn:epsilon optimal} are $O(\varphi^{2+2\,r})$, $O(\varphi^{3+2\,r})$, and $O(\varphi^{2})$, respectively. Thus, the term $\varepsilon$ should be $O(\varphi^{2})$, which is consistent with our assumption that $O(\varphi^{1+r})$ for some $r>0$.

Next, insert \eqref{eqn:Hexpansion} into the optimal trading speeds \eqref{eqn:vxyz} and write
\begin{equation}
	\nu^{k,*} = \tfrac{1}{2\,a_k\,\whk} \left[ (\whk-\partial_{q^k}H_0)- \varphi\,\partial_{q^k}H_1- \partial_{q^k}\varepsilon \right]\,,\label{eqn:vk expansion}
\end{equation}
so the optimal measure that arises from ambiguity aversion \eqref{eqn:kappa} is characterized by
\begin{eqnarray}\label{eqn:kappa expansion}
\quad\bkappa^{*} = -\varphi\,\begin{bmatrix} 1 & \rho\\ \rho & 1
\end{bmatrix}\,\begin{bmatrix}\sigma_x\,x\,\left(\partial_xH_0+\varphi\,\partial_xH_1+\partial_x\varepsilon\right)\\\sigma_y\,y\,\left(\partial_yH_0+\varphi\,\partial_yH_1+\partial_y\varepsilon\right)\end{bmatrix}\,.
\end{eqnarray}

Thus, we  approximate the optimal controls $\bnu^*$ and $\bkappa^*$  with
\begin{equation}\label{eqn:vkaprox}
	\tilde{\nu}^{k,*} =  \tfrac{1}{2\,a_k\,\whk} \left[ (\whk-\partial_{q^k}H_0)- \varphi\,\partial_{q^k}H_1\right]\quad\text{and}\quad \tilde \bkappa^{*} = -\varphi\,\begin{bmatrix} 1 & \rho\\ \rho & 1
\end{bmatrix}\,\begin{bmatrix}\sigma_x\,x\,\left(\partial_xH_0+\varphi\,\partial_xH_1\right)\\\sigma_y\,y\,\left(\partial_yH_0+\varphi\,\partial_yH_1\right)\end{bmatrix}\,.
\end{equation}

We observe that the speeds of trading in \eqref{eqn:vkaprox} exhibit cross effects. The coefficient of ambiguity aversion parameter $\varphi$ depends on the price and inventory of each pair in the triplet. In the limit $\varphi\downarrow0$  we recover the ambiguity neutral trading speeds \eqref{eqn:opt-controls-P}.

The proposition below provides  trading speeds when the strategy in \eqref{eqn:vkaprox} approaches  the terminal date,  the  liquidation penalty is very large, and the broker does not fill orders from her pool of clients. In such a case, the strategy consists of TWAP and a term that arises from ambiguity aversion that shows the explicit cross-dependence between the liquidation speed in a pair and the inventory position in the three currency pairs.

\begin{proposition}
	\label{prop6}
	If the broker does not fill any orders from her clients, i.e., $\lambda^{k,\pm}=0$, then the  trading strategy in \eqref{eqn:vkaprox} remains admissible in the limit $\alpha_k\to+\infty$ (for $k\in\{x,y,z\}$). Moreover, close to the end of the trading horizon, the trading speed can be expressed as
	\begin{equation}\label{eqn: tilde nu k}
	\lim_{\substack{\alpha_k\to +\infty\,,\\\,k\in\{x,y,z\}}}\tilde{\nu}_t^{k,*} = \frac{Q_t^{k,\tilde \bnu^*}}{T-t} \left(1+ \varphi\,\sigma_{\whk}\,
C_{\whk}(X_t,Y_t,Q_t^{x,\tilde \bnu^*},Q_t^{y,\tilde \bnu^*},Q_t^{z,\tilde \bnu^*}) \right)+ o(T-t)\,,
	\end{equation}
	where
	\begin{equation}\label{eqn:Ck fcn}
	\begin{split}
	C_{x}(x,y,q^x,q^y,q^z) =&\, \sigma_x\,\left[a_x\,x\,(q^x)^2 + a_z\,x\,\,(q^z)^2\right] + \rho\,\sigma_y\,a_y\,y\,(q^y)^2\,,\\
	C_{y}(x,y,q^x,q^y,q^z) =&\, \rho\,\sigma_x\,\left[a_x\,x\,(q^x)^2 + a_z\,x\,\,(q^z)^2\right] + \sigma_y\,a_y\,y\,(q^y)^2\,,
	\end{split}	
	\end{equation}
	and $C_x+C_y\geq 0$ for any $x$, $y$, $q^x$, $q^y$, $q^z$, and recall that $\whk = k\,1_{\{k\in\{x,y\}\}} + x\,1_{\{k=z\}}$.
\end{proposition}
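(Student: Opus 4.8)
The plan is to insert the explicit forms of $H_0$ (Proposition \ref{prop2}) and of $H_1$ (the ansatz \eqref{eqn: ansatzH1} together with the Feynman--Kac formulas \eqref{eqn: solution H11}--\eqref{eqn: solution H13}) into the approximate feedback control \eqref{eqn:vkaprox}, and then carry out a two-stage asymptotic analysis: first the terminal-penalty limit $\alpha_k\to+\infty$, then the short-horizon limit $t\to T$. I would first handle the zeroth-order piece. Differentiating \eqref{eqn:ansatzH0} gives $\whk-\partial_{q^k}H_0=\whk\,(h_{01}^k(t)+2\,h_{02}^k(t)\,q^k)$ for each $k\in\{x,y,z\}$ (note $\partial_x H_0$ depends only on $(t,q^x,q^z)$ and $\partial_y H_0$ only on $(t,q^y)$, consistent with the arguments used in \eqref{eqn: solution H11}--\eqref{eqn: solution H13}), so the $\varphi^0$-part of $\tilde\nu^{k,*}$ is exactly the ambiguity-neutral control $\tfrac{1}{2a_k}(h_1^k(t)+2h_2^k(t)\,Q_t^k)$ of \eqref{eqn:opt-controls-P}. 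By Proposition \ref{prop11}(i) this part is admissible as $\alpha_k\to+\infty$ and equals $(T-t)^{-1}Q_t^{k,\tilde\bnu^*}+o(T-t)$ near $T$; in passing this pins down the coefficient asymptotics we reuse below, namely $h_{02}^k(t)/a_k\to(T-t)^{-1}$ and $h_{01}^k(t),h_{00}^k(t)\to 0$ as $\alpha_k\to+\infty$ (indeed $h_{00}^k(T)=h_{01}^k(T)=0$, $h_{02}^k(T)=\alpha_k$ from the terminal condition), so $h_{0i}^k(t)=o(T-t)$ for $i=0,1$ in the limit.

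Next I would treat the $\varphi^1$-correction $-\tfrac{\varphi}{2a_k\whk}\partial_{q^k}H_1$. From the ansatz \eqref{eqn: ansatzH1}, e.g. for $k=x$, $\partial_{q^x}H_1=x^2\,\partial_{q^x}H_{11}+x\,y\,\partial_{q^x}H_{13}$, so $-\tfrac{\varphi}{2a_x\whk}\partial_{q^x}H_1=-\tfrac{\varphi}{2a_x}\big(x\,\partial_{q^x}H_{11}+y\,\partial_{q^x}H_{13}\big)$, with the analogous decompositions for $k=y$ (via $H_{12},H_{13}$) and $k=z$ (via $H_{11},H_{13}$). Because $\lambda^{k,\pm}=0$, the auxiliary inventories $\mfbQ$ in \eqref{eqn:auxQ dynamics} reduce to the deterministic linear flow with drift $\beta_k$ of \eqref{eqn:betak}, so the expectations in \eqref{eqn: solution H11}--\eqref{eqn: solution H13} are ordinary integrals and $\partial_{q^k}$ may be pushed through by differentiating the flow $u\mapsto\mfQ^k_u$ in its initial value. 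Using the first step, as $\alpha_k\to+\infty$ and near $T$ the flow satisfies $\mfQ^k_u=q^k\,\tfrac{T-u}{T-t}\,(1+o(1))$, whence $\partial_x H_0(u,\mfQ^x_u,\mfQ^z_u)$ and $\partial_y H_0(u,\mfQ^y_u)$ are each $O(T-u)$ with leading coefficients read off from the $h_{0i}^k$ asymptotics; substituting these into \eqref{eqn: solution H11}--\eqref{eqn: solution H13}, performing the now-deterministic $u$-integrals over $[t,T]$, and differentiating in $q^k$, the correction collapses to $\varphi\,\sigma_{\whk}\,C_{\whk}(X_t,Y_t,Q_t^{x,\tilde\bnu^*},Q_t^{y,\tilde\bnu^*},Q_t^{z,\tilde\bnu^*})\,\tfrac{Q_t^{k,\tilde\bnu^*}}{T-t}+o(T-t)$, with $C_x,C_y$ as in \eqref{eqn:Ck fcn}; in particular the $H_{11}$ and $H_{13}$ contributions generate the $\sigma_x\,a_x$, $\sigma_x\,a_z$ terms, and the cross term $\partial_x H_0\,\partial_y H_0$ inside $H_{13}$ generates the $\rho$-weighted pieces. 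Adding the zeroth- and first-order parts yields \eqref{eqn: tilde nu k}. The sign claim $C_x+C_y\ge 0$ then follows by factoring: $C_x+C_y=(1+\rho)\big\{\sigma_x[a_x\,x\,(q^x)^2+a_z\,x\,(q^z)^2]+\sigma_y\,a_y\,y\,(q^y)^2\big\}\ge 0$ since $\rho\in[-1,1]$, $x,y>0$, $a_k\ge0$, $\sigma_k>0$.

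Admissibility in the limit is then immediate: along the controlled path the inventories $Q_t^{k,\tilde\bnu^*}$ solve a linear ODE driving them to $0$ like $(T-t)$ (again because $\lambda^{k,\pm}=0$), so by \eqref{eqn: tilde nu k} the speed $\tilde\nu_t^{k,*}=O\big((T-t)^{-1}Q_t^{k,\tilde\bnu^*}\big)=O(1)$ near $T$, the factor $1+\varphi\,\sigma_{\whk}C_{\whk}$ being bounded on the (bounded) range of the state near $T$; hence $\EE[\int_0^T(\tilde\nu_t^{k,*})^2\,dt]<\infty$ and $\tilde\bnu^*\in\calV$ for every $\alpha_k$, and this persists as $\alpha_k\to+\infty$, exactly as in Proposition \ref{prop11}(i).

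The main obstacle is the ordering of the two limits in the second step: $h_{02}^k(u)\sim a_k/(T-u)$ blows up while $\mfQ^k_u\sim q^k(T-u)/(T-t)$ vanishes, so one must track the products $h_{02}^k(\mfQ^k_u)^2$, the drift $\beta_k$, and their $q^k$-derivatives to the correct power of $(T-u)$ before integrating, and verify that the remainder terms — including the residual $\varepsilon$ dropped in passing from $\nu^{k,*}$ to $\tilde\nu^{k,*}$, which is $O(\varphi^2)$ by the discussion after Proposition \ref{prop4} — are genuinely subleading in the $(T-t)$-expansion and do not interfere with the $\varphi$-linear term isolated in \eqref{eqn: tilde nu k}.
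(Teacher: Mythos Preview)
Your approach is essentially the same as the paper's: decompose $\tilde\nu^{k,*}$ into its $\varphi^0$- and $\varphi^1$-pieces, use Proposition~\ref{prop11} for the first, and for the second exploit that when $\lambda^{k,\pm}=0$ the auxiliary processes $\mfQ^k$ in \eqref{eqn:auxQ dynamics} are deterministic, evaluate the Feynman--Kac integrals \eqref{eqn: solution H11}--\eqref{eqn: solution H13}, differentiate in $q^k$, and take the two limits. One simplification you miss is that under $\lambda^{k,\pm}=0$ one has $\gamma_-^k=\delta^k=\psi^k=0$, so $h_{01}^k\equiv 0$ and $h_{00}^k\equiv 0$ \emph{identically} (not merely in the limit $\alpha_k\to\infty$); the paper uses this to write $H_0$ directly in the reduced form and to solve the flow exactly as $\mfQ_u^k=\tfrac{a_k+\alpha_k(T-u)}{a_k+\alpha_k(T-t)}\,q^k$, which makes the subsequent integration and limits cleaner than tracking $o(1)$-remainders. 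On the other hand, you supply two pieces the paper's proof leaves implicit: the factorisation $C_x+C_y=(1+\rho)\{\sigma_x[a_x x(q^x)^2+a_z x(q^z)^2]+\sigma_y a_y y(q^y)^2\}\ge 0$, and the admissibility check via $Q_t^{k,\tilde\bnu^*}=O(T-t)$.
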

For a proof see  Appendix \ref{appendix_G}.

\section{Performance of strategy}\label{sec: performance of strategy}
In this section we employ simulations to show the performance of the strategy, where we employ the trading speed in \eqref{eqn:vkaprox}. We choose the following triplet of currency pairs: (NZD, USD), (AUD, USD), (NZD, AUD). In our notation, USD is currency 1, NZD is currency 2, and AUD is currency 3. Therefore, we denote the mid-exchange rate for the three pairs by $X_t$, $Y_t$, and $Z_t$ respectively. Historically, in this triplet, the pair (AUD, USD) is the most liquid, followed by (NZD, USD), and the least liquid pair of the triplet is  (NZD, AUD).

The broker  trades in lots of currency pairs. Each lot  consists of  $10^6$ currency pairs. The initial lot position is $Q^x_0 = 0\,, Q^y_0 = 0\,, Q^z_0 = 200$, the trading window is $T=1$ hour and we run 10,000 simulations.

\noindent\textbf{Statistical measure.} We assume that under the (true) statistical measure, denoted by $\tilde \PP$, the dynamics of the pairs $x$, $y$, $z$ satisfy the SDEs
\begin{eqnarray}\label{eqn:dXtGBM2 true}
	dX_t &=& \tilde \mu_x\,X_t\,dt+\tilde\sigma_x\,X_t\,d\widetilde W_t^x\,, \\ \label{eqn:dYtGBM2 true}
	dY_t &=& \tilde\mu_y\,Y_t\,dt+\tilde\sigma_y\,Y_t\,d\widetilde W_t^y\,,\\\label{eqn:dZtGBM2 true}
	dZ_t &=& \tilde \mu_z\,Z_t \,dt + \tilde \sigma_x\,Z_t\,d\widetilde W_t^x - \tilde \sigma_y\,Z_t\,d\widetilde W_t^y\,,
\end{eqnarray}
where, by no-arbitrage,
\begin{equation}\label{eqn: params of dZ in P true}
\tilde \mu_z = \tilde \mu_x - \tilde \mu_y + \tilde \sigma_y^2 - \tilde \rho\,\tilde \sigma_x\,\tilde \sigma_y\,,
\end{equation}
and where $\widetilde W^x=(\widetilde W^x_t)_{t\in[0,T]}$ and $\widetilde W^y=(\widetilde W^y_t)_{t\in[0,T]}$ are $\tilde \PP$-Brownian motions. Moreover, $d[\widetilde W^x, \widetilde W^y]_t=\tilde\rho\,dt$. Here,  $\tilde\mu_x$, $\tilde\mu_y$, $\tilde \rho$, $\tilde\sigma_x>0$, $\tilde\sigma_y>0$ are constants, which are not necessarily the same as those above in the reference model.

\noindent\textbf{Parameter estimates and reference measure.} We use FX data from  HistData.com  to obtain the parameters of the mid-exchange rate dynamics under statistical measure: $\tilde\mu_k$, $\tilde\sigma_k$, $k=\{x,\,y,\,z\}$, and $\tilde\rho$. We employ data between the trading hours 00:00 am to 13:59 pm Easter Standard time (without daylight savings adjustment), on the 8th of September, 2016. The first row of Table \ref{tab: parameter estimates} shows the maximum likelihood estimates of parameters used to simulate mid-exchange rates under the statistical measure $\tilde \PP$.    The second row of the table shows the parameters used by the broker under the reference measure $\PP$. These parameters are obtained by assuming that $\mu_k = 0$, $\sigma_k = \tilde\sigma_k$, $k=\{x,\,y\}$, $\rho = \tilde\rho$, and the parameters $\mu_z$, $\sigma_z$ are determined by the no-arbitrage condition \eqref{eqn: Z in terms of Y and X}. The time unit is hours.

\begin{table}
\scriptsize
\begin{tabular}{|c|c|c|c|c|c|c|c|c|c|c|}%
  \hline
   & $\sigma_x \times 10 ^{-3}$ & $\sigma_y\times 10 ^{-3}$ & $\sigma_z\times 10 ^{-3}$ & $\rho$ & $\mu_x\times 10^{-4}$ & $\mu_y\times 10^{-4}$ & $\mu_z\times 10^{-4}$ & $X_0^{NZD,USD}$ & $Y_0^{AUD,USD}$ & $Z^{NZD,AUD}_0$ \\  \hline\hline
   $\tilde \PP$ & $1.70 $ & $1.56 $ & $1.02 $ & 0.78 & -6.491659& -3.155159 & -3.299338 & 0.7459 & 0.7678  & 0.9715\\
   $ \PP$ & $1.70 $ & $1.56 $ & $1.08 $ & 0.78 & 0& 0 &  0.003650 & 0.7459 & 0.7678  & 0.9715\\
  \hline
\end{tabular}\caption{Parameter estimates and reference measure}\label{tab: parameter estimates}
\end{table}

\noindent\textbf{Other model parameters.} We do not have access to limit order book data and market order activity. These data are useful to estimate the impact of orders on exchange rates, and to estimate the arrival rate and size of liquidity taking orders -- see \cite{MArtinSamFX} for a recent study of order flow in FX markets. Thus, for illustrative purposes we assume that the temporary impact on the mid-exchange rate for the three pairs is
\begin{eqnarray*}
a_x = 5\times 10^{-8}\,,\quad a_y = 1\times 10^{-8}\,,\quad a_z = 1\times 10^{-7}\,.
\end{eqnarray*}
Here we assume that the most liquid pair is that with the deepest limit order book (i.e., everything else being equal, it shows the smallest temporary price impact), followed by the second most liquid,  and finally, the illiquid pair shows the shallowest book, i.e., highest temporary price impact.

Brokerage service fees are given by
			\begin{equation*}
				c_x^{\pm} = \tfrac{1}{2}\,a_x\,,\quad c_y^{\pm} = \tfrac{1}{2}\,a_y\,,\quad c_z^{\pm} = \tfrac{1}{2}\,a_z\,,
			\end{equation*}
and the terminal liquidation penalty parameters are $\alpha_k=a_k\times 10^{6}$.

We assume that the sizes of clients' orders are exponentially distributed with mean $\theta^\pm$. The arrival rates and expected sizes of the clients' orders are given by
			\begin{equation*}
				\lambda^{x,\pm} = 60\,,\quad \lambda^{y,\pm} = 90\,,\quad \lambda^{z,\pm} = 6\,;\quad \theta^{x,\pm} = 2\,,\quad \theta^{y,\pm} = 1\,,\quad \theta^{z,\pm} = 10\,.
			\end{equation*}

\subsection{Behavior of strategy}

\begin{figure}[]
\begin{center}
\includegraphics[scale=0.33]{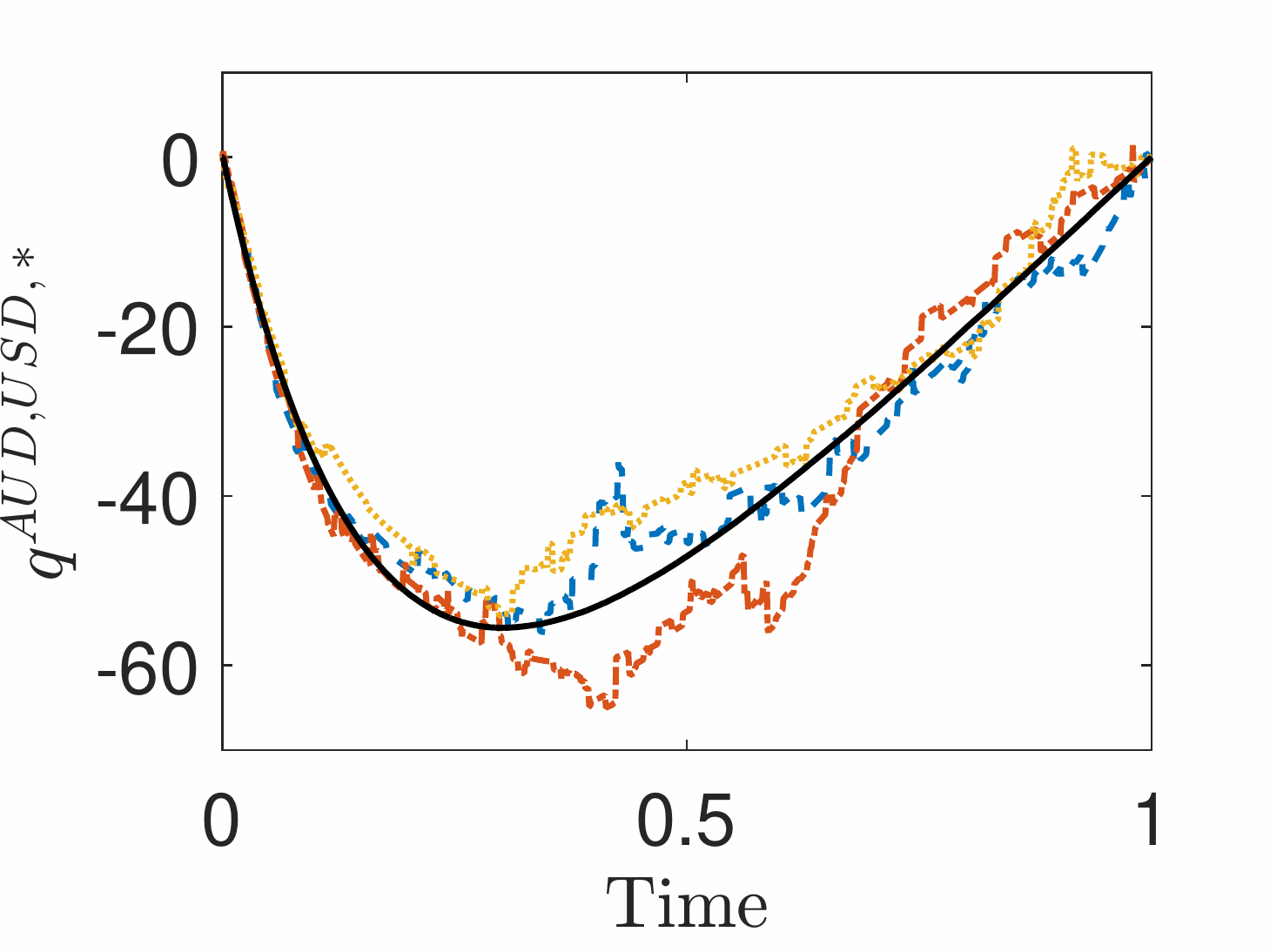}
\includegraphics[scale=0.33]{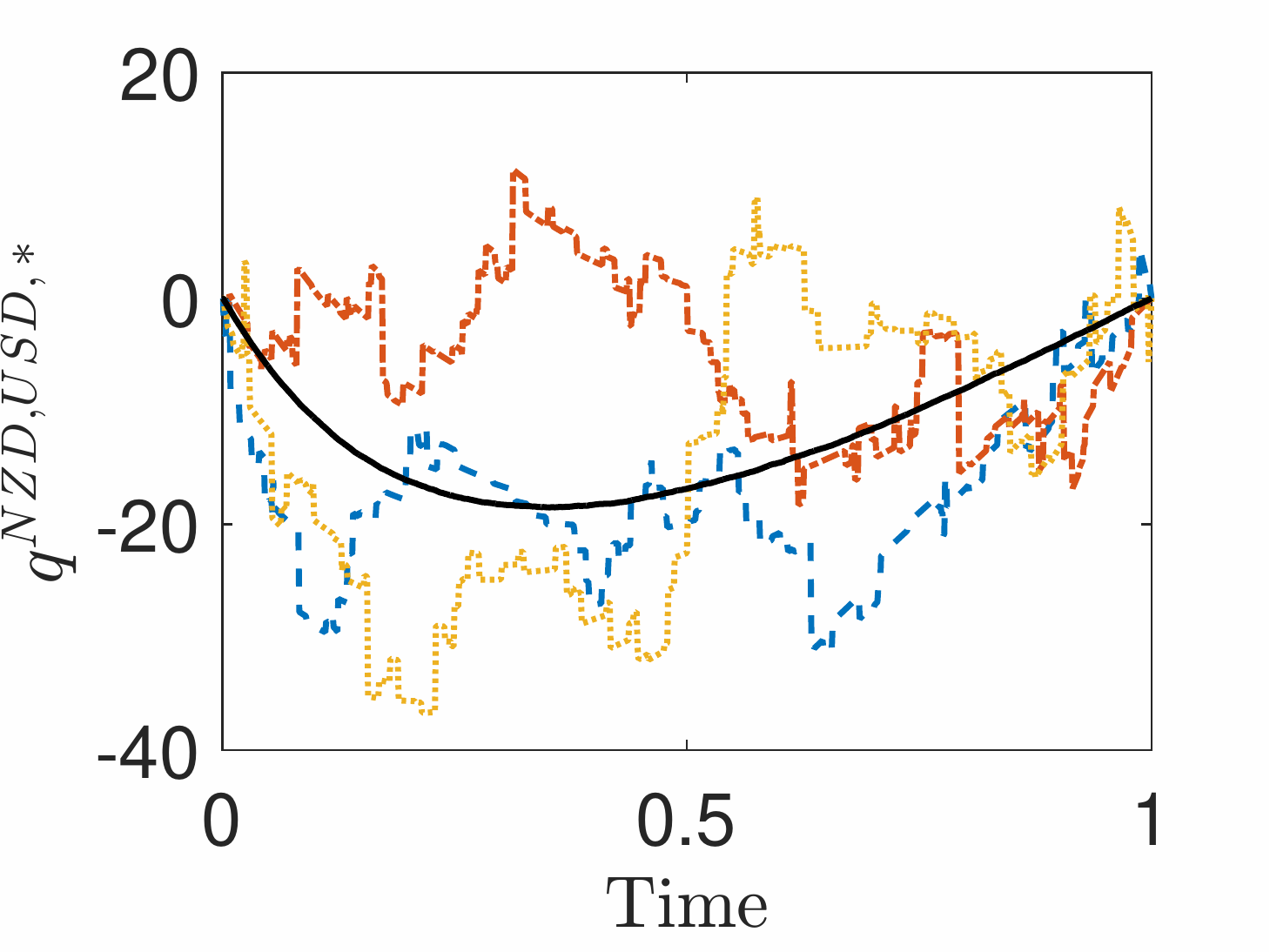}
\includegraphics[scale=0.33]{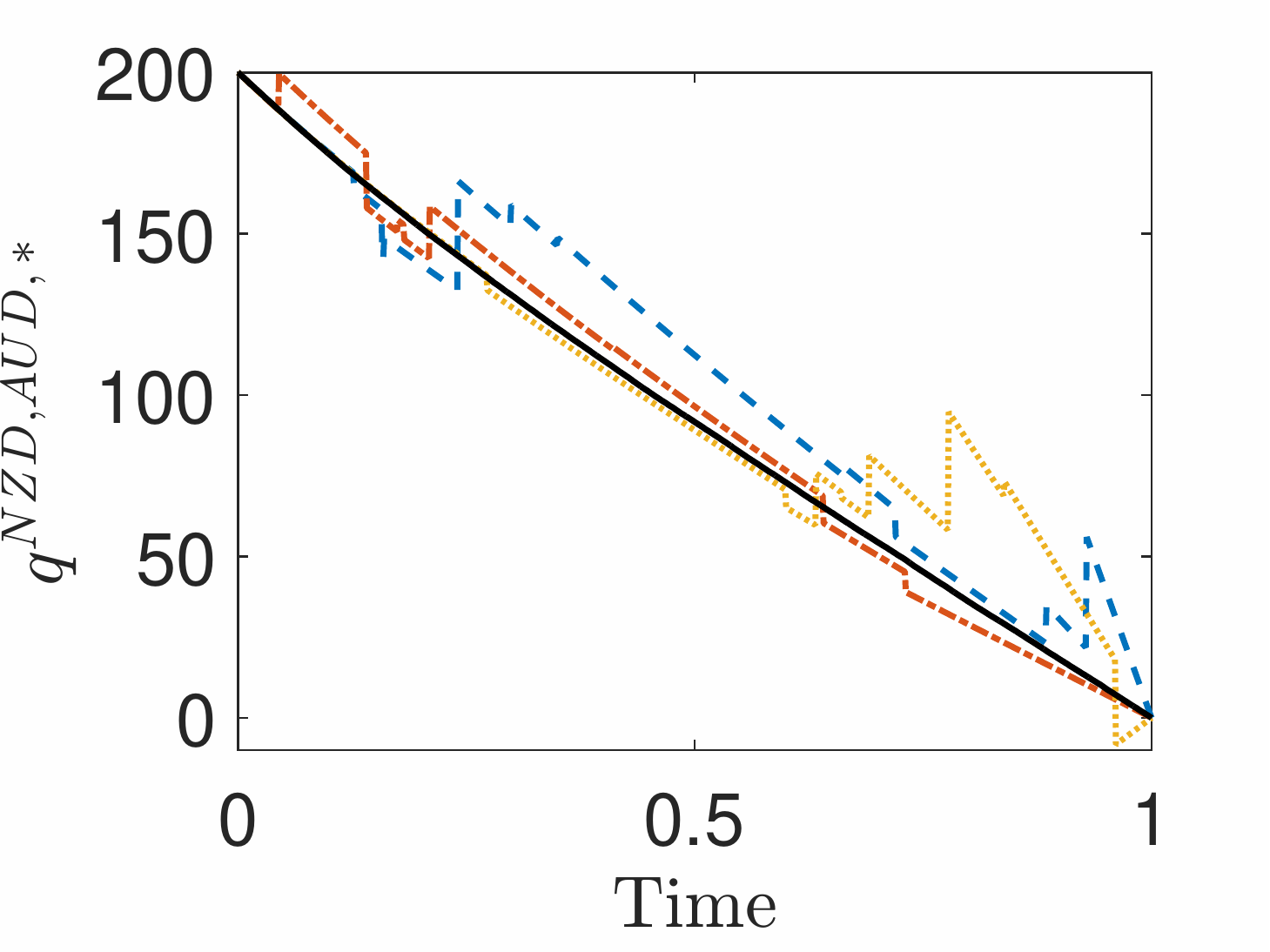}
\end{center}
\caption{Inventory paths for currency pairs, $\varphi = 0.1$. The solid line is the mean of all simulated  paths.}\label{fig: inventory}
\end{figure}

Figure \ref{fig: inventory} shows three inventory paths for each pair  and shows the mean inventory path (solid black line)  for the 10,000 simulations for each pair when the broker's degree of ambiguity aversion is $\varphi = 0.1$.   In the figure, the first picture corresponds to the very liquid pair (AUD, USD), followed by the picture for the liquid pair (NZD,USD), and the last picture is for the illiquid pair. We first discuss the mean inventory paths  and then discuss individual paths.

At the beginning of the trading window, the mean inventory paths  for the very liquid and liquid pairs show how the strategy builds a short position in both pairs while the position in the illiquid pair is gradually liquidated. The initial short positions in the very liquid and illiquid pairs are built to compensate for the broker's initial long position in the illiquid pair. At some point the (mean) strategy reverses the direction of trading in the two liquid pairs because the short positions must be closed by the terminal time. As expected, the strategy relies more on the very liquid  currency pair because trades in this pair have the smallest adverse impact in the quote currency received by the broker.

The sample inventory paths  in Figure \ref{fig: inventory}  show how individual simulations deviate from the mean behavior of the strategy as a result from the broker filling clients' orders.  For example,  the red dash-dot line in the middle picture shows that early on, the strategy  changes from selling the pair (NZD,USD) to purchasing it -- instead of building a short position in the pair (NZD, USD) as shown in the mean path. For that path we also see that the inventory $q^{NZD,USD}$ crosses zero a number of times before reaching the terminal date.  This optimal behavior becomes clear by looking at the speeds of trading and the broker's activity with her pool of clients, both of which we discuss below.

\begin{figure}[]
\begin{center}
\includegraphics[scale=0.33]{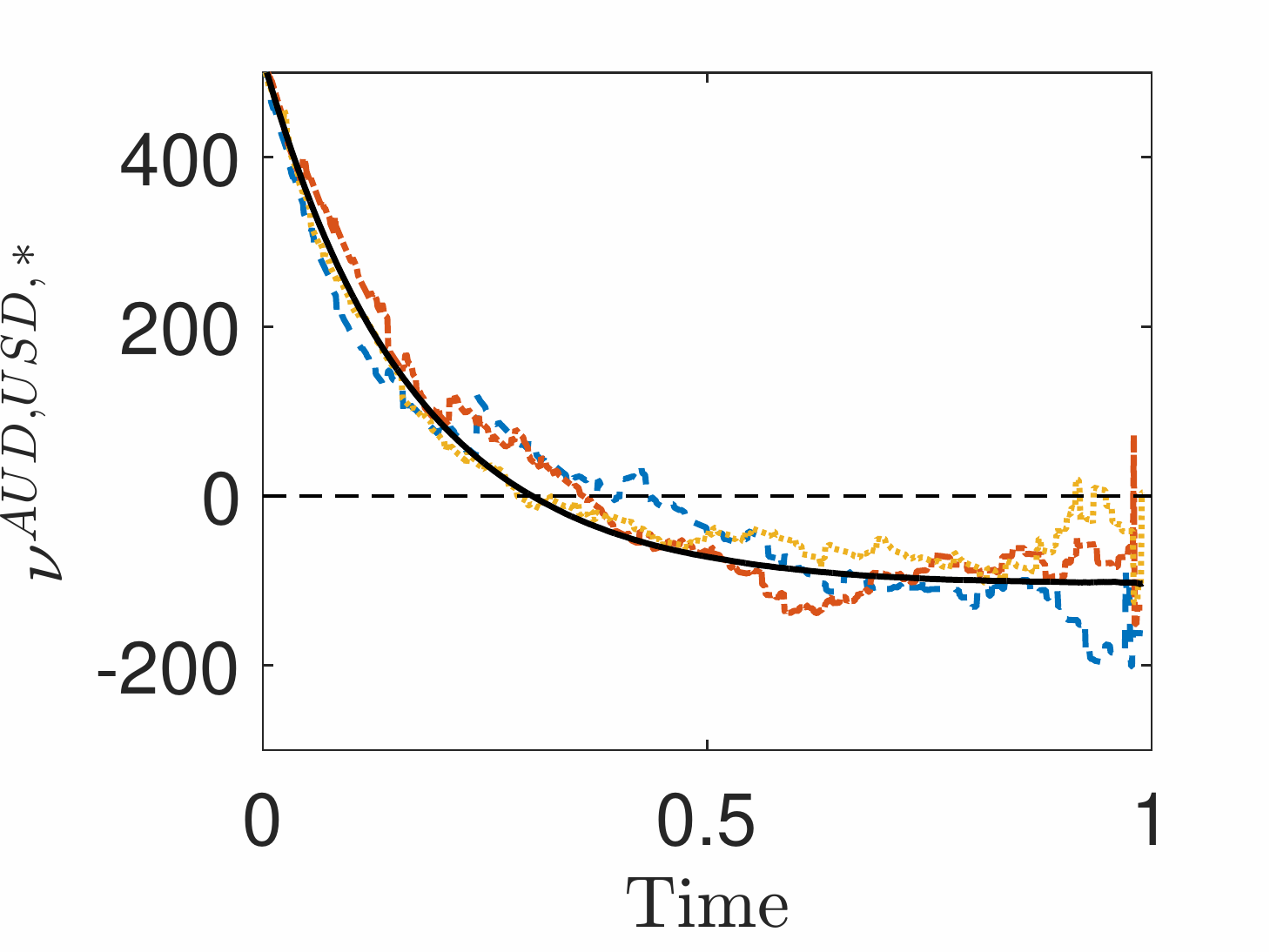}
\includegraphics[scale=0.33]{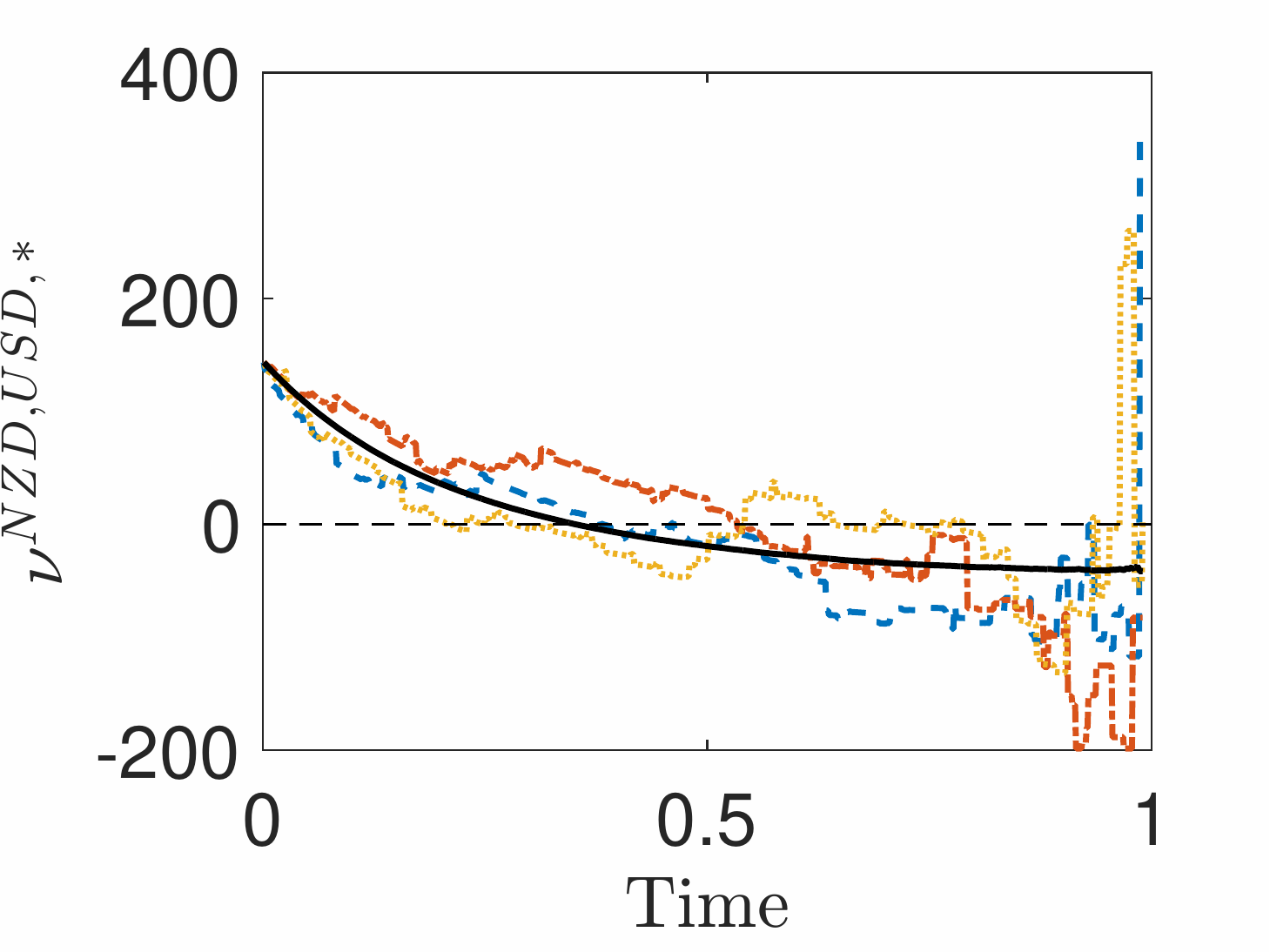}
\includegraphics[scale=0.33]{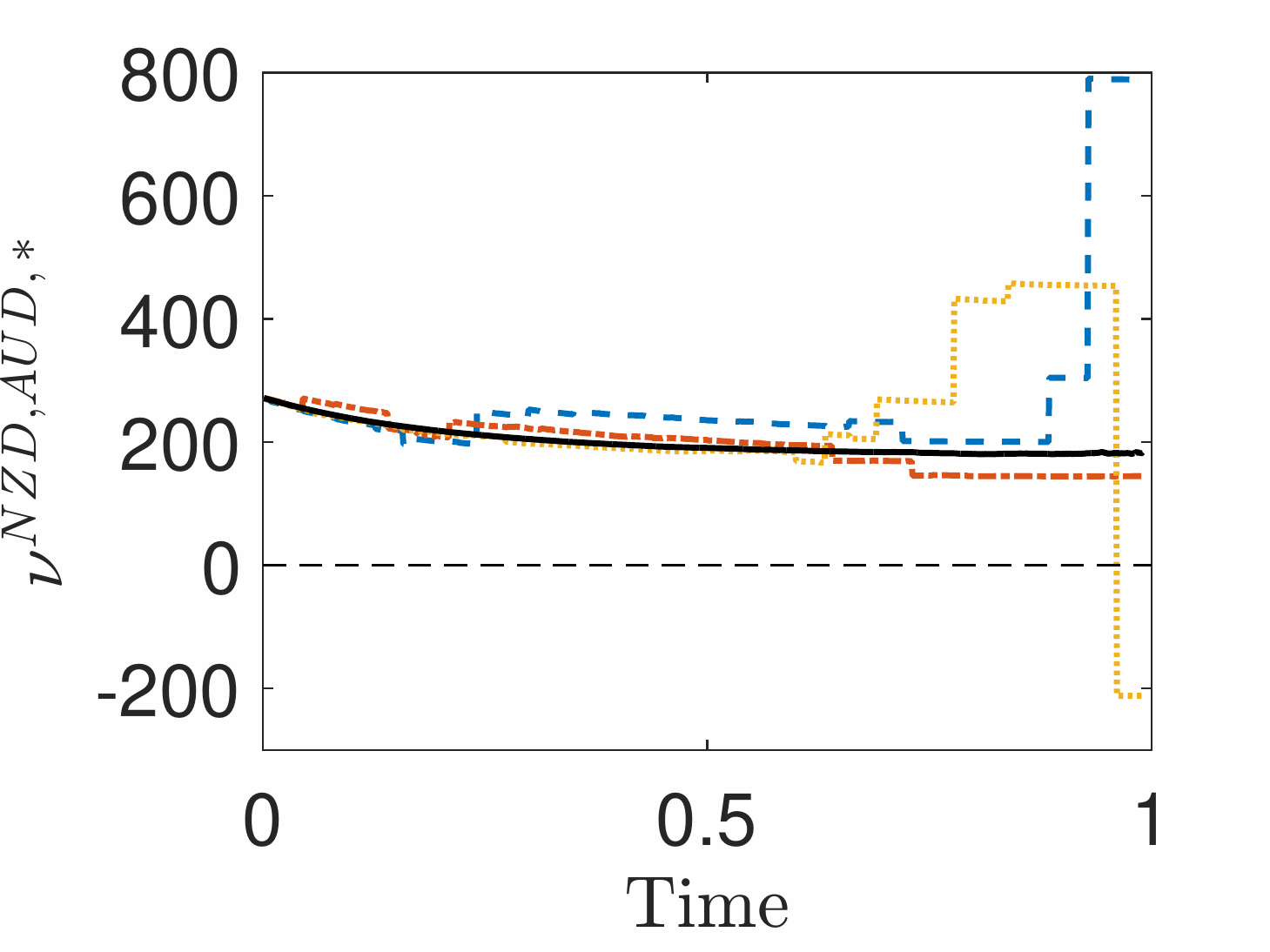}
\end{center}
\caption{Trading speed for each currency pair, $\varphi = 0.1$. Solid line represents the mean trading speed over all paths.}\label{fig: speed of trading}
\end{figure}

Figure \ref{fig: speed of trading} shows the optimal speeds of trading for the triplet. As above, the solid black line represents the average over 10,000 simulations.   Recall that if the continuous rate of trading $\nu^k$ is positive, then the strategy is selling the pair $k$. As already discussed, the strategy relies more on trading the most liquid pair (AUD,USD) because it exhibits the lowest temporary impact on the mid-exchange rate.

A number of factors determines the speed at which the strategy trades the pairs; one of which  is the exogenous order flow from clients. Every time the broker fills a client's order, the inventory in that pair jumps, and this causes the optimal speed of trading in the three pairs to change. There are cases in which the size of the client's order is large enough to reverse the direction of the broker's trading in that pair, i.e., the speed $\nu^k$ changes sign.

\begin{figure}[t]
\begin{center}
\includegraphics[scale=0.33]{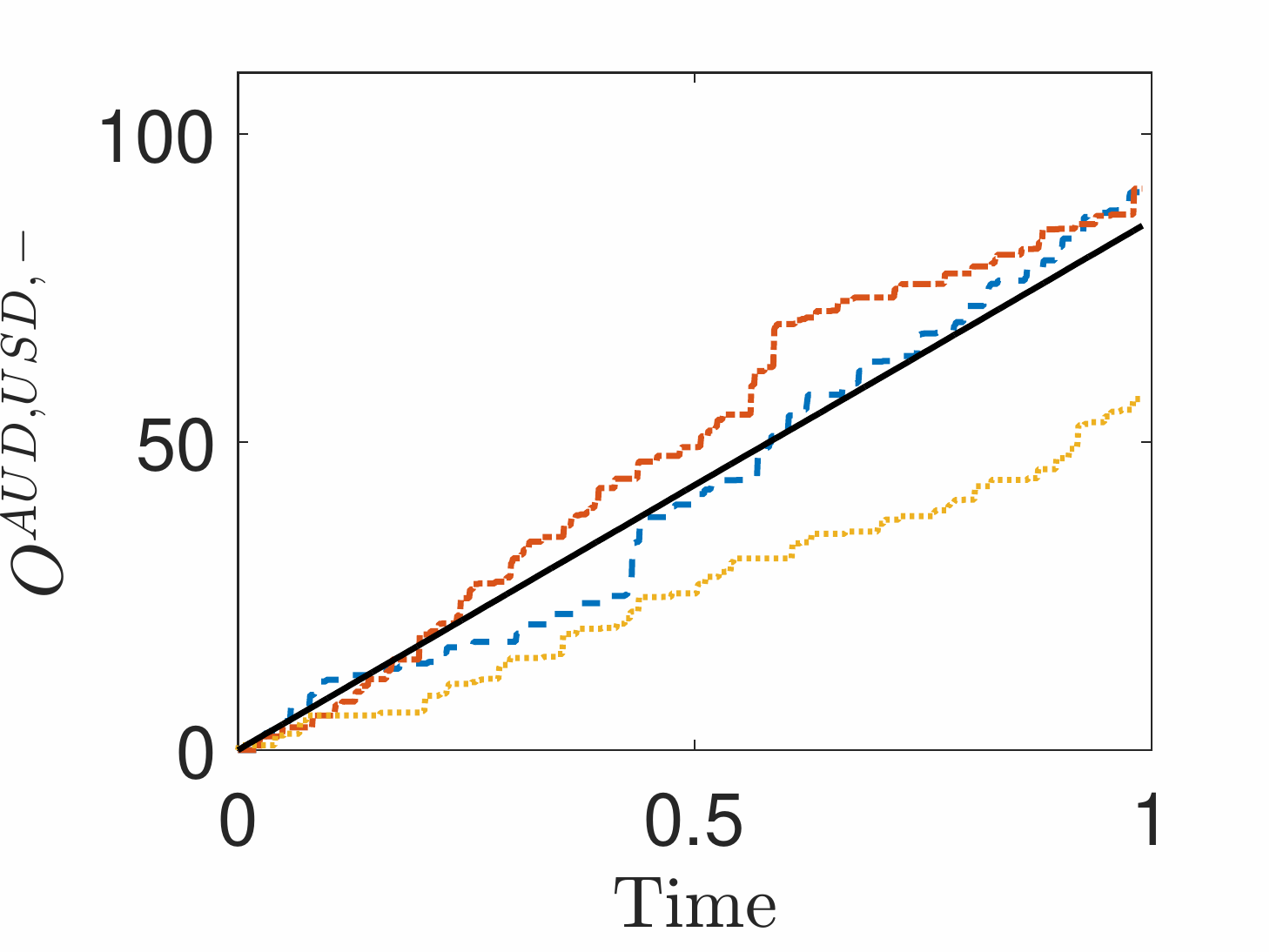}
\includegraphics[scale=0.33]{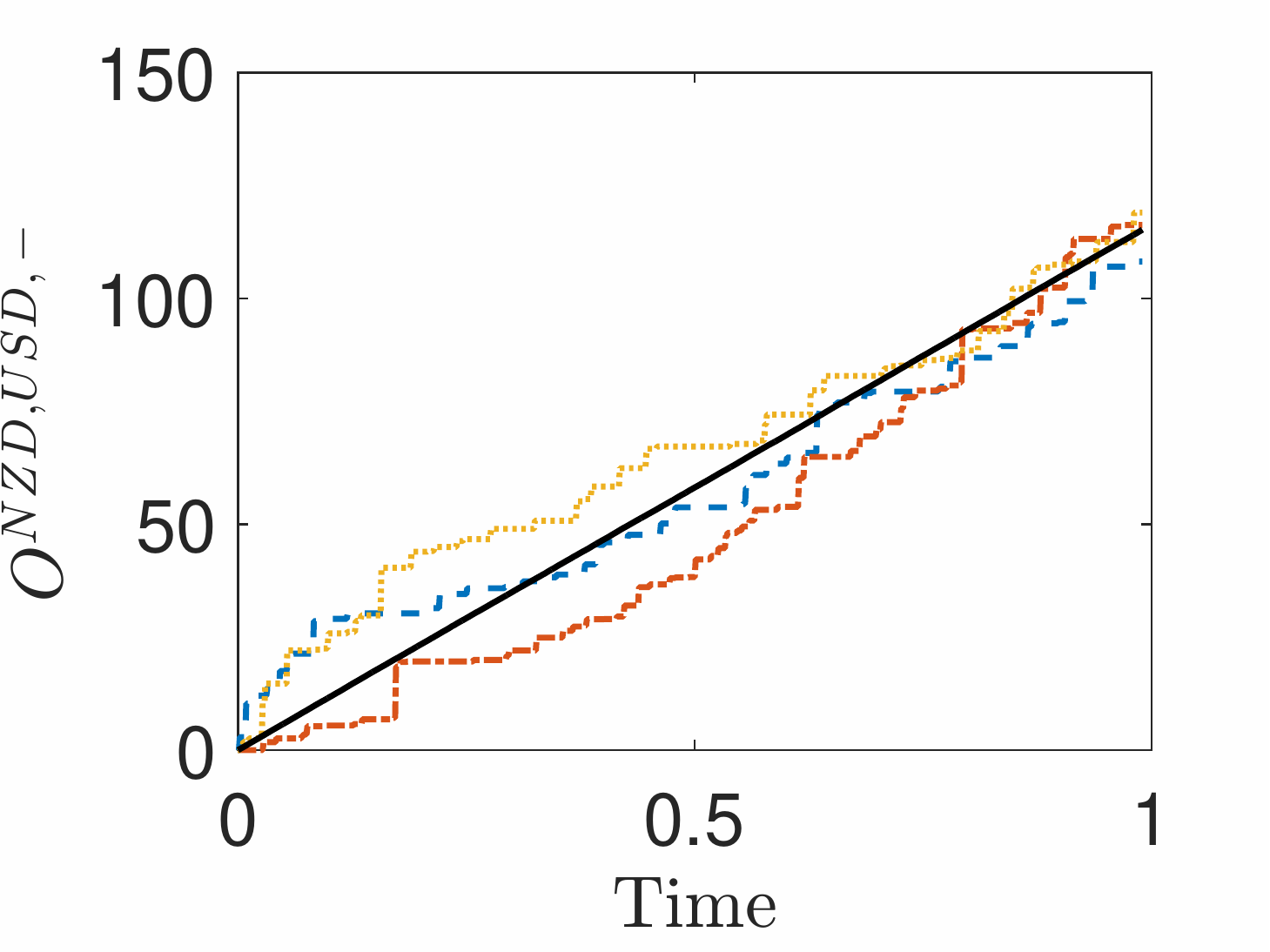}
\includegraphics[scale=0.33]{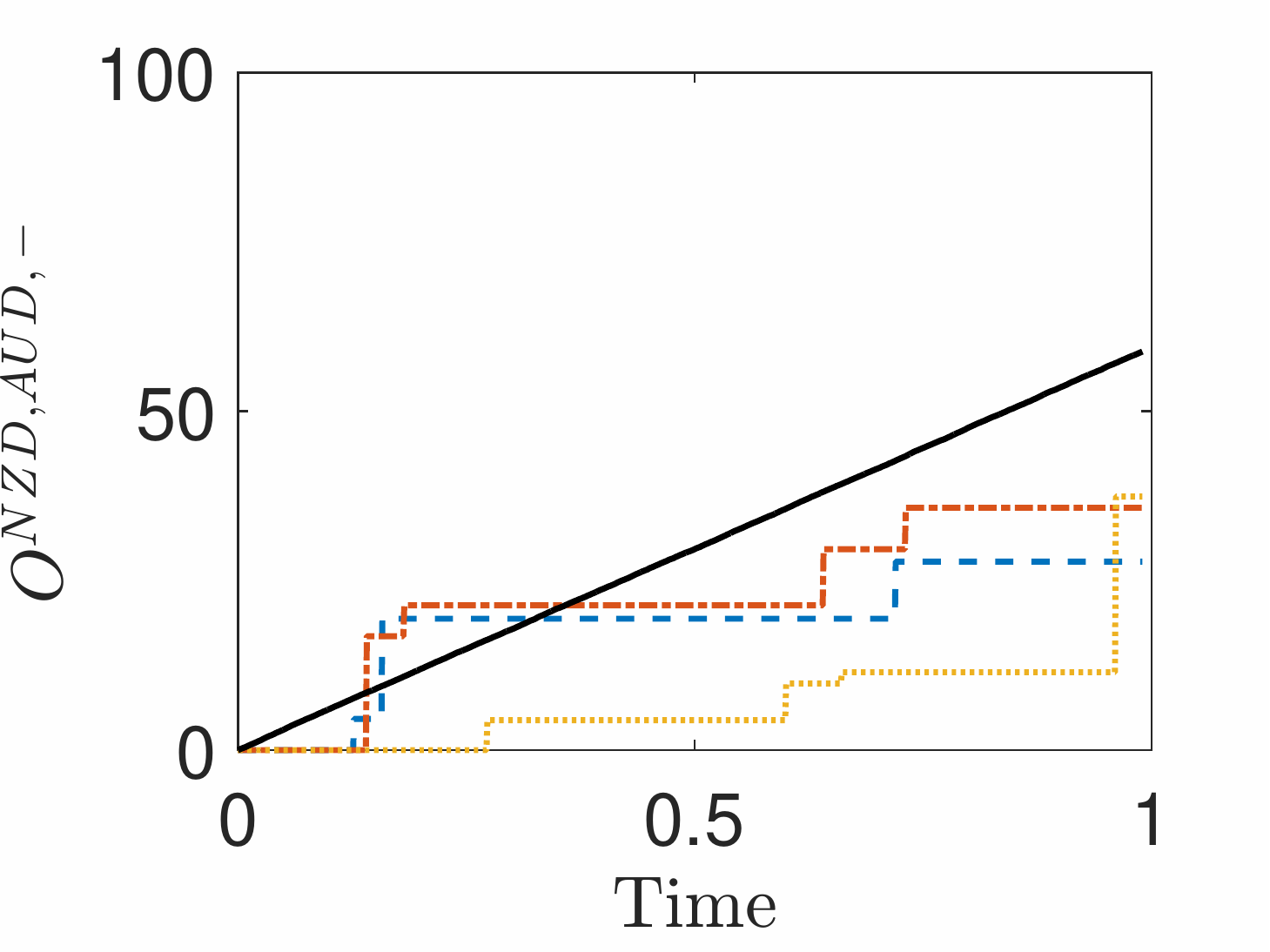}
\end{center}
\caption{Buy order flow from broker's clients, $\varphi=0.1$. Solid line is mean over all paths.}\label{fig: buy order flow}
\end{figure}

Figure \ref{fig: buy order flow} shows buy order flow (in the interest of space we do not show sell order flow) from the pool of clients which are filled by the broker. At the beginning of the trading window, the broker quickly fills around 20 lots of the currency pair (NZD,USD), which is above the average fills shown by the solid line in the second picture of the figure. Thus, the broker's inventory $q^{NZD,USD}$ drops very quickly and the speed at which the broker sells that pair in the Exchange also decreases.

The broker's degree of ambiguity aversion also affects the speeds of trading.  When the broker is ambiguous to the drift of the  mid-exchange rates, the trading strategy assumes that the dynamics of the rates are given by \eqref{eqn:dXtGBM2 under Q}, \eqref{eqn:dYtGBM2 under Q}, \eqref{eqn:dZtGBM2 under Q}, which for convenience we repeat here
\begin{eqnarray*}
dX_t &=& X_t\left(\sigma_x\, \kappa^{x,*}_t\,dt  + \sigma_x\, dW_t^{x,\bkappa^*} \right)\,,\\
dY_t &=& Y_t\left(\sigma_y\, \kappa^{y,*}_t\,dt  + \sigma_y\, dW_t^{y,\bkappa^*}\right) \,,\\
dZ_t &=& {Z_t}\left(\left(\mu_z +\sigma_z\,\kappa^{z ,*}_t\right)\,dt + \sigma_z\, dW_t^{z,\bkappa^*}\right)\,,
\end{eqnarray*}
where
\begin{equation}\label{eqn:drift adjustment of z}
\kappa^{z ,*}_t = \frac{\sigma_x\,\kappa^{x ,*}_t - \sigma_y\,\kappa^{y ,*}_t}{\sqrt{\sigma_x^2 + \sigma_y^2 - 2\,\rho\,\sigma_x\,\sigma_y}}\qquad\text{ and }\qquad dW_t^{z,\bkappa^*} = \frac{\left(\sigma_x\, dW_t^{x,\bkappa^*} - \sigma_y\, dW_t^{y,\bkappa^*}\right)}{\sqrt{\sigma_x^2 + \sigma_y^2 - 2\,\rho\,\sigma_x\,\sigma_y}}\,.
\end{equation}

In Figure \ref{fig: drift adjestment} we show the drift adjustments $\kappa^{x,*}_t$, $\kappa^{y,*}_t$, $\kappa^{z,*}_t$ that stem from ambiguity aversion. For each mid-exchange rate, the black solid line shows the mean of the drift adjustment for all the simulations. For the three pairs, the mean  of $\kappa^{k,*}$ is always negative   and gradually increases to zero by the end of the trading window. When the drift adjustment is negative, the strategy assumes that the mid-exchange rate is decreasing (on average), and therefore  it is optimal to sell the pair. Thus, everything else being equal, the effect of ambiguity aversion is to increase the speed of trading.

To gain insights into the effect of ambiguity aversion on the broker's optimal strategy we also discuss the results of the strategy if, everything else being equal,  the broker's initial inventory is $Q^x_0 = 0\,, Q^y_0 = 0\,, Q^z_0 = -200$, i.e., the initial position in the illiquid pair is short 200 contracts.  The equivalent to Figures \ref{fig: inventory}, and \ref{fig: speed of trading},  which describe the inventory, and speed of trading, are similar to those above but with the opposite sign. For example, the   mean inventory paths for the two liquid pairs show that the strategy builds long positions and gradually unwinds them.

Similarly, the drift adjustment when the broker's initial inventory is $Q^x_0 = 0\,, Q^y_0 = 0\,, Q^z_0 = 200$ is as that shown in Figure \ref{fig: drift adjestment}, but with the opposite sign. That is, for all three pairs, the mean adjustment $\kappa^{k,*}$ is positive and gradually decays to zero as the strategy approaches the terminal date.

Therefore, when the broker's initial position in the illiquid pair is short (resp. long), the ambiguity averse broker devises a strategy where the drift in the mid-exchange rates, relative to the model under the reference measure $\PP$,  is larger (resp. smaller). In the two cases discussed above ($Q^z_0=-200$ and $Q^z_0=200$), the drift of the pairs $x$ and $y$ is zero under the reference measure and the drift adjustment under the optimal measure is positive (resp. negative) if the initial  position in the illiquid pair is short (resp. long).  Hence, ambiguity aversion increases the speed at which the strategy enters positions in the liquid pairs of the triplet. The magnitude of the increase in the speed is linear in the  ambiguity aversion parameter $\varphi$.

\begin{figure}[t]
\begin{center}
\includegraphics[scale=0.33]{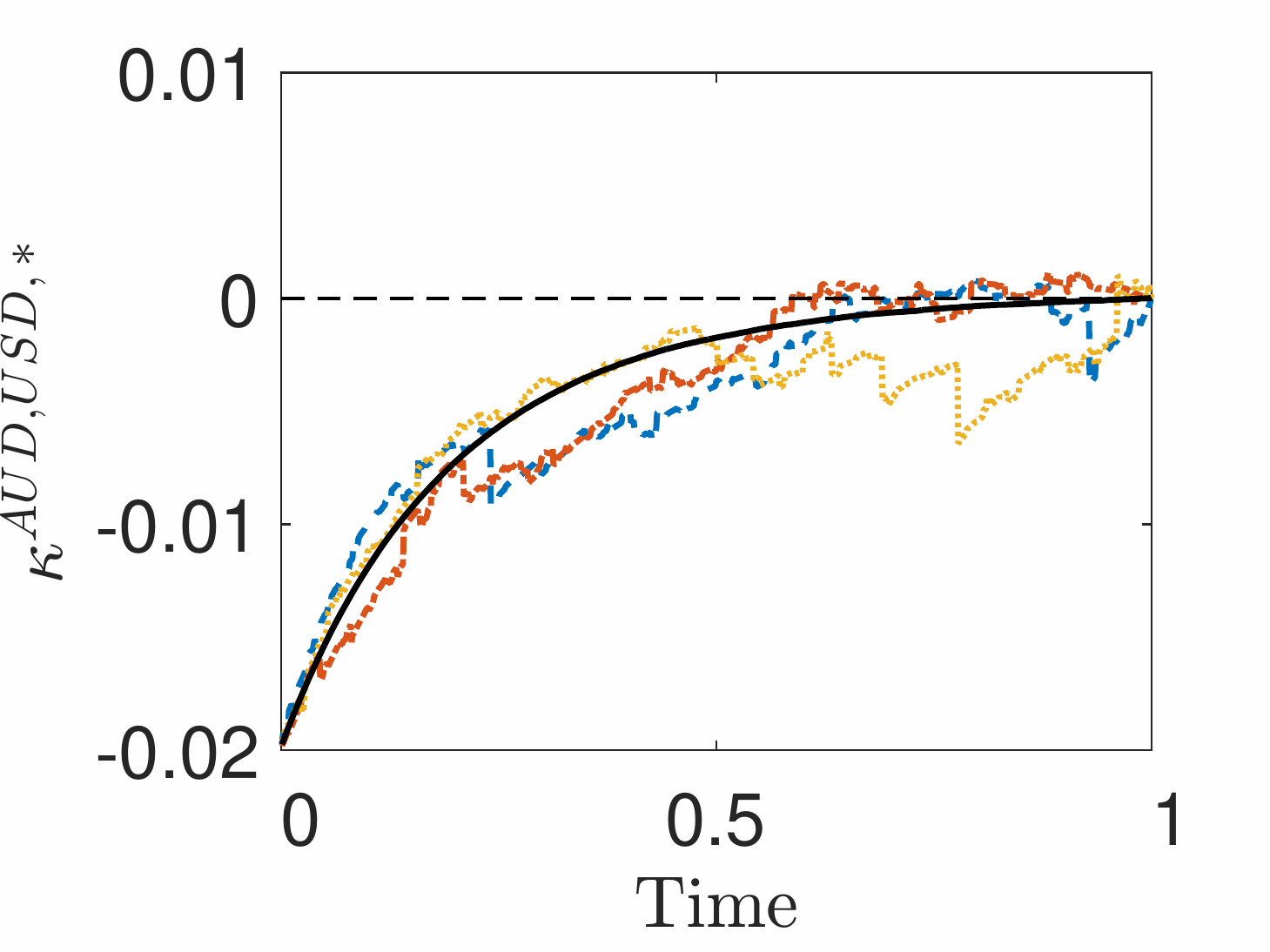}
\includegraphics[scale=0.33]{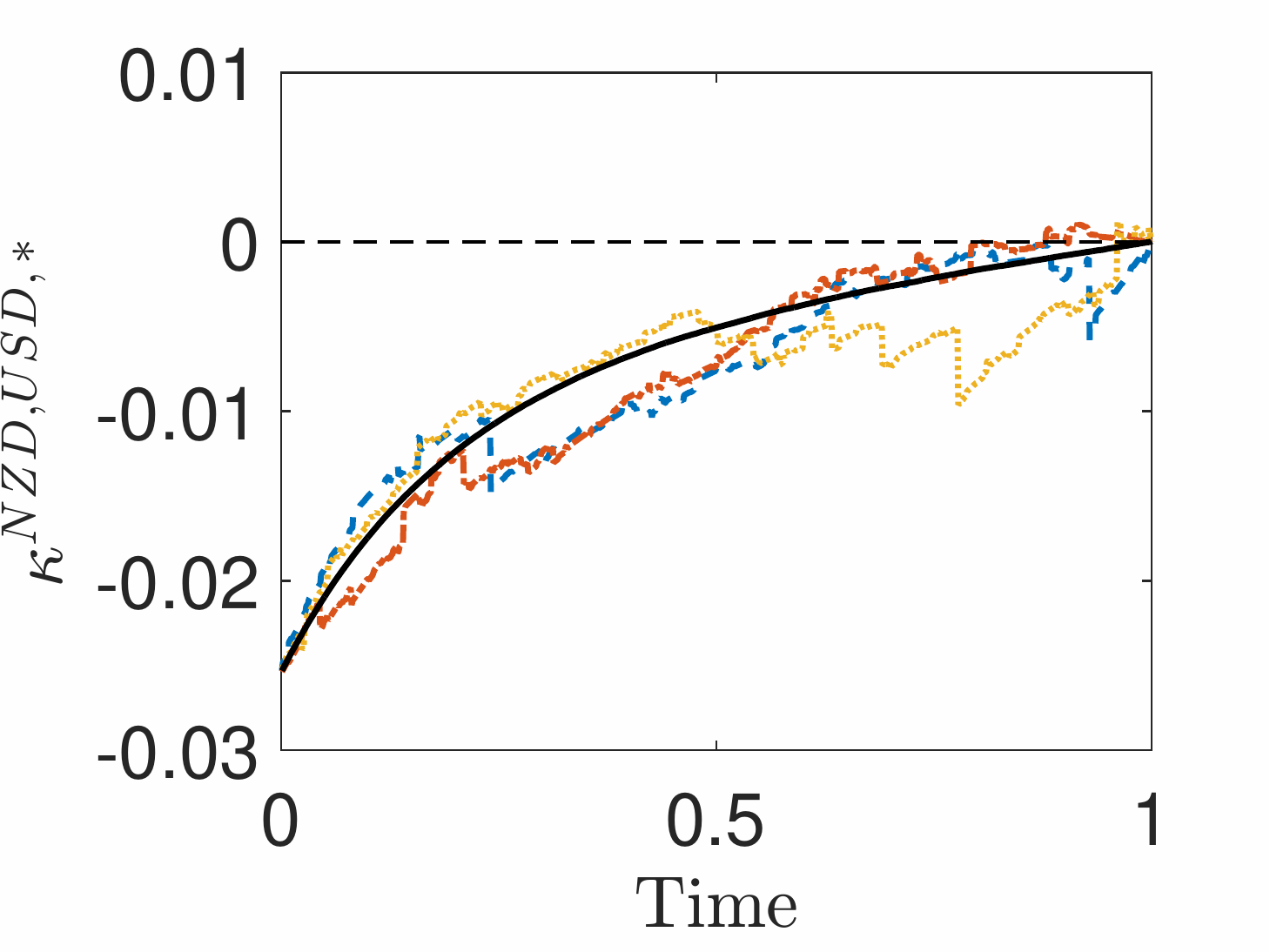}
\includegraphics[scale=0.33]{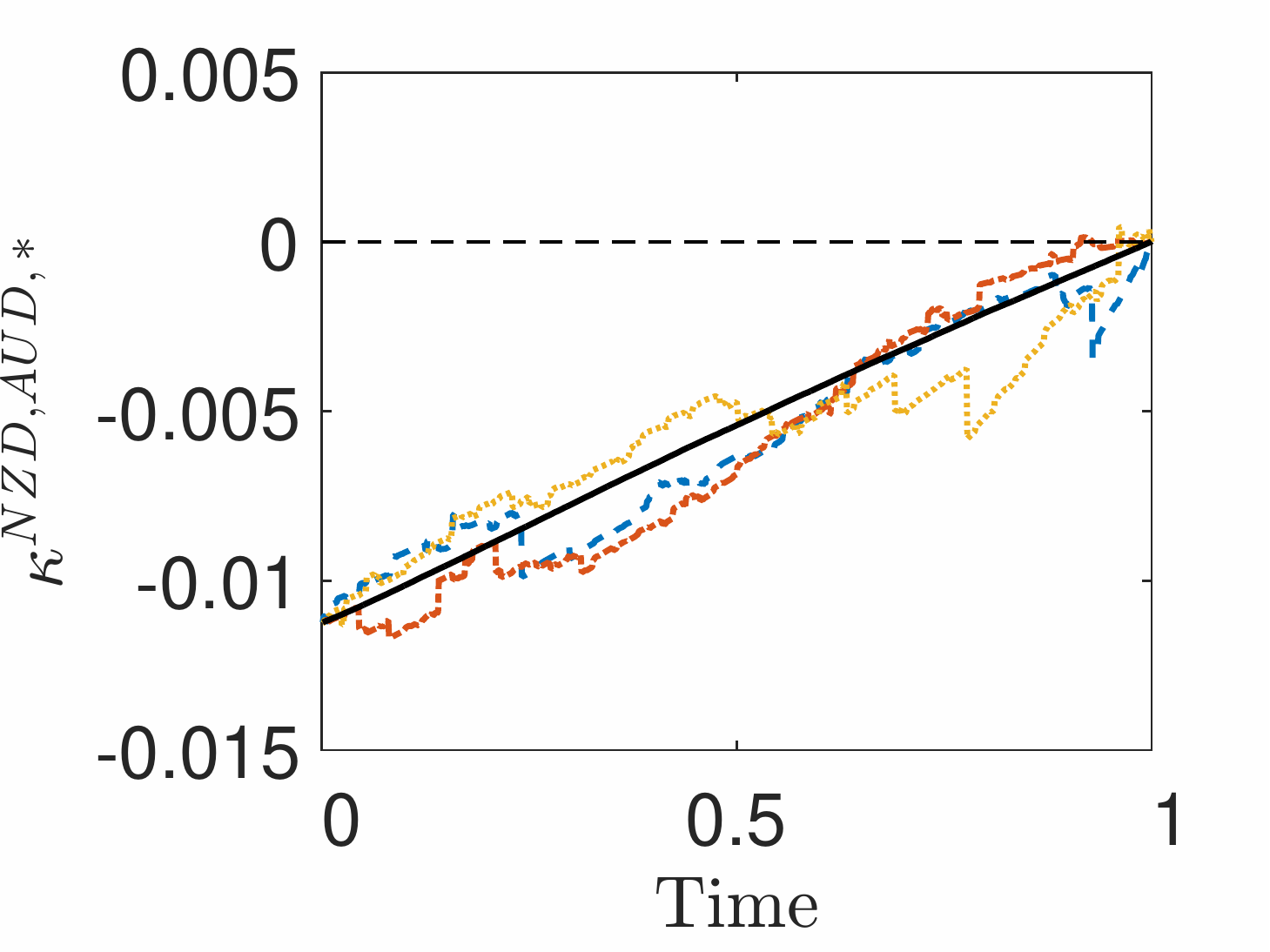}
\end{center}
\caption{Drift adjustments as result of ambiguity aversion, $\varphi=0.1$. Solid line is mean over all paths.}\label{fig: drift adjestment}
\end{figure}

\subsection{Added value of trading all pairs in the triplet}
The left panel in Figure \ref{fig:PnL} shows the profit and loss (P\&L) of the strategy  when the broker trades in the triplet (gray-fill histogram) or trades only the illiquid pair (white-fill histogram). For comparative purposes, when the broker trades the triplet of currency pairs, we assume that she also trades with her clients, so  $\lambda^{x,\pm} = 60$, $\lambda^{y,\pm} = 90$, $\lambda^{z,\pm} = 6$, and $\varphi = 0.1$. When the broker only trades the illiquid pair we employ the parameters   $\lambda^{x,\pm} = 0$, $\lambda^{y,\pm} = 0$, $\lambda^{z,\pm} = 6$, and $\varphi = 0$.\footnote{Recall that if $\varphi>0$ the broker finds it optimal to trade in all currency pairs regardless of the initial inventory in each pair.} The mean P\&L of the strategy that trades in the triplet is higher (solid vertical line) than the P\&L of the strategy when only the illiquid pair is traded (dash vertical line) -- the difference is approximately  $88.88$ USD per lot of  currency 1. The mean P\&L when the broker only trades the illiquid pair is $7.4562\times 10^5$, and when the broker trades in all the pairs the mean P\&L is $7.4571\times 10^5$.

The right panel in Figure \ref{fig:PnL} shows the mean P\&L against the standard deviation of the P\&L of the strategy for $\varphi\in[0,50]$ and $\lambda^{x,\pm} = 60$, $\lambda^{y,\pm} = 90$, $\lambda^{z,\pm} = 6$.  The lowest mean P\&L, which also corresponds to the highest standard deviation of the P\&L, corresponds to $\varphi=0$, i.e., when the broker does not doubt the reference model. As the value of the ambiguity aversion parameter $\varphi$ increases, the mean P\&L increases and the standard deviation of P\&L decreases up to a point ($\varphi=36$) after which the mean P\&L starts to decrease.  Thus, there is a range of the ambiguity aversion parameter where increasing the broker's ambiguity aversion improves the tradeoff between mean and standard deviation of the profitability of the strategy.

To investigate the additional value that stems from the opportunity to trade two liquid currency pairs, we assume the broker does not trade  with her clients, i.e.,  $\lambda^{x,\pm} =\lambda^{y,\pm} = \lambda^{z,\pm} = 0$, and define the relative improvement of P\&L as
\begin{eqnarray*}
	\Delta_{PnL}\% &=& \frac{PnL_\varphi - PnL_0}{PnL_0}\,,
\end{eqnarray*}
where $PnL_0$ and $PnL_\varphi$ are the P\&L with $\varphi = 0$ and $\varphi > 0$, respectively. We compute a `Sharpe ratio' of the relative improvement of the P\&L as follows:
\begin{eqnarray*}
	S_{\Delta_{PnL}\%} &=& \frac{\EE^{\tilde \PP}\,\left[\Delta_{PnL}\%\right]}{\sqrt{Var^{\tilde \PP}\left[\Delta_{PnL}\%\right]}}\,.
\end{eqnarray*}

The left-hand picture of Figure \ref{fig:ImprovePnL} shows $\log(1+S_{\Delta_{PnL}\%})$ against $\log(1+\varphi)$ for  $\varphi\in[0,50]$. When $\varphi = 0$, we see that $\Delta_{PnL}\%\equiv 0$, i.e., the broker does not doubt the reference model and only trades in the illiquid pair. As the level of ambiguity aversion increases, the Sharpe ratio increases and then decreases. Thus, there is a range of $\varphi$ where increasing the broker's ambiguity aversion improves the tradeoff between mean and standard deviation of the value of trading liquid pairs relative to trading the illiquid pair alone -- the maximum Sharpe ratio is $0.3421$ for $\varphi=16$.

The right-hand picture of Figure \ref{fig:ImprovePnL} shows the probability that relative P\&L improvement is greater than $x$, i.e. $P_{\Delta_{PnL}\% }(x) = \tilde \PP\left(\Delta_{PnL}\% > x\times 10^{-2}\right)$. When $\varphi = 0$, the relative P\&L improvement has a point probability mass of $1$ at $0$. The  dotted curve is for $\varphi = 0.1$, and the solid curve  is for $\varphi = 16$. For both curves, the probability of positive relative P\&L improvement is around $60\%$. The $\Delta_{PnL}\%$ has $\textrm{50}^{\textrm{th}}$-percentiles equal to 0.01\% and 0.03\% for $\varphi = 0.1$ and $\varphi = 16$, respectively. Thus, when the broker trades liquid pairs, she achieves (compared with the results obtained when the broker only trades the illiquid pair) a higher P\&L of $74.59$ USD ($\varphi = 0.1$) and $223.77$ USD ($\varphi = 16$) in half of the 10,000 trade simulations.

\subsection{Sensitivity to terminal penalty and ambiguity aversion}
In addition to Proposition \ref{prop6}, to illustrate the effect of the terminal liquidation penalty on: inventory, trading speed, and drift adjustment, we perform additional simulations. We assume that the broker does not trade with her pool of clients ($\lambda^{k,\pm}=0$) and employs the following terminal liquidation penalties: $\alpha_k = 1\times a_k$, $\alpha_k = 2.5\times a_k$, and $\alpha_k = 10^6\times a_k$, $k\in\{x,y,z\}$.

Figure \ref{fig: average inventory different terminal liquidation penalty} shows average inventory paths for each currency pair when $\varphi = 0.1$  and initial inventory $Q^x_0 = Q^y_0 = 0$, $Q^z_0 = 200$. As expected, as the value of the terminal liquidation penalty decreases, the  strategy first sells and then buys the two liquid pairs (AUD, USD), (NZD, USD), and sells the pair (NZD, AUD) throughout the trading period. Finally, when the value of  the  liquidation penalty is sufficiently low,  it is optimal for the strategy to have non-zero inventories at the end of trading of the trading horizon.

Figure \ref{fig: average speed of trading different terminal liquidation penalty} shows the average of the trading speed for each currency pair. As the value of the liquidation penalty parameter decreases we observe the following: the broker is willing to hold more inventory; the sell (buy) speed increases (decreases) in approximately the first (second) half of the trading period  for the liquid pairs (AUD,USD) and (NZD,USD); and   the sell speed decreases for the illiquid pair (NZD, AUD).

Figure \ref{fig: average drift adjestment different terminal liquidation penalty} shows the average drift adjustment, which is a result of the broker's ambiguity aversion, for each currency pair for various levels of the terminal penalty parameter. Observe that at the beginning of the trading horizon, the impact of the terminal liquidation penalty on the drift adjustment  is very small. As time evolves, the effect of the terminal penalty differs for each currency pair. For example,  the mean drift adjustment for the pair (NZD, AUD) is negative during the entire trading horizon, so it is always optimal to sell. Recall that the initial inventory in the pair (NZD, AUD) is 200 lots, so the impact of the negative drift adjustment on the strategy is to sell quicker than otherwise.

Figure \ref{fig:PnL different terminal liquidation penalty} shows the P\&L of a strategy when $\alpha_k = 1\times a_k$ (white-fill histogram) and $\alpha_k = 10^6\times a_k$ (gray-fill histogram), and $\lambda^{k,\pm}=0$, $k\in\{x,y,z\}$. The mean and standard deviation of the P\&L  for the strategies where the  terminal liquidation penalty is low are $7.4272\times 10^5$ and $501.96$ USD respectively. When the  terminal liquidation penalty is large, the mean and standard deviation of the P\&L per lot are $7.4571\times 10^5$ and $496.05$ USD, respectively. The average cost of executing large inventories at time $T$ is $2.9922\times 10^3$ USD.\footnote{We compute the costs of unwinding terminal inventory as follows.  Let $Q_T^k$ represent the outstanding inventory at the terminal time $T$, and let $\Delta_T$ be the time interval to unwind $Q_T^k$. The contribution to the P\&L of unwinding the inventory is
\begin{eqnarray}\label{eqn: terminal liquidation PnL}
	\sum_{k \in \{x,y,z\}}\,Q_T^k\times\max\left(\widehat{K}_T\,\left(1 - a_k\,\tfrac{Q_T^k}{\Delta_T}\right),\,0\right)\,,
\end{eqnarray}
where $\widehat{K}_T = X_T\,1_{\{k\in\{x,z\}\}} + Y_T\,1_{\{k=y\}}$ and $\Delta_T = 10^{-3}$ hours, which is the time step  we use in the simulations.}

Finally, we investigate the behavior of the optimal strategy when the value of the ambiguity aversion parameter is high. We let $\varphi \in \{0.1,1,10\}$ and assume that the broker ensures full liquidation by the end of the terminal horizon, so  $\alpha_k = 10^6\times a_k$, $k\in\{x,y,z\}$ and  $\lambda^{k,\pm}=0$, $k\in\{x,y,z\}$, i.e., the broker does not trade with her pool of clients. Below, in Figures \ref{fig: average inventory different ambiguity aversion}, \ref{fig: average speed of trading different ambiguity aversion},   \ref{fig: average drift adjestment different ambiguity aversion},  the black, blue, and red solid lines represent cases of $\varphi = 0.1$, $1$, $10$, respectively. Figure \ref{fig: average inventory different ambiguity aversion} shows the average inventory paths for the three currency pairs. Figure \ref{fig: average speed of trading different ambiguity aversion} shows average trading speeds in the currency pairs. Finally, Figure \ref{fig: average drift adjestment different ambiguity aversion} shows the average of the drift adjustment for each currency pair.

\begin{figure}[t]
\begin{center}
\includegraphics[scale=0.33]{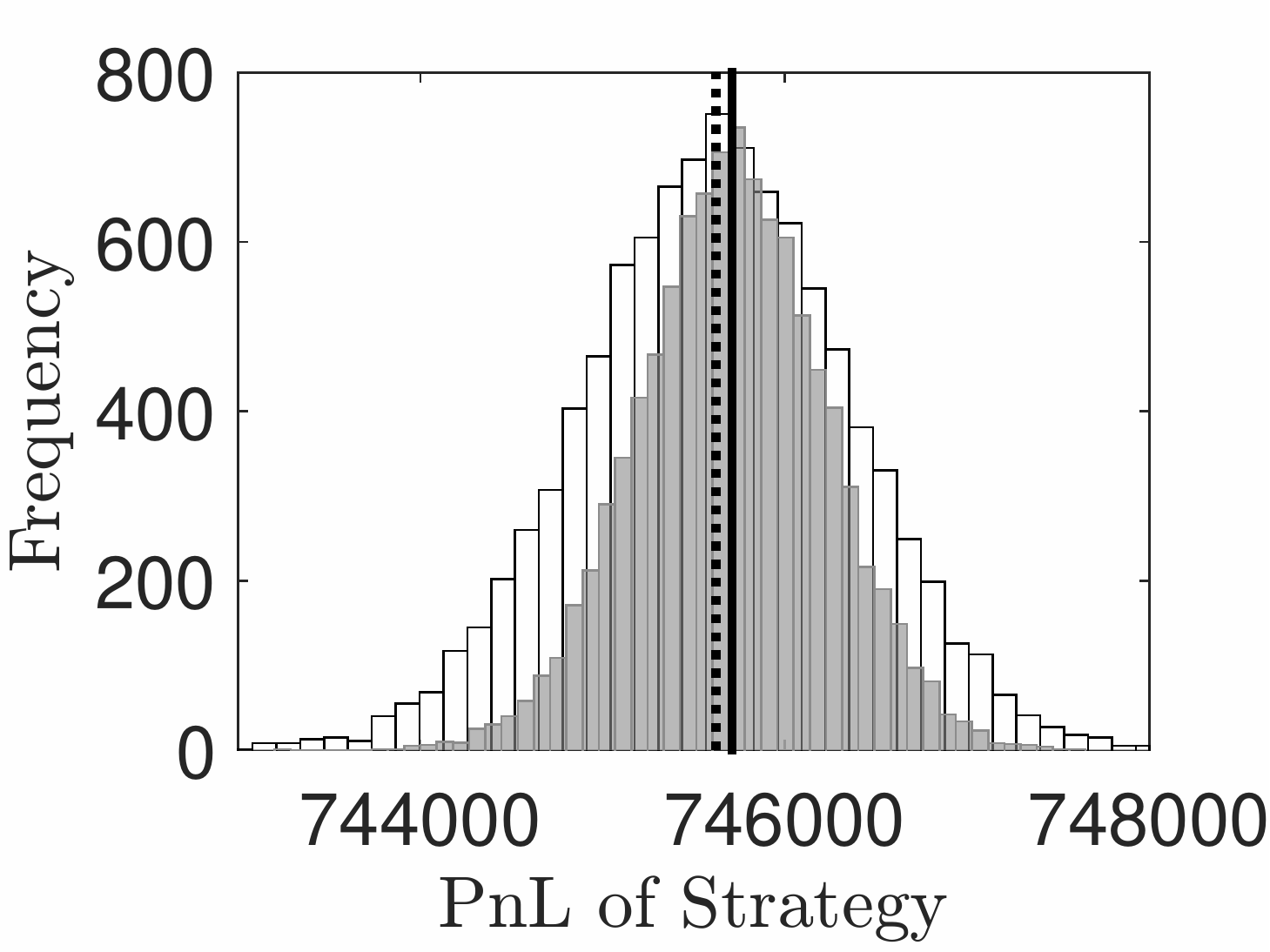}
\includegraphics[scale=0.33]{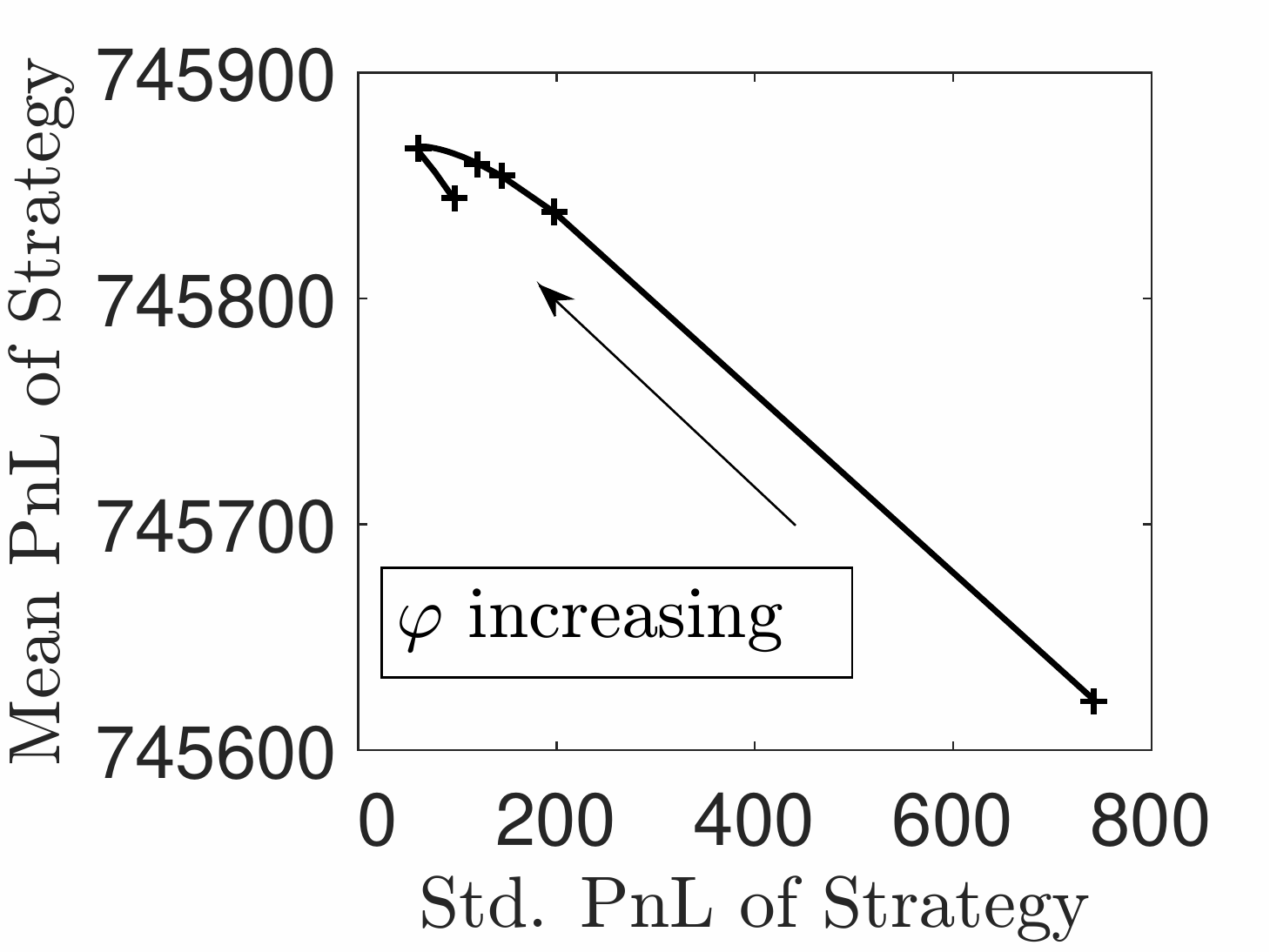}
\end{center}
\caption{Left panel: P\&L of strategy trading in triplet \tianyi{($\varphi = 0.1$, $\lambda^{x,\pm} = 60$, $\lambda^{y,\pm} = 90$, $\lambda^{z,\pm} = 6$)} or in illiquid pair only \tianyi{($\varphi = 0$, $\lambda^{x,\pm} = 0$, $\lambda^{y,\pm} = 0$, $\lambda^{z,\pm} = 6$).} Right panel: mean and standard deviation of P\&L for  \tianyi{$\varphi\in[0,50]$, $\lambda^{x,\pm} = 60$, $\lambda^{y,\pm} = 90$, $\lambda^{z,\pm} = 6$.}}\label{fig:PnL}
\end{figure}

\begin{figure}[t]
	\begin{center}
		\includegraphics[scale=0.33]{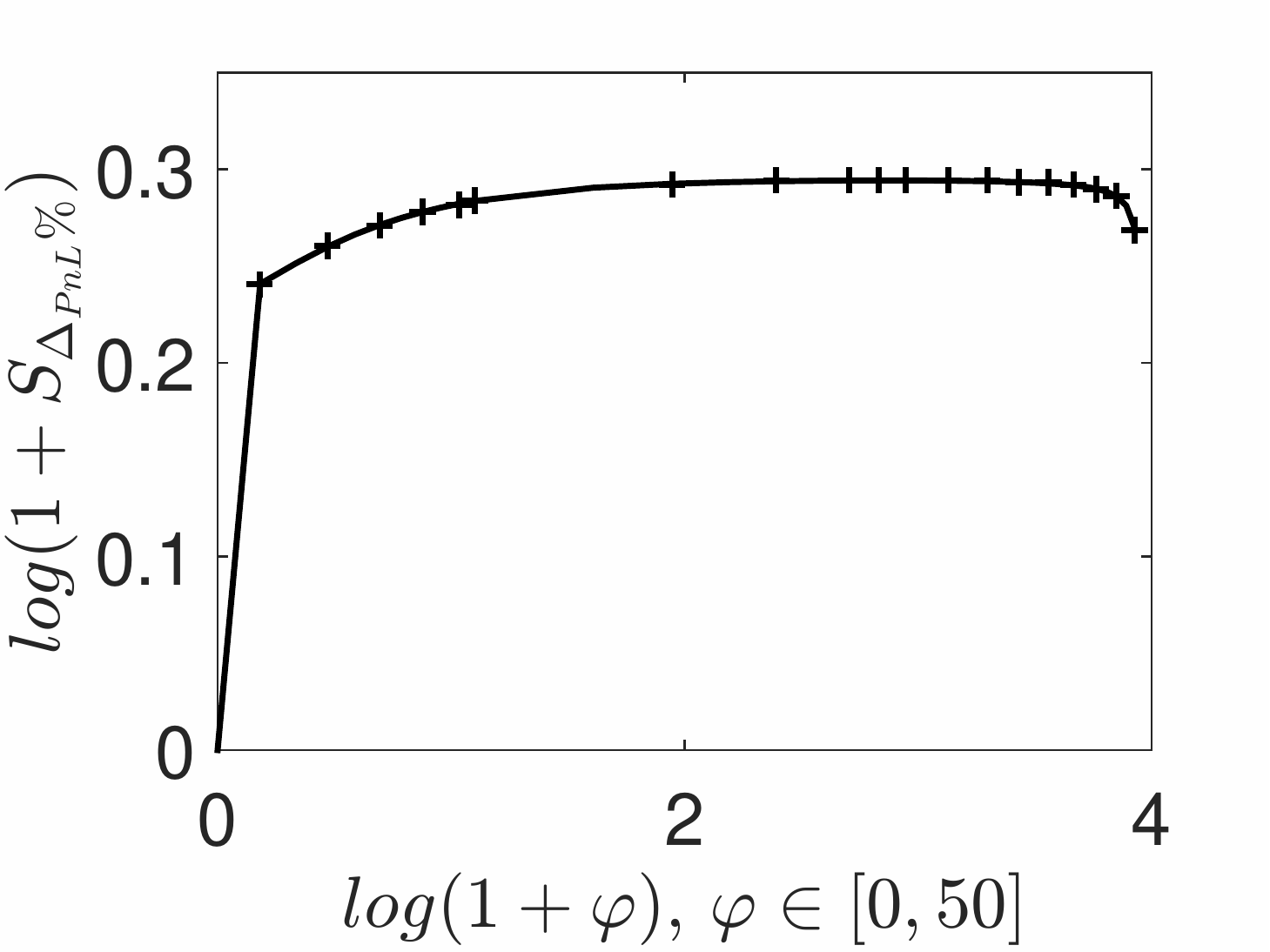}
		\includegraphics[scale=0.33]{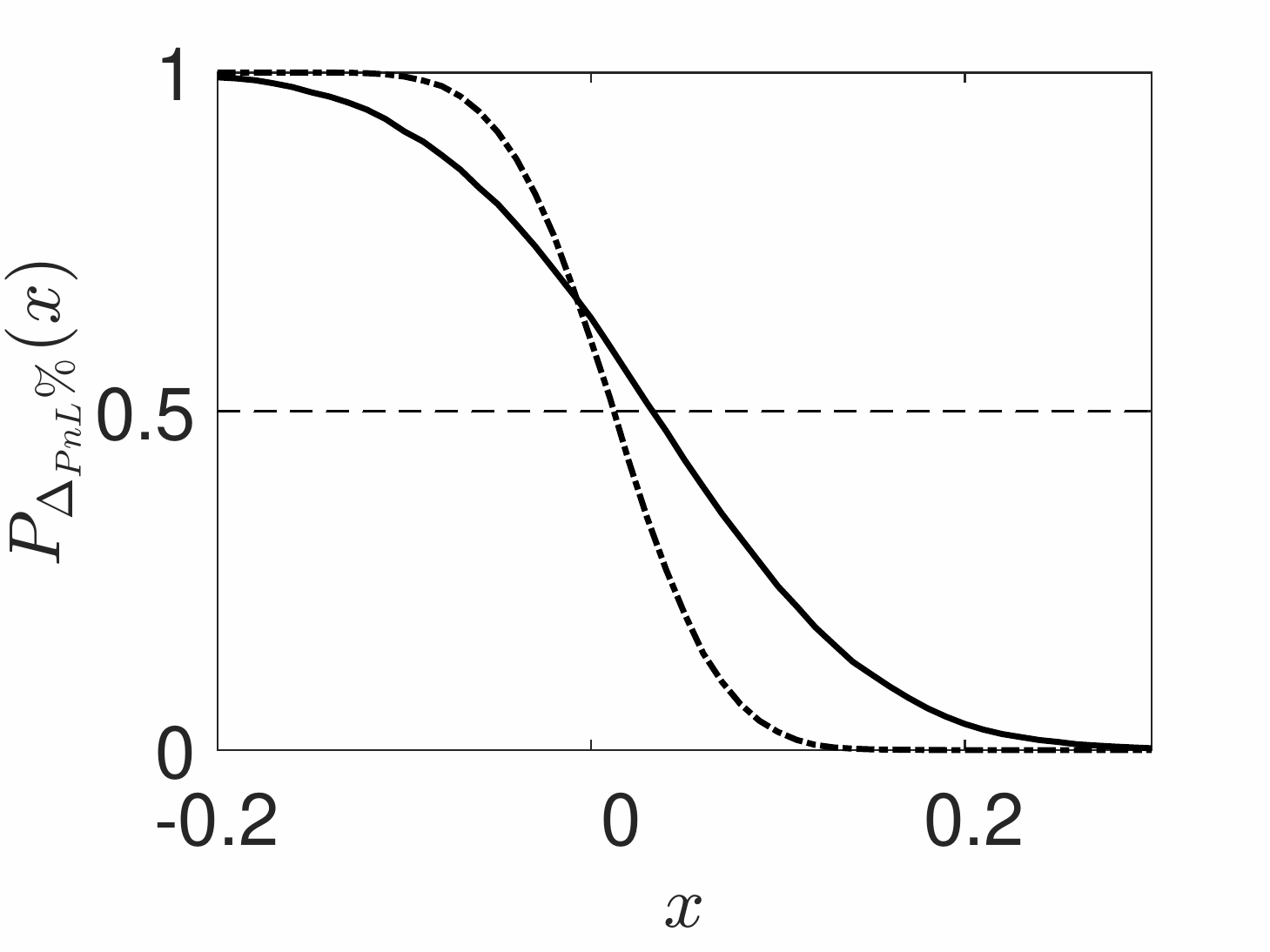}
	\end{center}
	\caption{\tianyi{Left panel: Sharpe ratio of relative P\&L improvement of strategy trading in triplet for a range of $\varphi$. Right panel: The probability of relative P\&L improvement greater than a given value $x$ in percentage, the dotted line is for $\varphi = 0.1$ and the solid line is for $\varphi = 16$.}}\label{fig:ImprovePnL}
\end{figure}

\begin{figure}[]
	\begin{center}
		\includegraphics[scale=0.33]{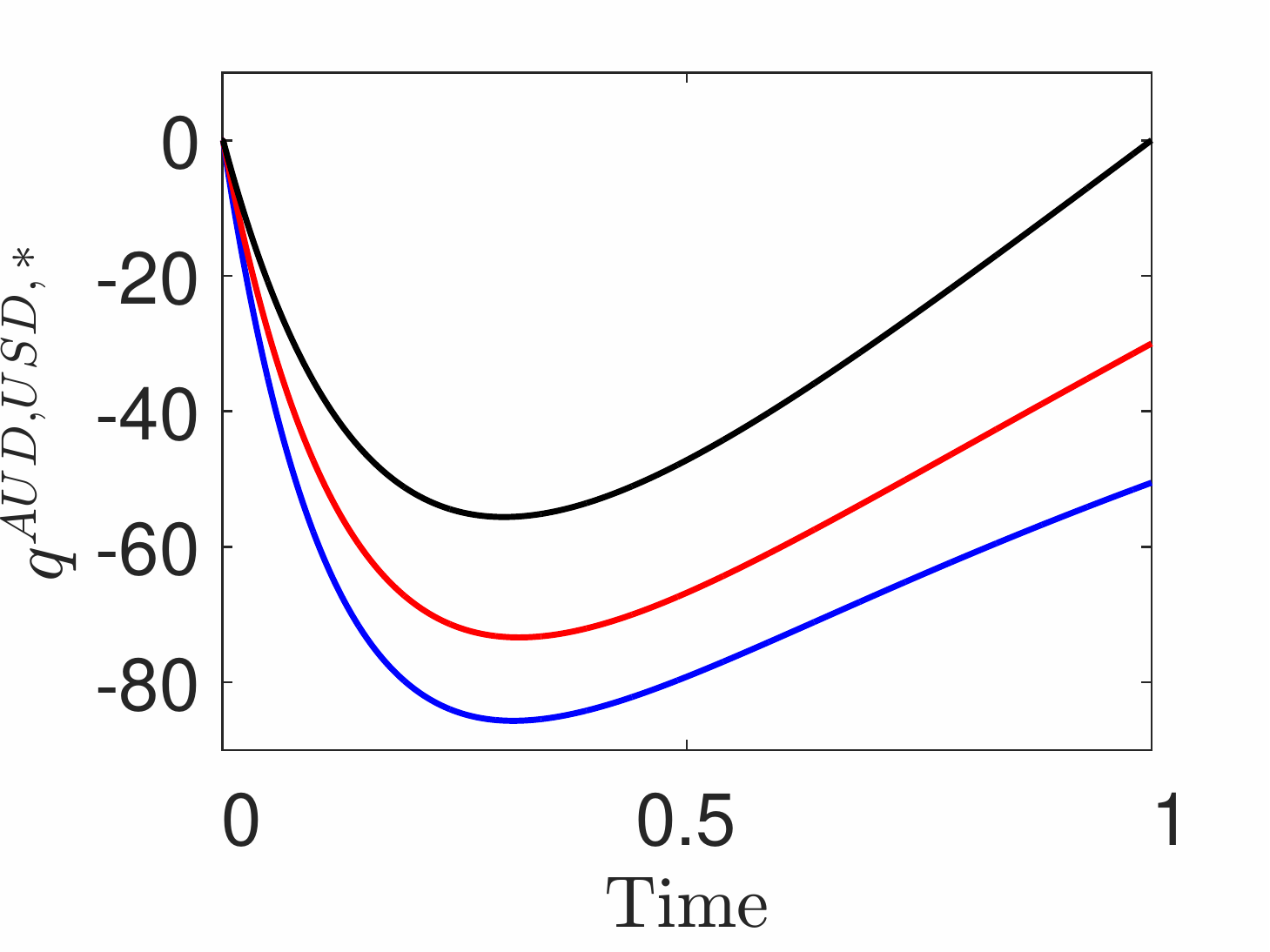}
		\includegraphics[scale=0.33]{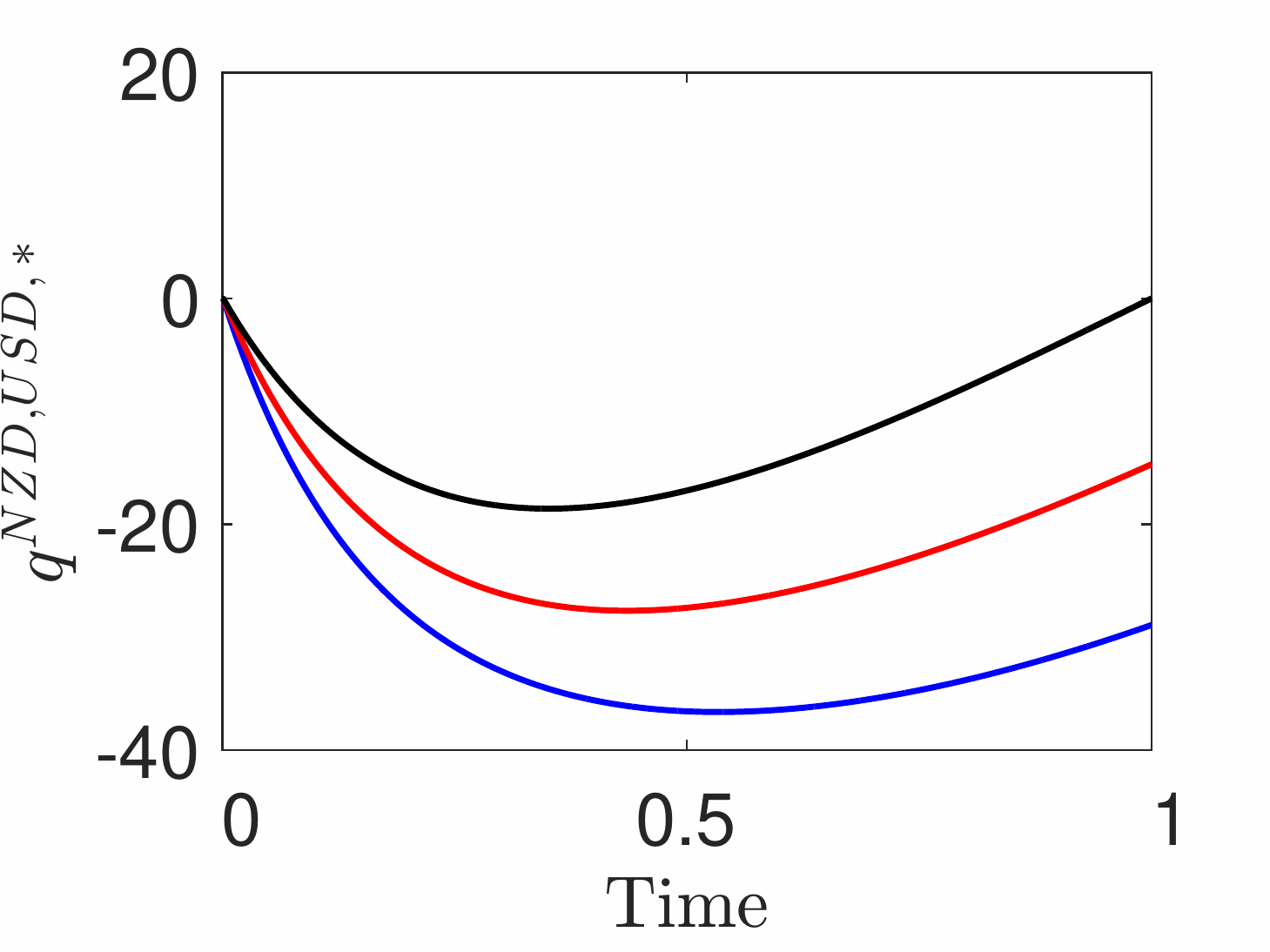}
		\includegraphics[scale=0.33]{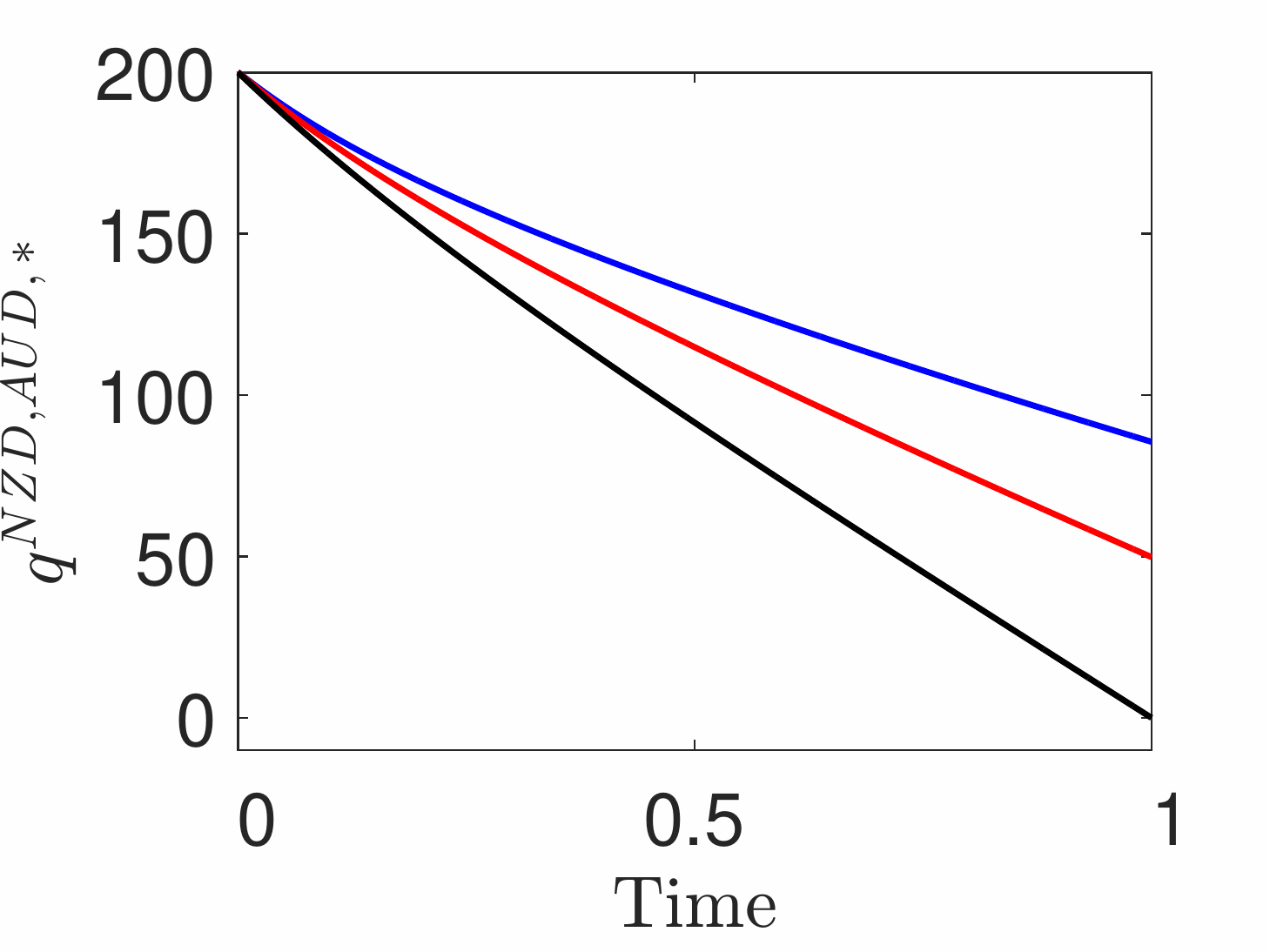}
	\end{center}
	\caption{Average inventory over all simulated paths for currency pairs, $\varphi = 0.1$, $\lambda^{k,\pm} = 0$, $k\in\{x,y,z\}$. The blue, red, black solid lines are for $\alpha_k = 1\times a_k$, $\alpha_k = 2.5\times a_k$, and $\alpha_k = 10^6\times a_k$, $k\in\{x,y,z\}$, respectively.}\label{fig: average inventory different terminal liquidation penalty}
\end{figure}

\begin{figure}[]
	\begin{center}
		\includegraphics[scale=0.33]{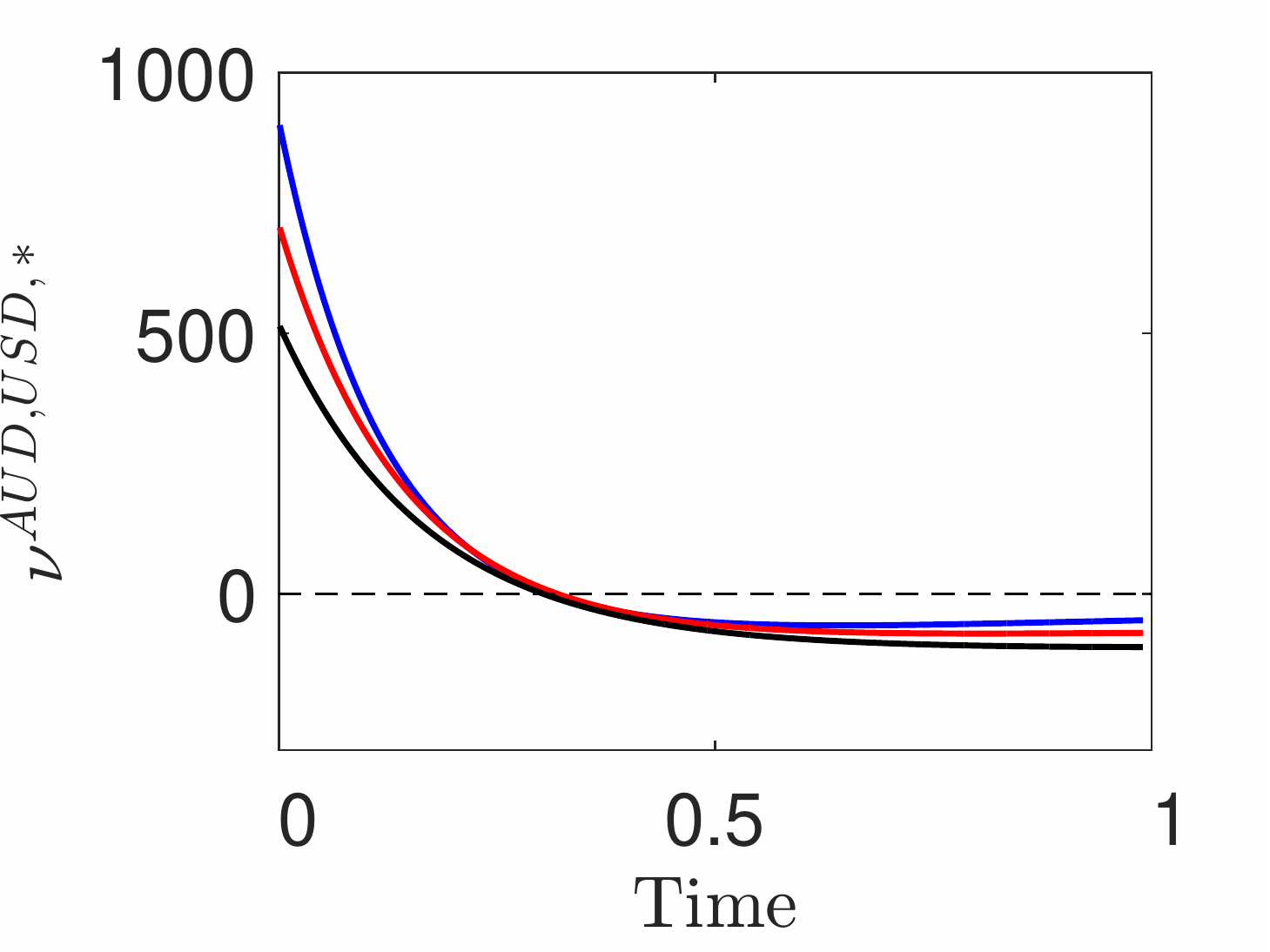}
		\includegraphics[scale=0.33]{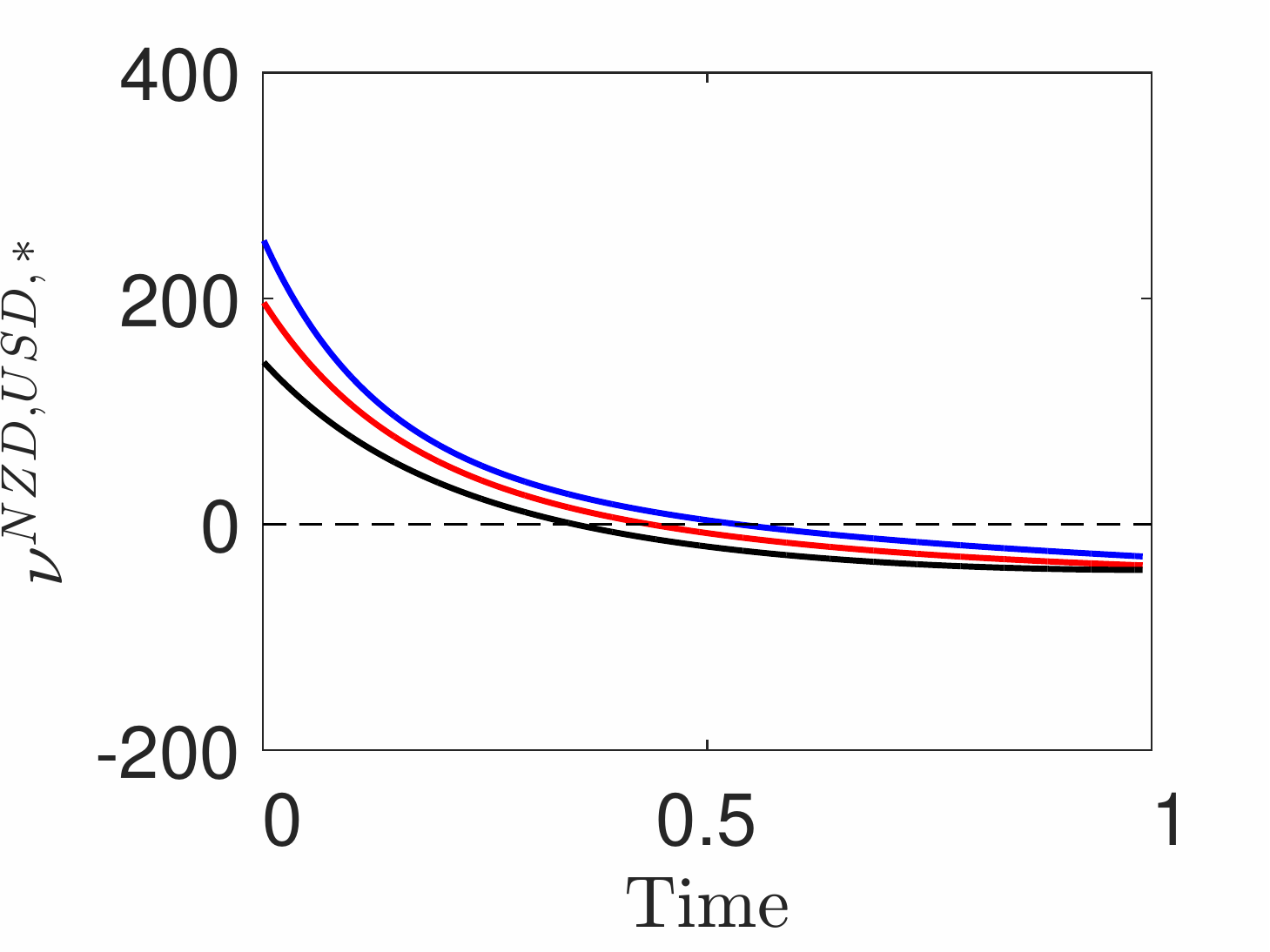}
		\includegraphics[scale=0.33]{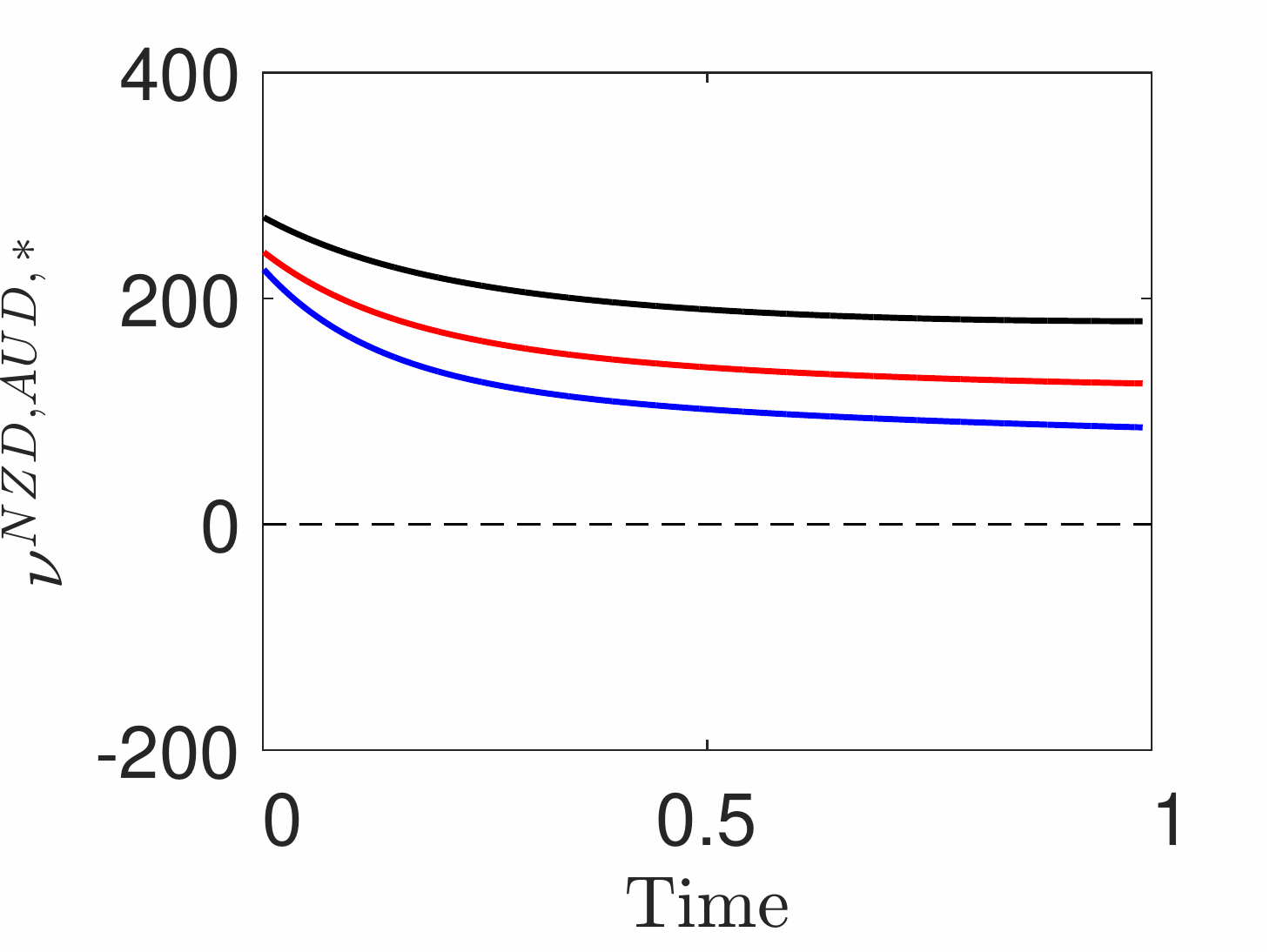}
	\end{center}
	\caption{Average trading speed over all simulated paths for each currency pair, $\varphi = 0.1$, $\lambda^{k,\pm} = 0$, $k\in\{x,y,z\}$. The blue, red, black solid lines represent $\alpha_k = 1\times a_k$, $\alpha_k = 2.5\times a_k$, and $\alpha_k = 10^6\times a_k$, $k\in\{x,y,z\}$, respectively.}\label{fig: average speed of trading different terminal liquidation penalty}
\end{figure}

\begin{figure}[t]
	\begin{center}
		\includegraphics[scale=0.33]{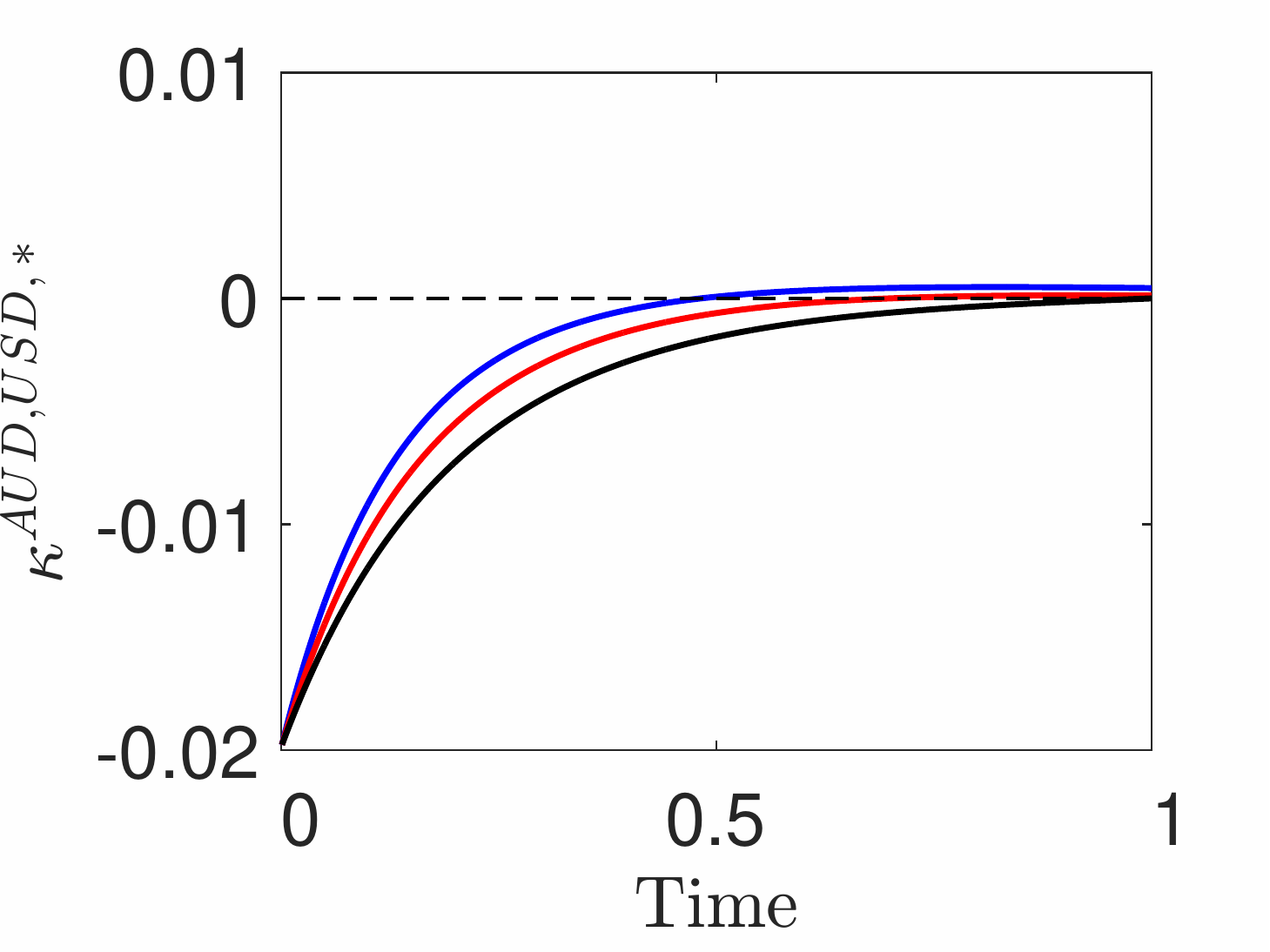}
		\includegraphics[scale=0.33]{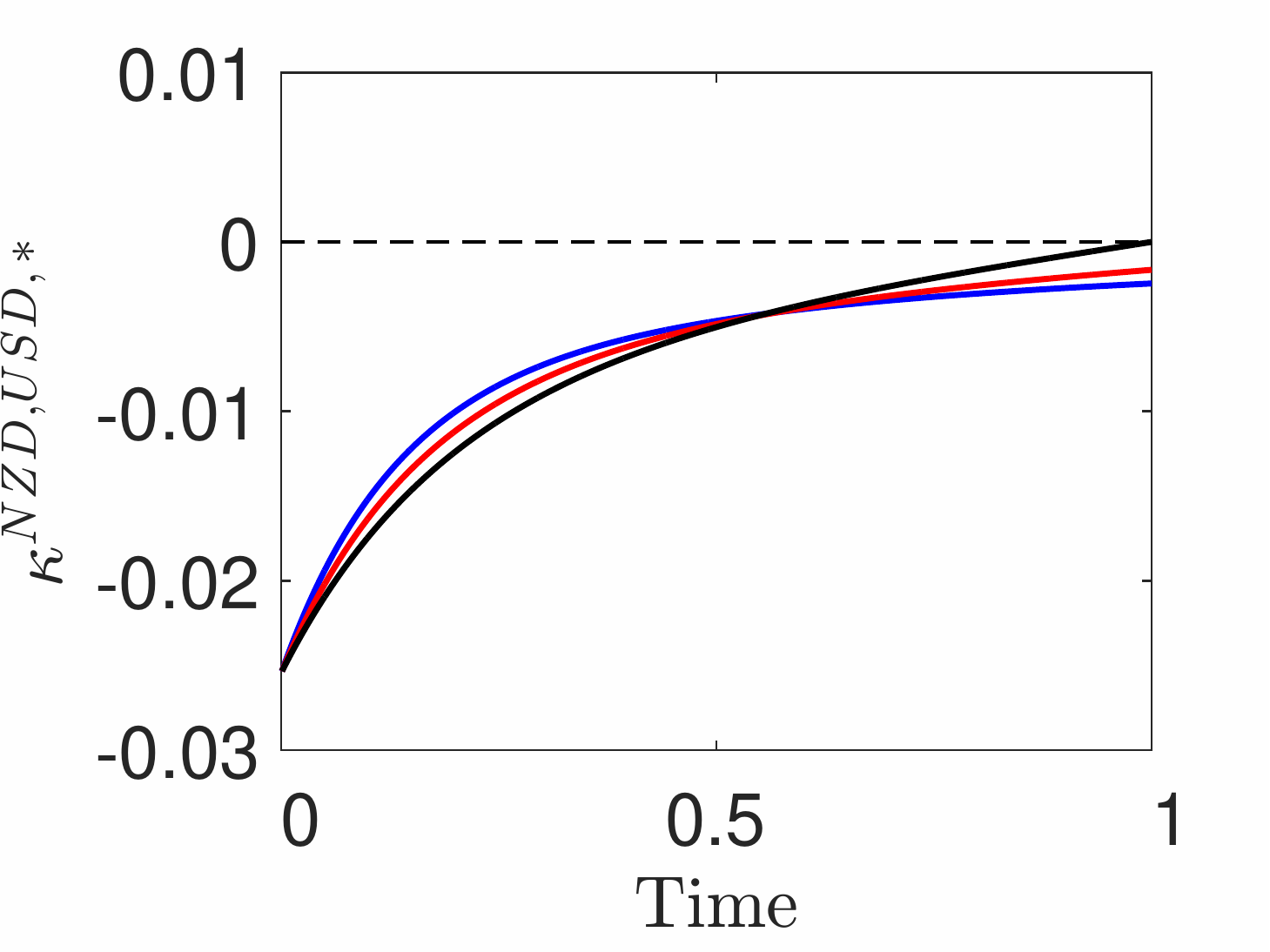}
		\includegraphics[scale=0.33]{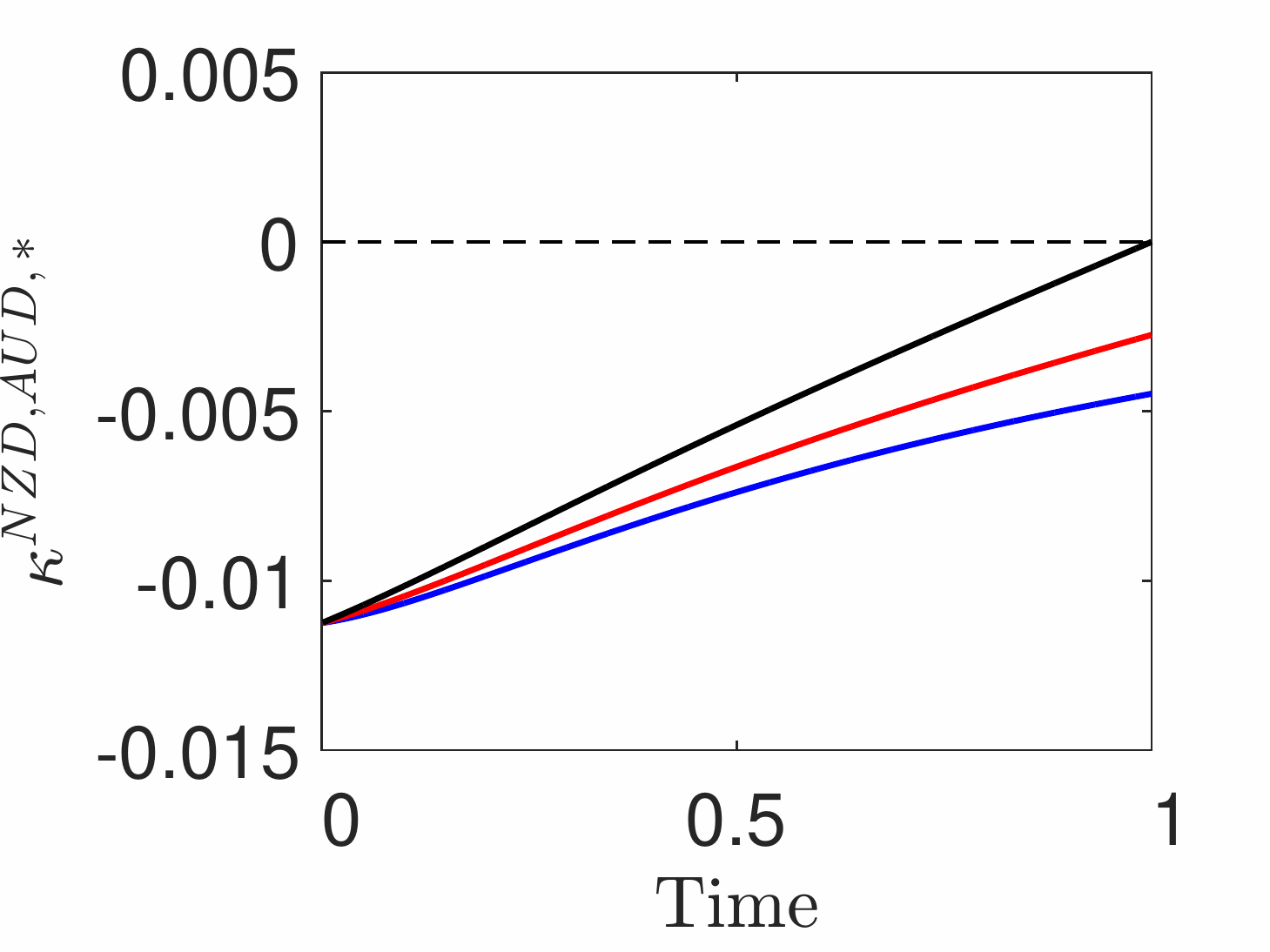}
	\end{center}
	\caption{Average drift adjustments over all simulated paths as result of ambiguity aversion, $\varphi=0.1$, and no trades with broker's clients, $\lambda^{k,\pm} = 0$, $k\in\{x,y,z\}$. The blue, red, black solid lines are for $\alpha_k = 1\times a_k$, $\alpha_k = 2.5\times a_k$, and $\alpha_k = 10^6\times a_k$, $k\in\{x,y,z\}$, respectively.}\label{fig: average drift adjestment different terminal liquidation penalty}
\end{figure}

\begin{figure}[t]
	\begin{center}
		\includegraphics[scale=0.33]{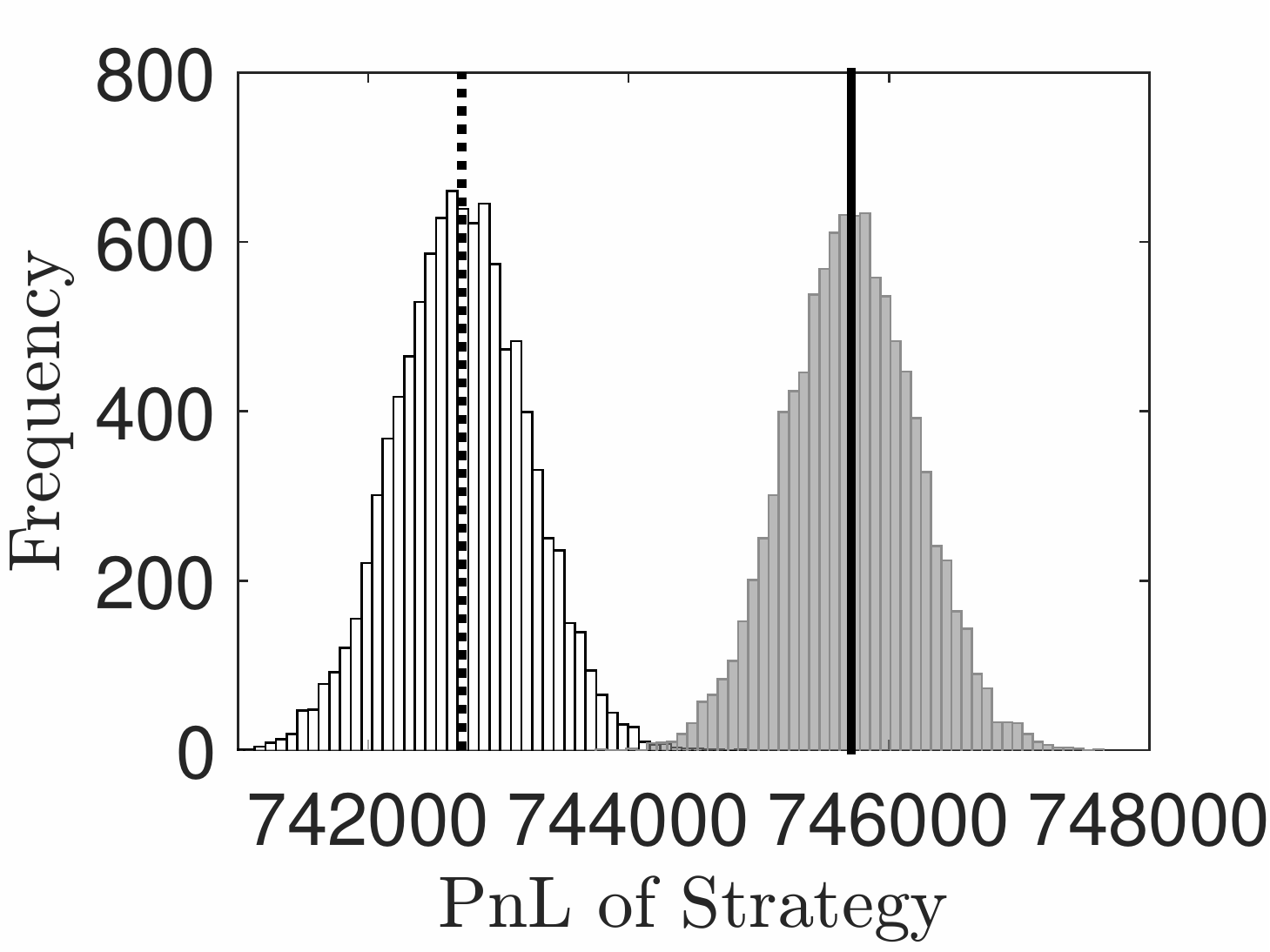}
	\end{center}
	\caption{P\&L of strategy trading in triplet with $\varphi = 0.1$, $\lambda^{k,\pm} = 0$, $k\in\{x,y,z\}$. The plot with black edge white fill is P\&L for $\alpha_k = 1\times a_k$; and the plot with gray edge and fill is P\&L for $\alpha_k = 10^6\times a_k$, where $k\in\{x,y,z\}$.}\label{fig:PnL different terminal liquidation penalty}
\end{figure}

\begin{figure}[]
	\begin{center}
		\includegraphics[scale=0.33]{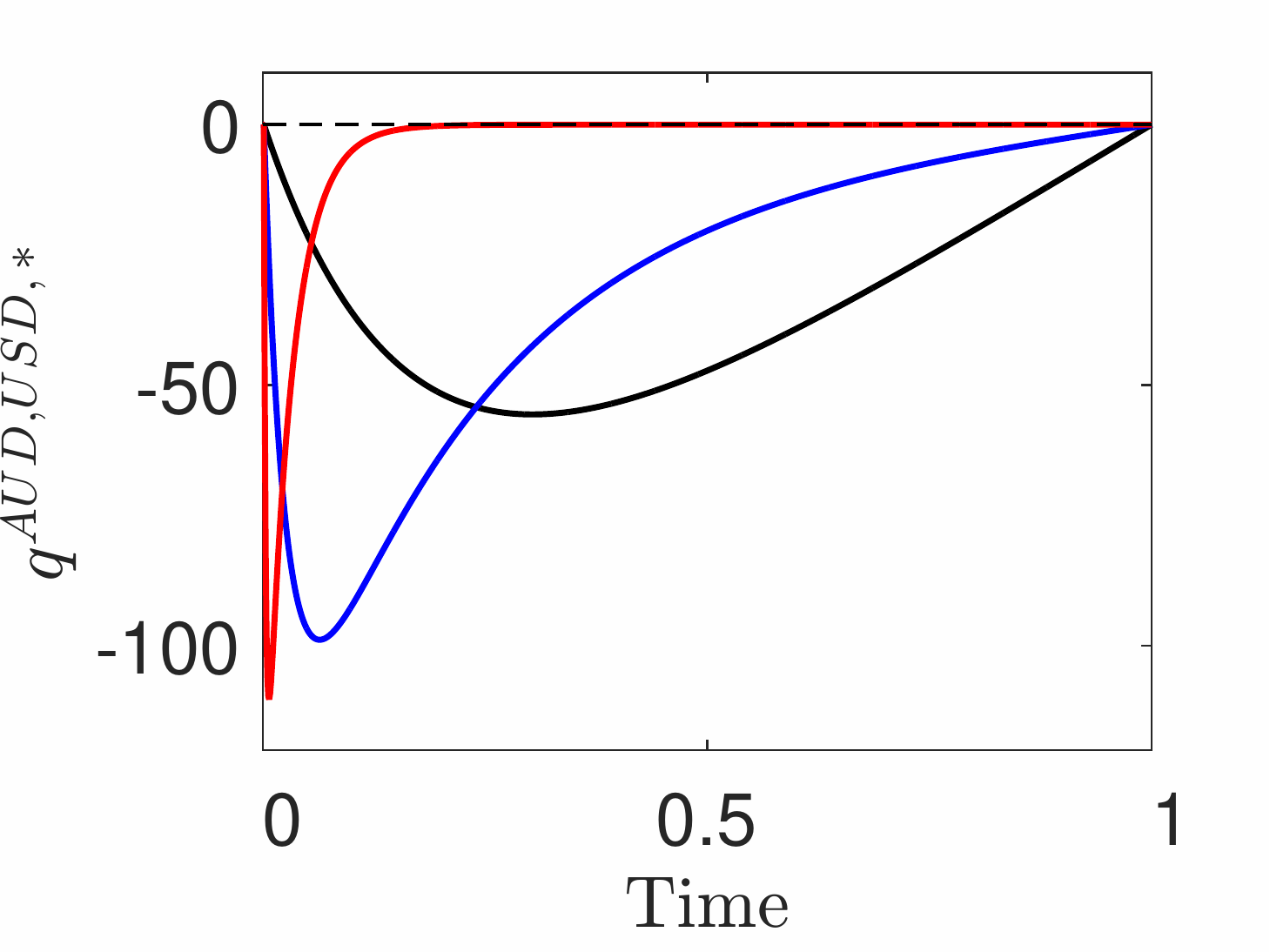}
		\includegraphics[scale=0.33]{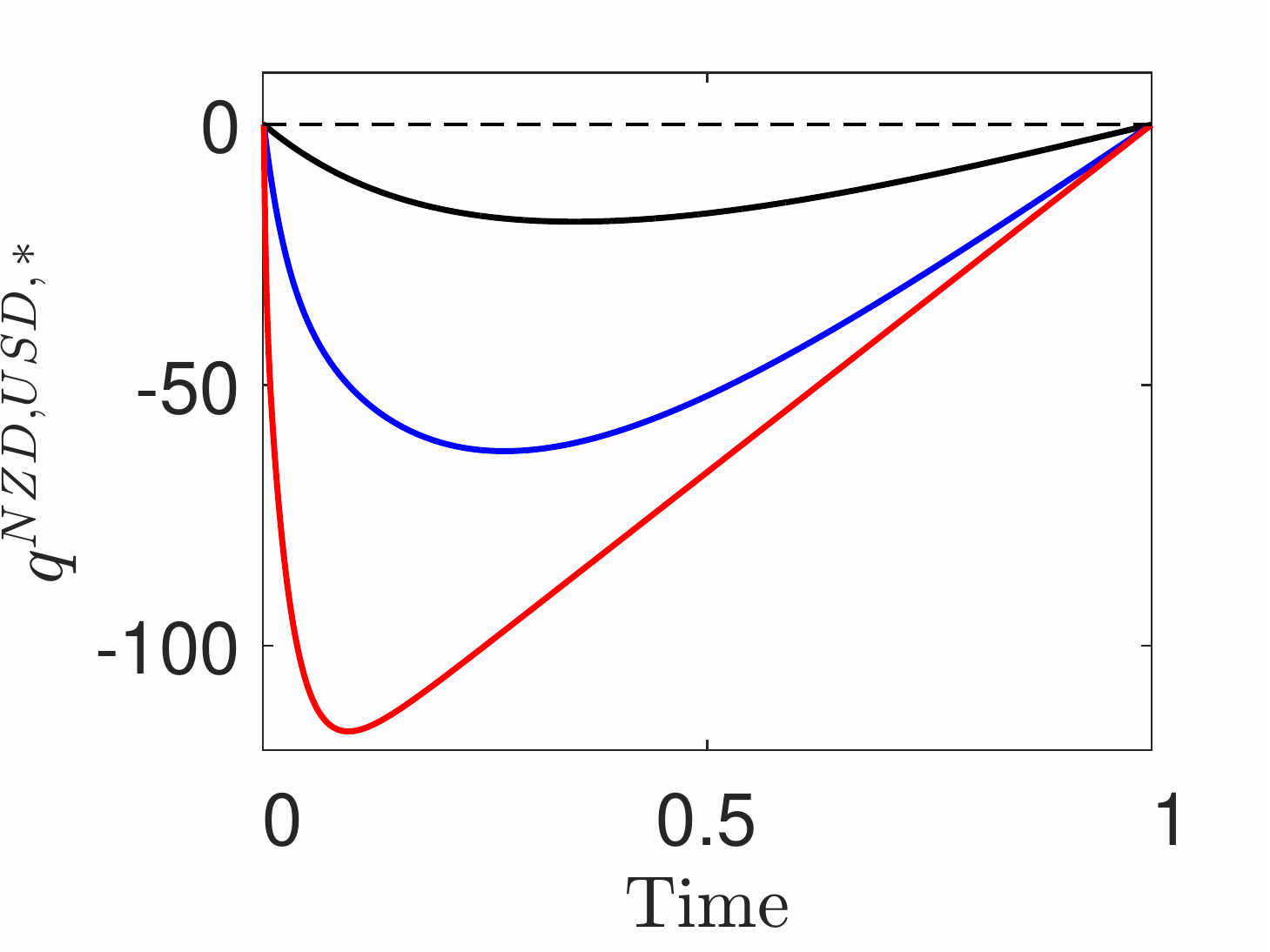}
		\includegraphics[scale=0.33]{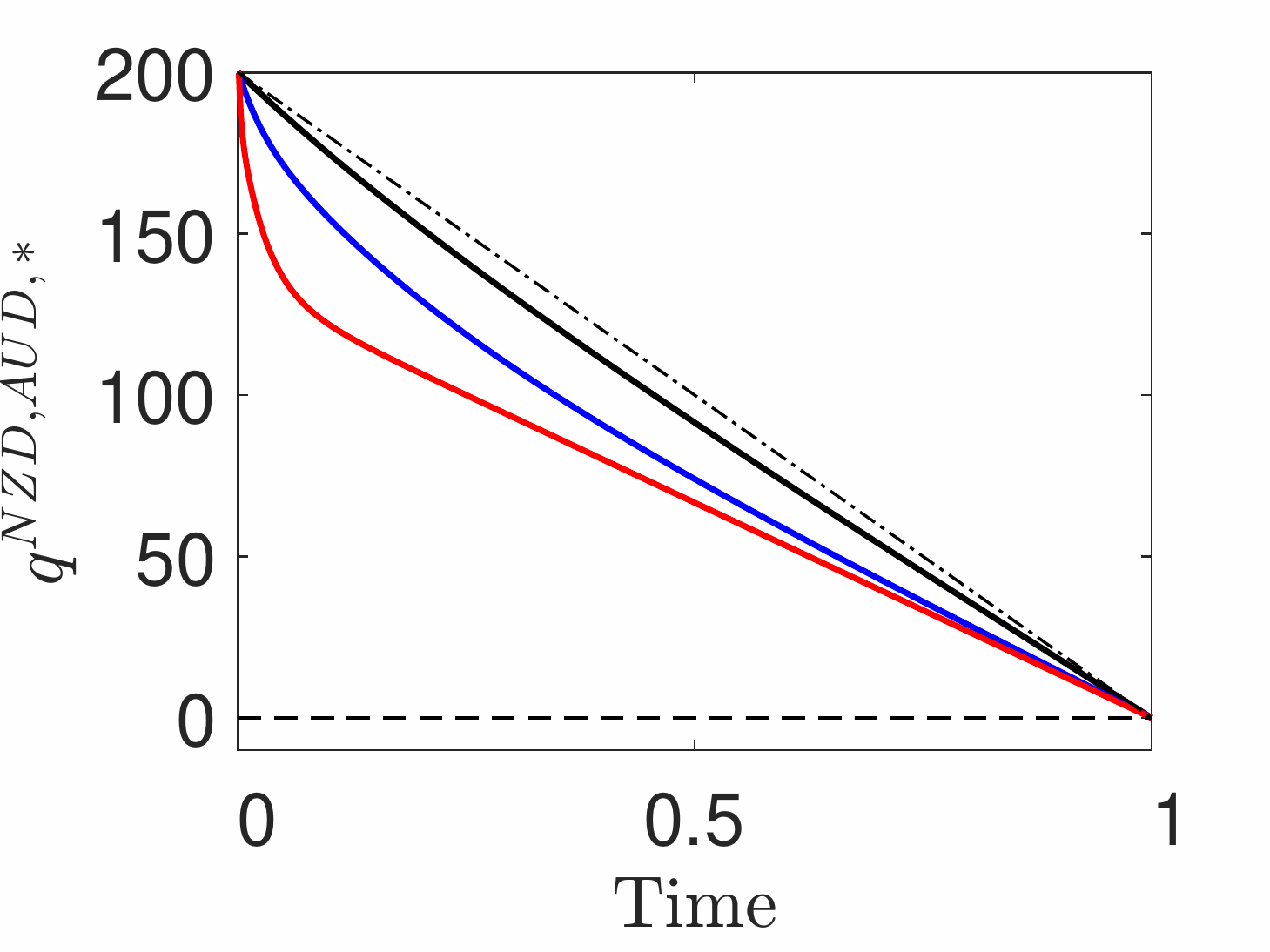}
	\end{center}
	\caption{Average inventory over all simulated paths for three currency pairs, $\alpha_k = 10^6\times a_k$, $\lambda^{k,\pm} = 0$, $k\in\{x,y,z\}$. The red, blue, black solid lines are for $\varphi = 10$, $\varphi = 1$, and $\varphi = 0.1$ respectively.}\label{fig: average inventory different ambiguity aversion}
\end{figure}

\begin{figure}[]
	\begin{center}
		\includegraphics[scale=0.33]{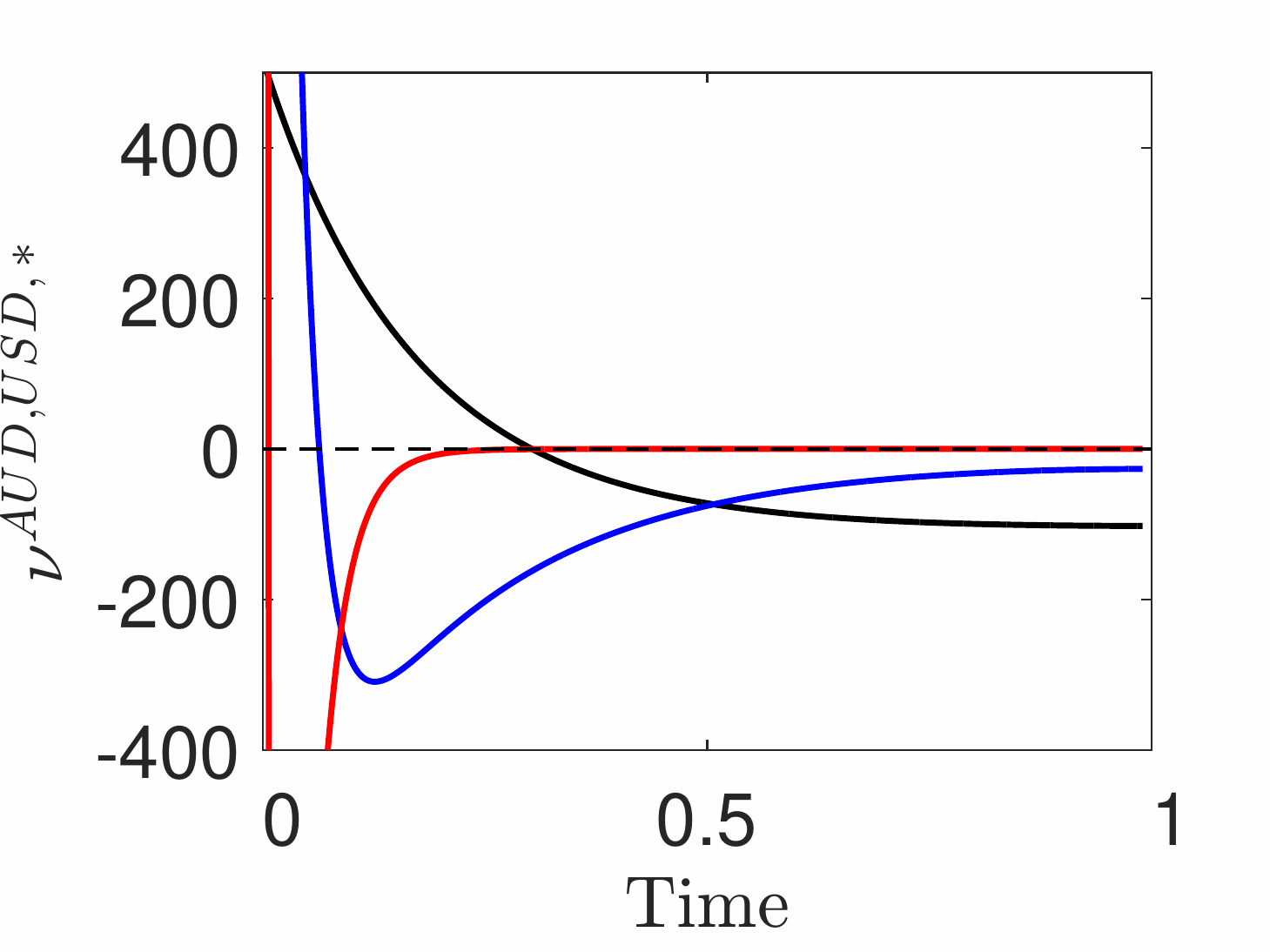}
		\includegraphics[scale=0.33]{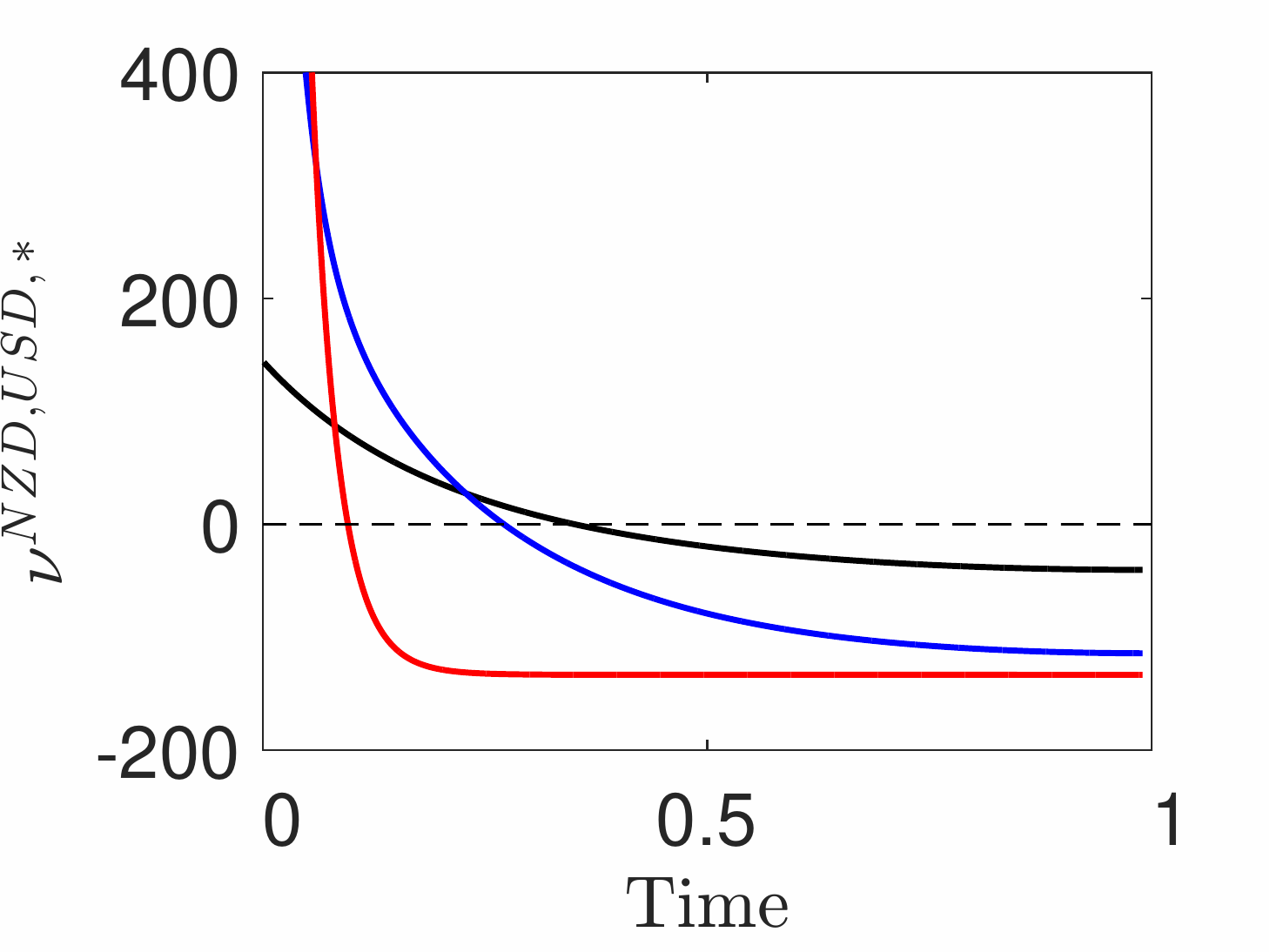}
		\includegraphics[scale=0.33]{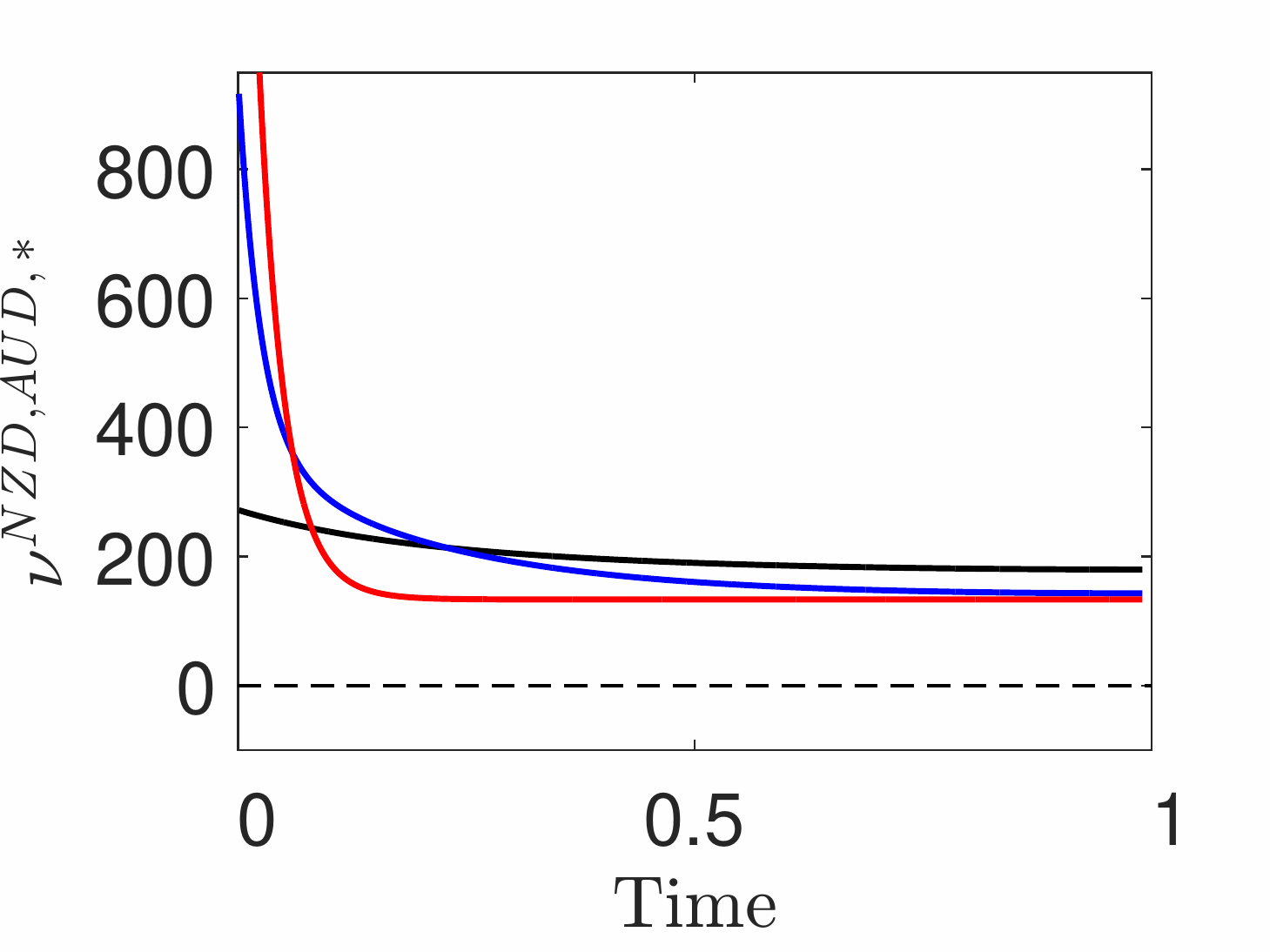}
	\end{center}
	\caption{Average trading speed over all simulated paths for each currency pair, $\alpha_k = 10^6\times a_k$, $\lambda^{k,\pm} = 0$, $k\in\{x,y,z\}$. The red, blue, black solid lines are for $\varphi = 10$, $\varphi = 1$, and $\varphi = 0.1$ respectively. The horizontal black dash lines indicate level of $0$ inventories. The black dotted dash line in the right panel represents liquidation with constant speed.}\label{fig: average speed of trading different ambiguity aversion}
\end{figure}

\begin{figure}[t]
	\begin{center}
		\includegraphics[scale=0.33]{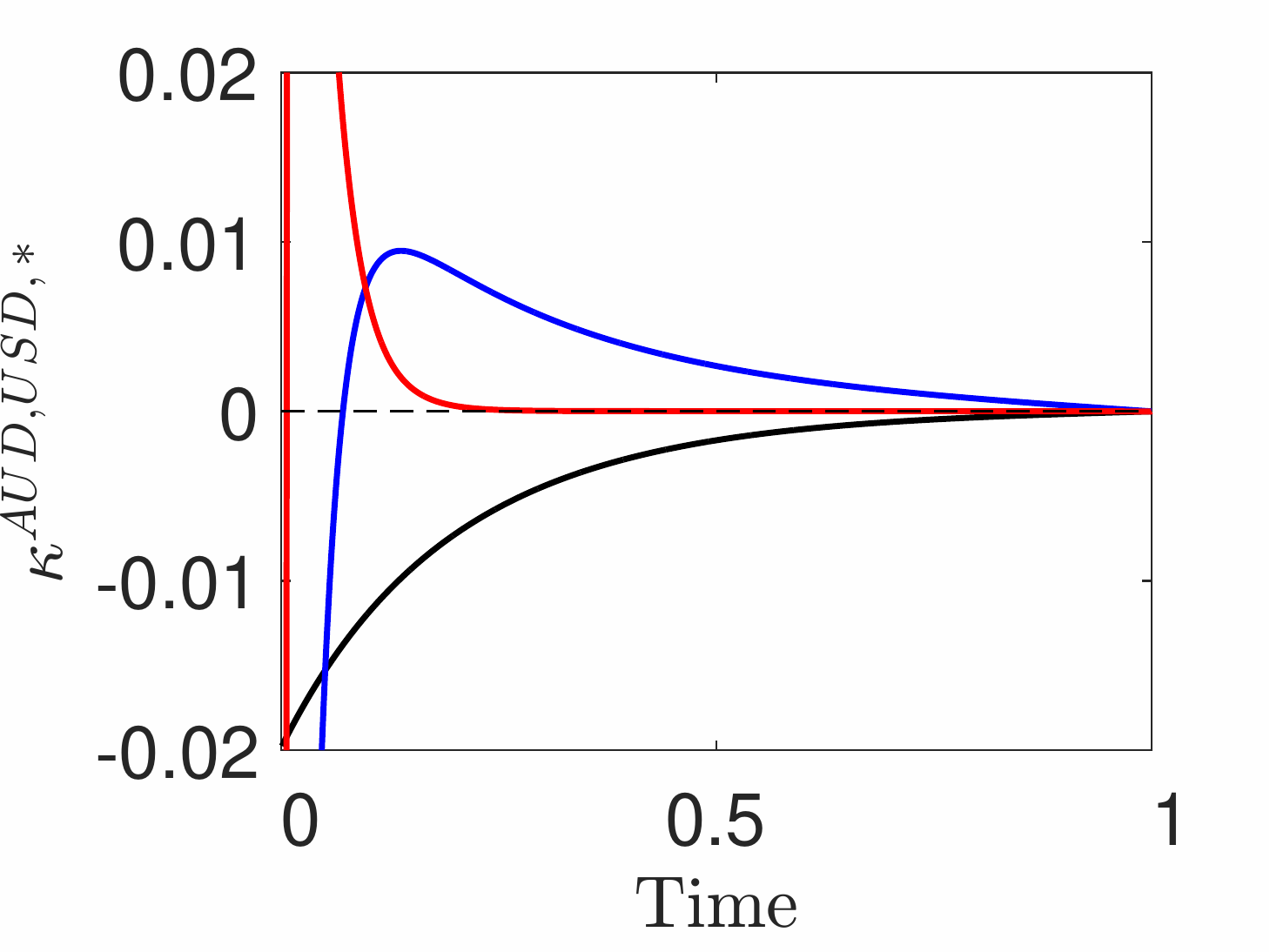}
		\includegraphics[scale=0.33]{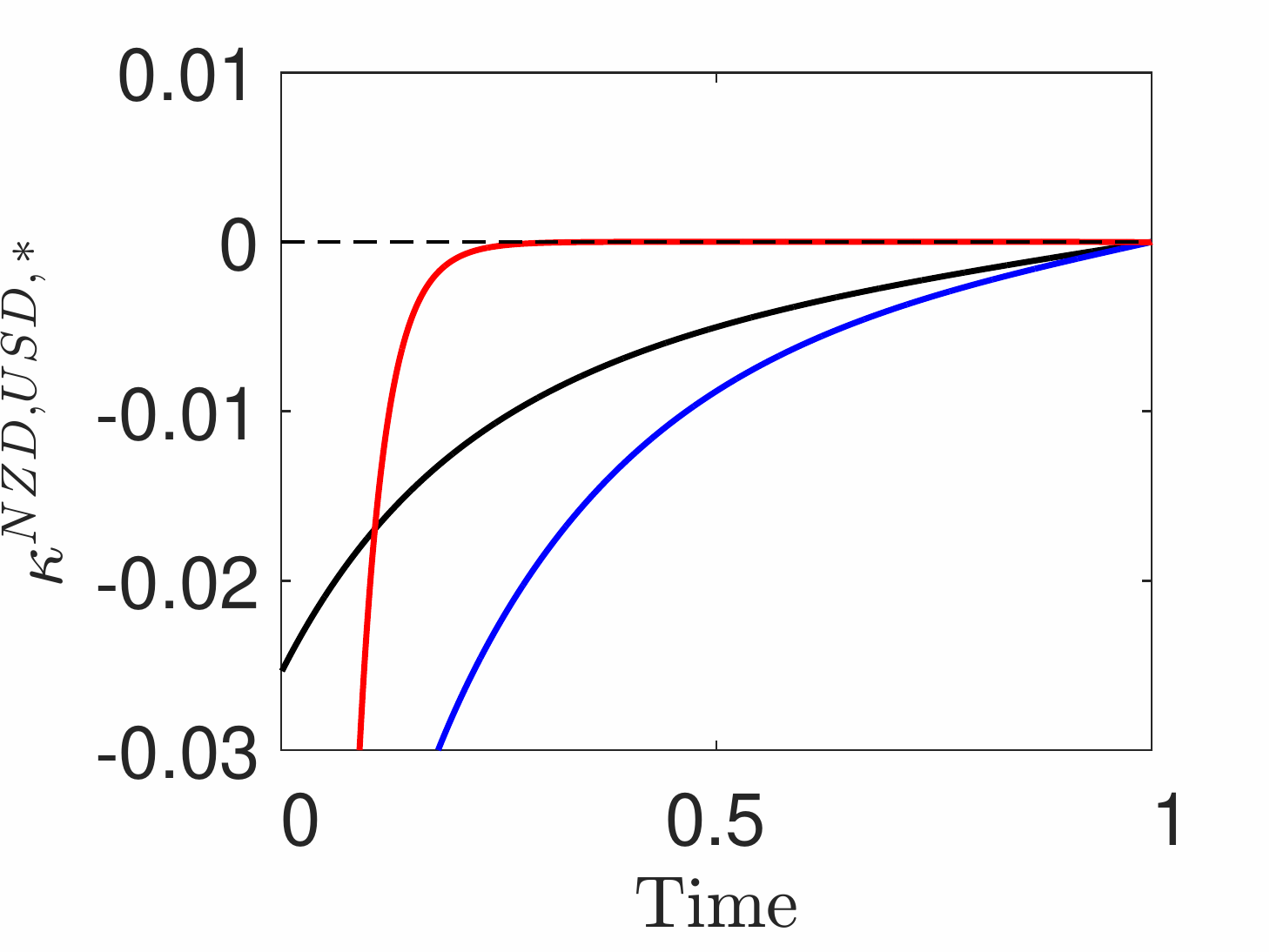}
		\includegraphics[scale=0.33]{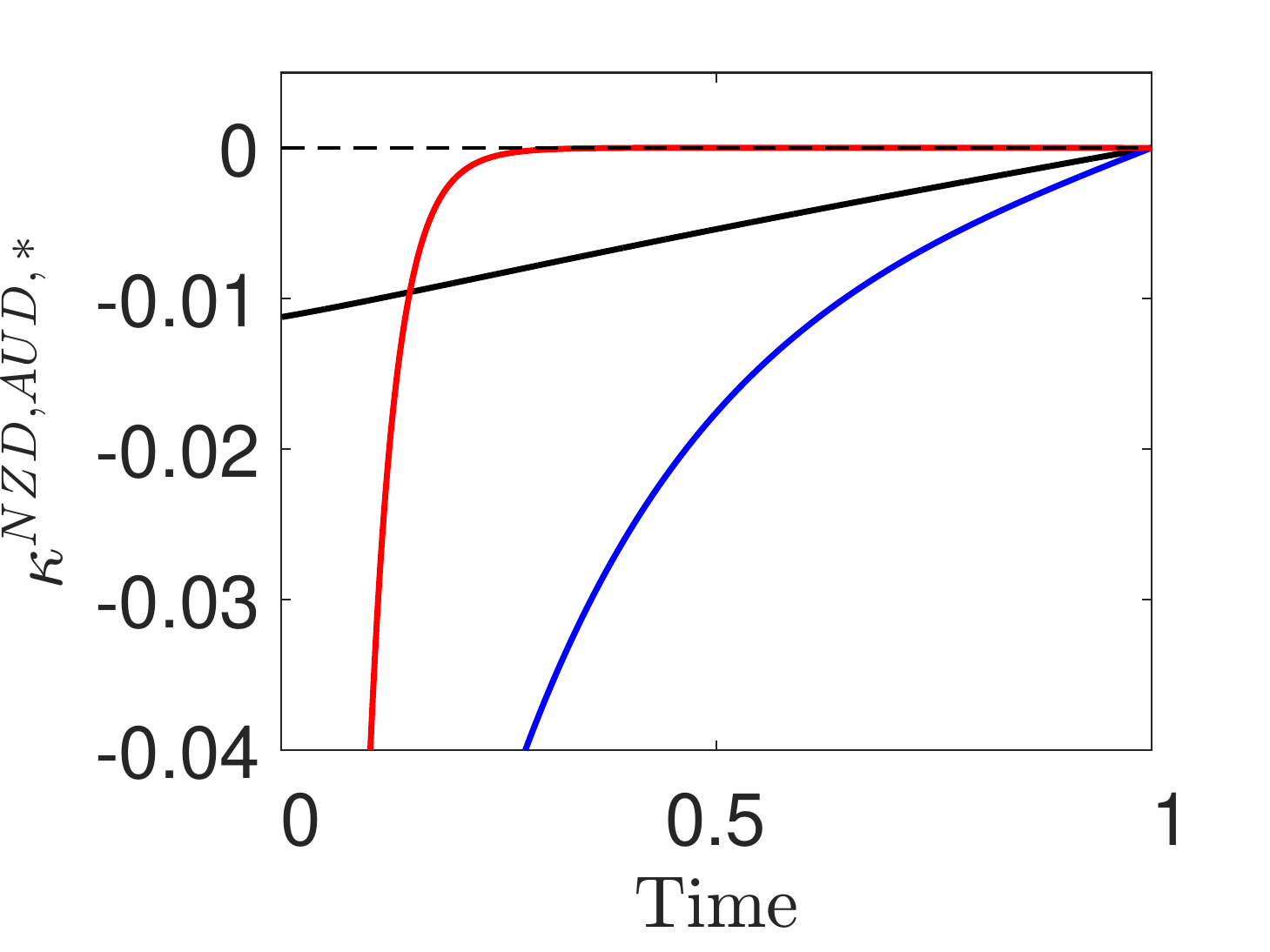}
	\end{center}
	\caption{Average drift adjustments over all simulated paths as result of ambiguity aversion for all three currency pairs, $\alpha_k = 10^6\times a_k$, $\lambda^{k,\pm} = 0$, $k\in\{x,y,z\}$. The red, blue, black solid lines are for $\varphi = 10$, $\varphi = 1$, and $\varphi = 0.1$ respectively.}\label{fig: average drift adjestment different ambiguity aversion}
\end{figure}

\section{Conclusions}\label{sec: conclusions}
We showed how a broker executes a large position in an illiquid foreign exchange currency pair. The broker's strategy considers taking positions on two other more liquid currency pairs. These additional pairs are chosen to form a triplet where, by no arbitrage, one of the pairs becomes redundant.

When the broker is ambiguity neutral, i.e., fully trusts the reference model,   the optimal strategy in each currency pair is independent of the inventory held in the other two pairs. Thus, the ambiguity neutral broker's strategy is to liquidate each pair independently from the other pairs. On the contrary, when the broker makes her model robust to misspecification, we show that  the optimal trading strategy in each currency pair is affected by the inventory holdings in the other two pairs.

We use simulations to compare the tradeoff between mean  and standard deviation of the Profit and Loss  (P\&L) of the optimal liquidation strategy. We show that when the broker makes her model robust to model misspecification  the mean P\&L of the strategy increases and the standard deviation of the P\&L decreases (for a range of ambiguity aversion parameter).

\begin{appendices}

\section{Proof of Proposition \ref{prop10}}
\label{appendix_I}

\begin{proof}
We show that $H$ can be solved analytically and it takes the form \eqref{eqn:ansatzH under P}. First, the supremum in \eqref{eqn: DPE under P} is obtained point-wise by
\begin{equation*}
  \nu^{k,*} = \frac{\whk\,\partial_{\mcX}H-\partial_{q^k}H}{2\,a_k\,\whk\,\partial_{\mcX}H}\,,\qquad \text{for } k \in \{x,y,z\} \text{ and } \whk = k\,1_{\{k \in \{x,y\}\}} + x\,1_{\{k=z\}}\,.
\end{equation*}
Substituting the form \eqref{eqn:ansatzH under P} into the above leads to
\begin{equation}\label{eqn:v under P 1}
\nu^{k,*} = \tfrac{1}{2\,a_k}(h_1^k(t) + 2\,h_2^k(t)\,q^k).
\end{equation}
Substituting the feedback control \eqref{eqn:v under P 1}  and \eqref{eqn:ansatzH under P} into \eqref{eqn: DPE under P} we obtain a new equation which is closed under the form of the solution \eqref{eqn:ansatzH under P}. Collecting the $x\,(q^x)^2$, $y\,(q^y)^2$ and $x\,(q^z)^2$ terms, we find the ODEs
\begin{equation}
-\partial_th_2^k+\tfrac{1}{a_k}(h_2^k)^2-\mu_{\whk}\,h_2^k = 0\,,\label{eqn:h2k-ode}
\end{equation}
with terminal conditions $h_2^k(T) = \alpha_k$. These ODEs have the explicit solutions
\begin{equation}
\label{eqn:mfh2k}
h_2^k(t) = \begin{cases}
\left( 1 - \upsilon_k\,e^{-\mu_{\whk}(T-t)}\right)^{-1}\,
a_k\,\mu_{\whk}\,, & \mu_{\whk}\ne 0\,,\\
\left( \alpha_k^{-1} + a_k^{-1}(T-t)\right)^{-1}, & \mu_{\whk}=0\,,
\end{cases}
\end{equation}
with constant $\upsilon_k:=(1-\tfrac{a_k}{\alpha_k}\,\mu_{\whk})$.

Next, collecting $x\,q^x$, $y\,q^y$ and $x\,q^z$ terms, and setting $\theta^{k,\pm} := \int_{\RRR^+}r\,F^{k,\pm}(dr)$, we have
\begin{equation}
	-\partial_th_1^k+ \mu_{\whk}(1-h_1^k)
+\tfrac{1}{a_k}\,h_2^k\,h_1^k-2\,\gamma^k_-\,h_2^k = 0\,,\label{eqn:h1k-ode}
\end{equation}
with terminal conditions $h_1^k(T)=0$, and $\gamma^k_-:=(\lambda^{k,+}\theta^{k,+}-\lambda^{k,-}\theta^{k,-})$. These equations admit the explicit solutions
\begin{equation}
\label{eqn:mfh1k}
h_1^k(t) =
\begin{cases}
\left( 1 - \upsilon_k \,e^{-\mu_{\whk}\,(T-t)}\right)^{-1}\,\left\{(1-2\,a_k\,\gamma^k_{-})\,
\mu_{\whk}\,\left(t-T\right)+ \upsilon_k\,(1-e^{-\mu_{\whk}\,(T-t)})\right\}\,,
& \mu_{\whk}\ne 0, \\
2\,\gamma_-^k\,\left( \alpha_k^{-1} + a_k^{-1}\,(T-t)\right)^{-1}\,(T-t)\,,
& \mu_{\whk} = 0\,.
\end{cases}
\end{equation}

Finally, collecting $x$ and $y$ terms, and let $\eta^{k,\pm} := \int_{\RRR^+}r^2\,F^{k,\pm}(dr)$, we have
\begin{subequations}
	\begin{align}
\begin{split}
	0 = -(\partial_t+\mu_x)h_0^x&+\psi^x+\psi^z
-\gamma_-^x\,h_1^x(t)-\gamma_-^z\,h_1^z(t)
-\delta^x\,h_2^x(t)-\delta^z\,h_2^z(t)\\
&+\tfrac{1}{4a_x}\,(h_1^x)^2+\tfrac{1}{4a_z}\,(h_1^z)^2\,,
\end{split}
\label{eqn:eqh1}
\\
\begin{split}
	0 = -(\partial_t+\mu_y)h_0^y&+\psi^y-\gamma^y_-\,h_1^y(t)-\delta^y\,h_2^y(t)+\tfrac{1}{4a_y}\,(h_1^y)^2\,,
\end{split}\label{eqn:eqh2}
\end{align}
\end{subequations}
with terminal conditions $h_0^x(T)=h_0^y(T)=0$, and constants $\delta^k:=(\lambda^{k,+}\eta^{k,+}+\lambda^{k,-}\eta^{k,-})$
and
$\psi^k:=(c_k^{-}\lambda^{k,-}\,\eta^{k,-}+c_k^{+}\lambda^{k,+}\,\eta^{k,+})$. These ODEs can be solved by quadratures and are given by
\begin{subequations}
	\begin{align}
	h_0^x(t) =&\, \sum_{\ell\in\{x,z\}}\left\{\mfh_1^\ell(t)+\mfh_2^\ell(t) + \mfh_3^\ell(t) +\mfh_4^\ell(t)\right\}\,,
\label{eqn:h1}
\\
	h_0^y(t) =&\,\mfh_1^y(t)+\mfh_2^y(t)+\mfh_3^y(t)+\mfh_4^y(t)\,,
\label{eqn:h2}
\end{align}
\label{eqn:mfh0k}
\end{subequations}
where, for all $\mu_{\whk}$
\begin{equation}
\begin{split}
\mfh_1^k(t) &= -\int_t^T\,e^{\mu_{\whk}\,(u-t)}\,\psi^k\,du\,,
\quad
\mfh_2^k(t) = \int_t^T\,e^{\mu_{\whk}\,(u-t)}\,\gamma_-^k\,h_1^k(u)\,du\,,
\\
\mfh_3^k(t) &= \int_t^T\,e^{\mu_{\whk}\,(u-t)}\,\delta^k\,h_2^k(u)\,du\,,
\quad
\mfh_4^k(t) = -\int_t^T\,e^{\mu_{\whk}\,(u-t)}\,\tfrac{1}{4\,a_k}\,\left(h_1^k(u)\right)^2\,du\,.
\end{split}
\label{eqn:mfh0k component mu neq 0}
\end{equation}

\qed
\end{proof}

\section{Proof of Theorem \ref{thm:VerificationP}} \label{sec:Proof-Verification-P}

Since \eqref{eqn:ansatzH under P} is classical, we only need to check that the controls \eqref{eqn:opt-controls-P} are admissible. To check that $\bnu^*$ is admissible,  it suffices to show that $\EE^{\PP}[\int_0^t\,(Q_u^{k,\bnu^*})^2\,du]<+\infty$ because $\nu^{k,*}$ is affine in $Q^{k,\bnu^*}$. It further suffices to show $\EE^{\PP}[(Q_t^{k,\bnu^*})^2]<+\infty$ on $t \in [0,T]$.

From \eqref{eqn:Q dynamics}, we have
\begin{equation}\label{eqn:optimal Q_t SDE}
Q^{k,\bnu^*}_t = Q^{k,\bnu^*}_0-\int_0^t \tfrac{1}{2\,a_k} \left( h_1^k(u) + 2\,h_2^k(u)\,Q^{\bnu^*,k}_u\right)\,du-O^{k,-}_{t}+O^{k,+}_{t}\,.
\end{equation}
Use the integrating factor $\pi_t^k:=e^{\frac{1}{a_k}\int_0^t h_2^k(s)ds}$ to find
\begin{equation}\label{eqn:optimal Q_t Solution}
Q^{k,\bnu^*}_t = Q^{k,\bnu^*}_0 \,\frac{\pi^k_0}{\pi^k_t} - \tfrac{1}{2a_k}\int_0^t \frac{\pi^k_u}{\pi^k_t}\,h_1^k(u)\,du + \int_0^t \frac{\pi^k_u}{\pi^k_t}\, (dO^{k,+}_{u}-dO^{k,-}_{u})\,.
\end{equation}

From \eqref{eqn:mfh2k} we see that $h_2^k(t)$ does not change sign on the interval $t\in[0,T]$. Next, if $0<\mu_{\whk}<\alpha_k/a_k$, then $\exists\; C_0>C_1>0$, s.t. $h_2^k(t) = \left(C_0-C_1\,e^{-\mu_{\whk}\,(T-t)}\right)^{-1}$. Hence, by explicit integration, $\pi^k_t = e^{C_2 t}\,\left(C_3 - C_4\,e^{-\mu_{\whk}\,(T-t)}\right)^{C_5}\,,$
where $C_2 = 1/a_k\,C_0 > 0$, $C_3 =  C_0 / (C_0 - C_1e^{-\mu_{\whk}\,T}) > C_4 =  C_1/ (C_0 - C_1e^{-\mu_{\whk}\,T}) > 0$, $C_5 = -1/a_k\,\mu_{\whk}\,C_0 < 0$.

Next, by Ito's isometry for jump processes, we have
\begin{align}\label{eqn:Expectation of pi_u}
\EE^{\PP}\left[ \left( \int_0^t \frac{\pi^k_u}{\pi^k_t}\,dO_u^{k,\pm} \right)^2\right]
\le  \int_0^t \left(\tfrac{\pi^k_u}{\pi^k_t}\right)^2 \lambda^{k,\pm}\theta^{k,\pm}\,du<\infty\,, \qquad \forall 0\le t\le T\,,
\end{align}
where the last inequality follows because $C_3>C_4$ and therefore $\pi_t^k$ remains bounded on the domain $t\in[0,T]$. Similar arguments hold when $- \alpha_k /a_k<\mu_{\whk}<0$.

When $\mu_{\whk}=0$, from \eqref{eqn:mfh2k}, we have $h_2^k(t) = \left( \alpha_k^{-1} + a_k^{-1}(T-t)\right)^{-1}$. Hence, by explicit integration, $\pi^k_t = \left( \alpha_k^{-1} + a_k^{-1}(T-t)\right)^{-1}\,\left( \alpha_k^{-1} + a_k^{-1}\,T\right)$. We have the analogous inequality \eqref{eqn:Expectation of pi_u}. Hence the result follows.
\qed

\section{Proof of Proposition \ref{prop11}}
\label{appendix_L}

We proof the case $\mu_{\whk}\neq 0$ ($\mu_{\whk} = 0$ is similar). From \eqref{eqn:mfh2k} and \eqref{eqn:mfh1k} we have
\begin{equation*}
\label{eqn:mfh2k alpha infty}
\lim_{\alpha_k\to +\infty}h_2^k(t) =
\left( 1 - e^{-\mu_{\whk}(T-t)}\right)^{-1}\,a_k\,\mu_{\whk}\,,
\end{equation*}
and
\begin{equation*}
\label{eqn:mfh1k alpha infty}
\lim_{\alpha_k\to +\infty}h_1^k(t) = \left( 1 - e^{-\mu_{\whk}\,(T-t)}\right)^{-1}\,(1-2\,a_k\,\gamma^k_{-})
\,\mu_{\whk}\,\left(t-T\right) + 1\,.
\end{equation*}
Then
\begin{equation*}
\lim_{t\to T}\left\{\left( 1 - e^{-\mu_{\whk}(T-t)}\right)^{-1}\,a_k\,\mu_{\whk}\right\}\,(T-t) = a_k\,,
\end{equation*}
and
\begin{equation*}
\lim_{t\to T}\left( 1 - e^{-\mu_{\whk}\,(T-t)}\right)^{-1}\,(1-2\,a_k\,\gamma^k_{-})
\mu_{\whk}\left(t-T\right) + 1 = 2\,a_k\,\gamma^k_{-}\,.
\end{equation*}

From \eqref{eqn:v under P 1}, as $\alpha_k\to +\infty$ and $T-t \ll 1$, we obtain
\begin{equation}
	\lim_{\alpha_k\to+\infty} \nu_t^{k,*} = (T-t)^{-1}\,Q^{k,\bnu^*}_t + \gamma_-^k +o(T-t)\,,\label{eqn:nuk t to T lambda neq 0}
\end{equation}
recall $\gamma^k_- =(\lambda^{k,+}\theta^{k,+}-\lambda^{k,-}\theta^{k,-})$  and $\theta^{k,\pm} := \int_{\RRR^+}r\,F^{k,\pm}(dr)$.

The SDE of the inventory process is
\begin{equation}
dQ_u^{k,\bnu^*} = -\left[(T-u)^{-1}\,Q^{k,\bnu^*}_u + \gamma_-^k\right]\,du + d(O^{k,+}_u-O^{k,-}_u)+o(T-t)\,,\qquad u\in[t,T]\,,
\end{equation}
which has solution
\begin{equation}\label{eqn: Q T-t << 1}
Q_s^{k,\bnu^*}=\frac{T-s}{T-t}\,Q_t^{k,\bnu^*}+\int_t^s \frac{T-s}{T-u}\,\left[d(O^{k,+}_u-O^{k,-}_u)-\gamma_-^k\,du\right]+o(T-t)\,, \qquad s\in[t,T]\,.
\end{equation}
As $s\to T$ the controlled inventory $Q_s^{k,\bnu^*}$ vanishes. In other words, the strategy guarantees that the broker completely liquidates her position by maturity.

If we substitute $Q_s^{k,\bnu^*}$ given in \eqref{eqn: Q T-t << 1} into \eqref{eqn:nuk t to T lambda neq 0}, the speed of trading becomes
\begin{equation}\label{eqn: nu of t to T alpha to infinity}
\nu_s^{k,*} = \frac{Q_t^{k,\bnu^*}}{T-t}+\int_t^s \frac{1}{T-u}\,\left[d(O^{k,+}_u-O^{k,-}_u)-\gamma_-^k\,du\right]
+\frac{\gamma_-^k}{T-s}
+o(T-t)\,, \qquad s\in[t,T]\,,
\end{equation}
and if we assume that the expected order flow from  clients $\gamma_-^k\ne0$, it is easy to see that the broker's speed of trading is infinitely fast close to maturity. Let $\gamma_-^k = 0$. Then,   the speed of trading becomes
\begin{equation*}
\begin{split}
\nu_s^{k,*} = &\,\frac{Q_t^{k,\bnu^*}}{T-t}+\int_t^s \frac{1}{T-u}\,(dO^{k,+}_u-\lambda^{k,+}\theta^{k,+}du)-\int_t^s \frac{1}{T-u}\,(O^{k,-}_u-\lambda^{k,-}\theta^{k,-}du)
+o(T-t)\,. \qquad
\end{split}
\end{equation*}
Note that
\begin{equation*}
\begin{split}
&\EE\left[\left(\int_t^s\frac{1}{T-u}\,(dO^{k,\pm}_u
-\lambda^{k,\pm}\theta^{k,\pm}du)\right)^2\right] = \lambda^{k,\pm}\,\eta^{k,\pm}\left(\frac{1}{T-s}-\frac{1}{T-t}\right)\,,
\end{split}
\end{equation*}
and recall $\eta^{k,\pm} := \int_{\RRR^+}r^2\,F^{k,\pm}(dr)>0$.

If $\lambda^{k,\pm} = 0$, we therefore have $\EE\left[\,\int_t^T (\nu^{k,*}_s)^2\,ds\,\right]< \infty$ , and hence, in this case, the strategy is admissible, and from \eqref{eqn: Q T-t << 1} we see that the broker completely liquidates her position by $T$.

If $\lambda^{k,\pm} > 0$, we have $\EE\left[\,\int_t^T (\nu^{k,*}_s)^2\,ds\,\right] = \infty$, which is not an admissible strategy.
\qed

\section{Proof of Proposition \ref{prop6}}
\label{appendix_G}

We prove the case $\mu_{\whk} = 0$, the proof is similar when $\mu_{\whk} \neq 0$. The control $\tilde \bnu^*$, given in \eqref{eqn:vkaprox} and repeated here for convenience is
\begin{equation*}
\begin{split}
\tilde{\nu}^{k,*} =&\, \frac{\whk-\partial_{q^k}H_0}{2\,a_k\,\whk} - \varphi\,\frac{\partial_{q^k}H_1}{2\,a_k\,\whk}\,,\quad\,k\in\{x,y,z\}\textrm{ and }\,\whk = k\,1_{\{k\in\{x,y\}\}} + x\,1_{\{k=z\}}\,.
\end{split}
\end{equation*}
When $\lambda^{k,\pm}=0$, the function $H_0$ in \eqref{eqn:ansatzH0} simplifies to
\begin{equation*}
\begin{split}
	H_0(t,x,y,\bbq) = &\,x\,q^x+y\,q^y+x\,q^z-h_{02}^x(t)\,x\,(q^x)^2-h_{02}^y(t)\,y\,(q^y)^2-h_{02}^z(t)\,x\,(q^z)^2\,.\label{eqn:ansatzH0simplified}
\end{split}
\end{equation*}
In the limit $\alpha_k\to\infty$ the first term in $\tilde{\nu}^{k,*}$, see \eqref{eqn:mfh2k}, becomes
\begin{equation*}
	\lim_{\alpha_k\to +\infty}\frac{\whk-\partial_{q^k}H_0}{2\,a_k\,\whk} = \frac{q^k}{T-t}\,.
\end{equation*}
By \eqref{eqn: solution H11}, \eqref{eqn: solution H12}, \eqref{eqn: solution H13}, we have
\begin{equation*}
\begin{split}
H_{11}(t,q^x,q^z)	= \int_t^T&\,-e^{(2\,\mu_x+\sigma_x^2)\,(u-t)}\\
&\times\, \frac{1}{2}\,\sigma_x^2\,\EE_{t,q^x,q^z}\left[\,\bigg(\mfQ^x_u + \mfQ^z_u - h_{02}^x(u)\,(\mfQ^x_u)^2 - h_{02}^z(u)\,(\mfQ^z_u)^2\bigg)^2\right]\,du\,,\\
H_{12}(t,q^y) = \int_t^T&\,-e^{(2\,\mu_y+\sigma_y^2)\,(u-t)}\,\frac{1}{2}\,\sigma_y^2\,\EE_{t,q^y}\left[\,\bigg(\mfQ^y_u - h_{02}^y(u)\,(\mfQ^y_u)^2\bigg)^2\right]\,du\,,\\
H_{13}(t,q^x,q^y,q^z) = \int_t^T&\,-e^{(\mu_x+\mu_y+\rho\,\sigma_x\,\sigma_y)\,(u-t)}\\
&\times\, \,\rho\,\sigma_x\,\sigma_y\,\EE_{t,q^x,q^y,q^z}\bigg[\,\bigg(\mfQ^x_u + \mfQ^z_u - h_{02}^x(u)\,(\mfQ^x_u)^2 - h_{02}^z(u)\,(\mfQ^z_u)^2\bigg)\,\\
&\times\,\bigg(\mfQ^y_u - h_{02}^y(u)\,(\mfQ^y_u)^2\bigg)\bigg]\,du\,,
\end{split}
\end{equation*}
and by \eqref{eqn:auxQ dynamics}, \eqref{eqn:betak}, the auxiliary processes $\mfQ^k_u$, $k\in\{x,y,z\}$ are deterministic functions of $u$ and satisfy
\begin{equation*}
d\mfQ^k_u = -\frac{h_{02}^k(u)}{a_k}\,\mfQ^k_u\,du\,,\quad\,u\in[t,T]\,,
\end{equation*}
with initial condition $\mfQ^k_t = q^k$. The solution is
\begin{equation*}
	\mfQ^k_u = \frac{a_k+\alpha_k\,(T-u)}{a_k+\alpha_k\,(T-t)}\,q^k\,.
\end{equation*}
After evaluating the integrals, taking derivatives with respect to $q^k$, taking the limit $\alpha_k\to \infty$, and  in the limit $t\to T$,  the second term in $\tilde{\nu}^{k,*}$ becomes
\begin{equation*}
\lim_{\substack{\alpha_k\to +\infty\,,\\\,k\in\{x,y,z\}}}- \varphi\,\frac{\partial_{q^k}H_1}{2\,a_k\,\whk} = \varphi\,\sigma_{\whk}\,\frac{q^k}{T-t}\,C_{\whk}(x,y,q^x,q^y,q^z) + o(T-t)\,,
\end{equation*}
where
\begin{equation*}
\begin{split}
C_{x}(x,y,q^x,q^y,q^z) =&\, \sigma_x\,\left[a_x\,x\,(q^x)^2 + a_z\,x\,\,(q^z)^2\right] + \rho\,\sigma_y\,a_y\,y\,(q^y)^2\,,\\
C_{y}(x,y,q^x,q^y,q^z) =&\, \rho\,\sigma_x\,\left[a_x\,x\,(q^x)^2 + a_z\,x\,\,(q^z)^2\right] + \sigma_y\,a_y\,y\,(q^y)^2\,.
\end{split}	
\end{equation*}
\qed

\end{appendices}

\section*{References}
\bibliographystyle{chicago}
\bibliography{Triplet}

\end{document}